\author{Nicola Assolini}
\email{nicola.assolini@univr.it}
\author{Luca Marzari}
\email{luca.marzari@univr.it}
\author{Isabella Mastroeni}
\email{isabella.mastroeni@univr.it}
\author{Alessandra Di Pierro}
\email{alessandra.dipierro@univr.it}
\affiliation{%
  \institution{Department of Computer Science, University of Verona}
  \country{Italy}
}
\title{Formal Verification of Variational Quantum Circuits}
\date{March 2025}
\begin{document}
\newcommand{\TODO}[1][]{{\color{red}[TODO:#1]}}
\newcommand{\ISA}[1][]{{\color{blue}[ISA:#1]}}

\newcommand{\ii}{\iota}
\newcommand*{\arrw}[1]{\mathbf{#1}}
\newcommand{\sset}[2]{\left\{~#1  \left |
                    \begin{array}{l}#2\end{array} \right.\!\right\}}

\newcommand{\tuple}[1]{\langle#1\rangle}
\newcommand{\rtuple}[1]{<\!#1\!>} 
\newcommand{\btuple}[1]{\{#1\}}
\newcommand{\interv}[2]{[#1,#2]}
\newcommand{\opints}[1]{{#1}^{\mbox{\tiny $\sharp$}}}
\newcommand{\opint}[1]{\overline{#1}}
\newcommand{\tinyd}[1]{\mbox{\tiny $#1$}}

\def\bq{{\bf q}}

\newcommand{\bA}{\mathbb{A}}
\newcommand{\bB}{\mathbb{B}}
\newcommand{\bC}{\mathbb{C}}
\newcommand{\bD}{\mathbb{D}}
\newcommand{\bE}{\mathbb{E}}
\newcommand{\bF}{\mathbb{F}}
\newcommand{\bG}{\mathbb{G}}
\newcommand{\bH}{\mathbb{H}}
\newcommand{\bI}{\mathbb{I}}
\newcommand{\bJ}{\mathbb{J}}
\newcommand{\bK}{\mathbb{K}}
\newcommand{\bL}{\mathbb{L}}
\newcommand{\bM}{\mathbb{M}}
\newcommand{\bN}{\mathbb{N}}
\newcommand{\bO}{\mathbb{O}}
\newcommand{\bP}{\mathbb{P}}
\newcommand{\bQ}{\mathbb{Q}}
\newcommand{\bR}{\mathbb{R}}
\newcommand{\bS}{\mathbb{S}}
\newcommand{\bT}{\mathbb{T}}
\newcommand{\bU}{\mathbb{U}}
\newcommand{\bV}{\mathbb{V}}
\newcommand{\bW}{\mathbb{W}}
\newcommand{\bX}{\mathbb{X}}
\newcommand{\bY}{\mathbb{Y}}
\newcommand{\bZ}{\mathbb{Z}}
\newcommand{\bRI}{\mathbb{RI}}
\newcommand{\bCI}{\mathbb{CI}}

\def\defi{\stackrel{\mbox{\tiny def}}{=}}
\def\ok#1{\mbox{\raisebox{0ex}[1ex][1ex]{$#1$}}}

\def\cA{\mathcal{A}}
\def\cB{\mathcal{B}}
\def\cC{\mathcal{C}}
\def\cD{\mathcal{D}}
\def\cE{\mathcal{E}}
\def\cF{\mathcal{F}}
\def\cG{\mathcal{G}}
\def\cH{\mathcal{H}}
\def\cI{\mathcal{I}}
\def\cJ{\mathcal{J}}
\def\cK{\mathcal{K}}
\def\cL{\mathcal{L}}
\def\cM{\mathcal{M}}
\def\cN{\mathcal{N}}
\def\cO{\mathcal{O}}
\def\cP{\mathcal{P}}
\def\cQ{\mathcal{Q}}
\def\cR{\mathcal{R}}
\def\cS{\mathcal{S}}
\def\cT{\mathcal{T}}
\def\cU{\mathcal{U}}
\def\cV{\mathcal{V}}
\def\cW{\mathcal{W}}
\def\cX{\mathcal{X}}
\def\cY{\mathcal{Y}}
\def\cZ{\mathcal{Z}}

\def\tA{\mathtt{A}}
\def\tL{\mathtt{L}}
\def\tD{\mathtt{D}}
\def\lra{\to}
\def\Lra{\Leftrightarrow}
\newcommand{\fixpointsOf}[1]{\mathsf{fp}(#1)}
\def\func{\mathsf{f}}
\def\gunc{\mathsf{g}}

\newcommand*{\Hilb}{\mathcal{H}}
\newcommand*{\code}[1]{\mathtt{#1}}
\newcommand*{\gaten}{\mathtt{U}}
\newcommand*{\gateg}[1]{\mathtt{#1}}
\newcommand*{\nonter}[1]{\textsf{#1}}
\newcommand*{\CN}{\code{CX}}
\newcommand*{\qnn}{\nonter{c}}
\def\uco{\mathsf{uco}}
\def\prop{\Pi}

\newcommand*{\gateset}{\cU}
\newcommand*{\rotset}{\cR}
\newcommand*{\varset}[1][\qnn]{\code{QV}_{#1}}
\newcommand*{\parset}[1][\qnn]{\code{CV}_{#1}}
\newcommand*{\bases}[1][\qnn]{\cB_{#1}}
\newcommand*{\fun}[1][\ensuremath{x}]{\code{Fun}[#1]}

\newcommand*{\SpaceP}[1][\varset]{\Hilb_{#1}}
\newcommand*{\DomP}[1][\qnn]{\mathbb{\Psi}_{#1}} 
\newcommand*{\DomR}[1][\qnn]{\mathbb{\Delta}_{#1}} 
\newcommand*{\aDomP}[1][\qnn]{\opint{\bV}_{\!#1}} 
\newcommand*{\aDomR}[1][\qnn]{\opint{\bD}_{#1}} 
\newcommand*{\DomSenv}{\mathbb{\Sigma}_{\qnn}}
\newcommand*{\aDomSenv}{\opint{\bS}_{\qnn}}

\newcommand{\sSem}[2][\Stenv]{\llbracket #2\rrbracket_{#1}}
\newcommand{\csSem}[2][\Sigma]{\llbracket #2\rrbracket_{#1}}
\newcommand{\abSem}[2]{\opints{\llbracket #1\rrbracket}_{\mbox{\tiny $#2$}}}

\newcommand{\lang}{\cL_{\mbox{\tiny $\code{VQC}$}}}
\newcommand{\clipC}{\text{clip}_{\mbox{\tiny $\code{CI}$}}}
\newcommand{\clipR}{\text{clip}_{\mbox{\tiny $\code{RI}$}}}
\newcommand{\winner}{\xi}
\newcommand{\winnerc}{\xi^{\mbox{\tiny $\uparrow$}}}
\newcommand{\awinner}{\opints{\xi}}
\newcommand{\nat}{\mathbb{N}}
\newcommand*{\U}{\code{U}}
\newcommand{\M}{\code{M}}
\newcommand{\abM}{\opints{\Mf}}
\newcommand{\Up}[1][x]{\code{E}}
\newcommand{\abste}[1][V]{\opint{\code{#1}}}
\newcommand*{\Stenv}{\sigma}
\newcommand*{\Stenvc}{\code{S}}
\newcommand{\abenv}{\opint{\code{S}}}
\newcommand{\absdist}{\opints{\varrho}}
\newcommand{\dist}{\rho}
\newcommand{\distc}{\varrho}
\newcommand{\init}[1]{{{#1}_{\tinyd{0}}}}

\def\grass#1{\{\!|#1|\!\}}
\def\grassto#1{(\!|#1|\!)}
\def\agrassto#1{\opints{(\!|#1|\!)}}

\newcommand{\alphaR}{\alpha_{\mbox{\tiny $\code{RI}$}}}
\newcommand{\gammaR}{\gamma_{\mbox{\tiny $\code{RI}$}}}
\newcommand{\alphaC}{\alpha_{\mbox{\tiny $\code{CI}$}}}
\newcommand{\gammaC}{\gamma_{\mbox{\tiny $\code{CI}$}}}
\newcommand{\alphaP}{\alpha_{\mbox{\tiny $\code{st}$}}}
\newcommand{\gammaP}{\gamma_{\mbox{\tiny $\code{st}$}}}
\newcommand{\alphaE}{\alpha_{\mbox{\tiny $\code{env}$}}}
\newcommand{\gammaE}{\gamma_{\mbox{\tiny $\code{env}$}}}
\newcommand{\alphaD}{\alpha_{\mbox{\tiny $\code{dist}$}}}
\newcommand{\gammaD}{\gamma_{\mbox{\tiny $\code{dist}$}}}
\newcommand{\alphaRn}{\alpha_{\mbox{\tiny $\code{RI}^n$}}}

\def\comp{\mathrel{\hbox{\footnotesize${}\!{\circ}\!{}$\normalsize}}}

\newcommand{\bigboxplus}{\mathop{\scalerel*{\boxplus}{\bigoplus}}}

\begin{abstract}
    Variational quantum circuits (VQCs) are a central component of many quantum machine learning algorithms, offering a hybrid quantum-classical framework that, under certain aspects, can be considered similar to classical deep neural networks. A shared aspect is, for instance, their vulnerability to adversarial inputs—small perturbations that can lead to incorrect predictions. While formal verification techniques have been extensively developed for classical models, no comparable framework exists for certifying the robustness of VQCs. Here, we present the first in-depth theoretical and practical study of the formal verification problem for VQCs. Inspired by abstract interpretation methods used in deep learning, we analyze the applicability and limitations of interval-based reachability techniques in the quantum setting. 
    We show that quantum-specific aspects, such as state normalization, introduce inter-variable dependencies that challenge existing approaches. We investigate these issues by introducing a novel semantic framework based on abstract interpretation, where the verification problem for VQCs can be formally defined, and its complexity analyzed. Finally, we demonstrate our approach on standard verification benchmarks.
\end{abstract}

\maketitle

\section{Introduction}

The computational paradigm of Variational Quantum Circuits (VQCs) provides a general framework for solving a wide range of problems by employing the power of quantum computers in synergy with the most advanced classical optimisation tools~\cite{havlicek_supervised_2019,cerezo2021variational,endo2021hybrid}.
VQCs are algorithms inspired by one of the most fundamental principles in Quantum Mechanics, namely the Variational Principle, at the base of any feasible tool for approximating the energy of the ground state of a system. 
These algorithms are characterized by a modular design, with components that may vary in structure and complexity depending on the specific problem. For their versatility, they have been used to address problems in quantum chemistry, such as calculating molecular energies, in mathematical applications, such as solving systems of equations, in combinatorial optimization, error correction, circuit compilation, etc. VQCs also play a fundamental role in several techniques developed in Quantum Machine Learning~\cite{schuld2015introduction,biamonte2017quantum}, for classification tasks and, in general, for learning complex patterns from data. Examples of VQC-based approaches to machine learning include Variational Quantum Support Vector Machines~\cite{havlicek_supervised_2019} and Quantum Generative Adversarial Networks~\cite{qgan}.

Given the widespread use of VQCs in so many application fields, the problem of verifying their properties, such as robustness (also referred to as stability) and accuracy, is crucial. This problem arises from different aspects of the algorithms and their implementations. 
One aspect is related to the nature of the quantum device used for computation, allowing factors like decoherence and gate errors to introduce errors in the computation and measurement process. These errors propagate throughout the optimization process, impacting the accuracy of the results. Another aspect, which is closer to the specific application employing a VQC, is related to the heuristic nature of these algorithms and the need for an empirical evaluation to assess their properties. 

In this paper, we will address the problem of the verification of VQCs relatively to the second aspect mentioned above, with machine learning as a reference application and, in particular, the VQC-based classification method for quantum supervised learning.

To this purpose, a useful source of inspiration is the analogy with a popular method in classical machine learning, the Neural Network model (NN), and in particular with its training process, which is similar to the variational behaviour of a VQC. In fact, at the core of a VQC lies a parameterized quantum circuit with a specific structure, known as \emph{ansatz} or \emph{variational form}, which is formed by a sequence of parameterized gates and provides a trial (starting) point. This is followed by a training procedure that iteratively updates the parameters based on the classical optimization of a problem-specific cost function.
This model clearly resembles that of a neural network architecture, in the sense that artificial neural networks, too, are parameterized mathematical models, where parameters are updated during training by means of an optimization procedure.
Moreover, VQCs are, like NNs, susceptible to adversarial inputs, i.e., small perturbations to the input state leading to unexpected predictions~\cite{adversarial,wendlinger2024comparative}. 
Crucially, due to the heuristic and approximate nature of (quantum) training algorithms, the common approach for assessing their robustness is an empirical evaluation, which is obviously insufficient to provide provable guarantees. 

Based on this analogy between VQC-based supervised learning and classical neural networks, and inspired by some methods that have been developed for the formal verification of NNs~\cite{liu2021algorithms}, we present a framework for the theoretical and practical analysis of the problem of formally verifying variational quantum circuits.

We focus, in particular, on the investigation of one of the main challenges to the realization of formal verification methods, which, contrary to empirical evaluation, are able to provide provable guarantees, namely the complexity of the verification problem. As for classical NN, we will show in the paper that this problem can be, in general, computationally infeasible when the input space is very large, even if it is discrete and finite, as is the case in most real-world scenarios.

To address this problem, we will adopt the abstract interpretation framework and present a technique for conservatively approximating the behavior of a quantum circuit for the purpose of verifying the accuracy of a VQC-based classifier via the analysis of the approximation error (precision of the abstraction).

At first glance, an abstract approach may appear to be a simpler problem than in the classical case, where the verification of a deep NN has to deal with non-linear activation functions, which typically transform abstract domains (such as polytopes or hyperrectangles) into non-convex shapes that are difficult to analyze. 
In contrast, the training of a VQC is based on a purely linear learning process based on unitary (and thus convex) operators, represented by unitary matrices, possibly decomposed in a sequence of unitary operators, giving the circuit a layered structure.
However, this unitary process can only be applied after transforming the classical input data into quantum states by means of an appropriate encoding phase~\cite{rath_quantum_2023}.
If non-linear training does not represent a problem in our setting, we have to deal, nevertheless, with another component of a VQC, which is responsible for transforming classical input data into a quantum state. 
This transformation, called quantum embedding or quantum feature map, is typically achieved 
via a (highly) non-linear encoding on the set of inputs~\cite{schuld2021effect,havlicek_supervised_2019}, and changes the computational arena from the classical dataset to a quantum state space, i.e. a space of complex vectors of norm $1$.
Normalization introduces strong dependencies among data, which are particularly problematic for the definition of reasonable abstractions. 

While in classical deep NNs such dependencies are only a potential issue, in quantum systems this becomes particularly problematic.
On top of this issue, another challenge in VQCs is the \textit{measurement} operation performed at the end of the circuit to read the results of the computation. This operation produces a probability distribution over all possible outcomes and is obviously non-invertible: given a probability distribution, there are infinitely many quantum states that could produce it upon measurement.
As for non-linear neural networks, this aspect does not allow to perform a backward analysis and determine the sets of inputs that generated the desired output.

All these issues motivate the need for a novel formulation and analysis tailored specifically to VQCs. 
%

\paragraph{Contributions and Structures}
After recalling some basic notions of quantum computing, abstract interpretation, and VQC (\autoref{sec: background} and \autoref{sec: vqc}), the paper's contributions are as follows.
We formally define a {\bf language modeling VQCs} (syntax and semantics), based on a representation of quantum states as functions (\autoref{sec: vqc lan}).
We introduce an {\bf abstract semantics for the VQC language on intervals} of inputs (real values) and of amplitudes (complex values) in \autoref{sec: abstract}. Such a semantics provides a formal framework for reasoning on the sources of imprecision with the aim of reducing the uncertainty due to the approximation.
We analyze how our semantics, built on a non-relational domain, loses precision in a relational setting like normalized quantum states, highlighting its incompleteness and {\bf identifying sufficient conditions for completeness} (\autoref{sec: precision}).
In \autoref{sec: vqc classifier}, we define a {\bf VQC-based classifier} in both the concrete and abstract settings, formally characterizing the function that performs classification based on the quantum computation results. 
Building on this definition, we define the verification problem for VQC-based classifiers, proving its {\bf hardness}.
In \autoref{sec: recovery} we propose some techniques to {\bf mitigate the loss of precision} during verification.
In \autoref{sec: evaluation}, we {\bf demonstrate the impact of our new
verification framework} by applying it to verifying the robustness of several VQC-based classifiers on standard benchmarks also used in~\cite{lu2020quantum, wendlinger2024comparative}, namely \textit{Iris}~\cite{iris} and \textit{MNIST}~\cite{MNIST} datasets. 
To the best of our knowledge, our verification pipeline is the first tool able to provably certify the maximum $\epsilon$ input perturbation that is tolerable by a given VQC.

\section{Background}\label{sec: background}

\subsection{Quantum Computation}

In this section, we briefly recall the main aspects of quantum computation that make this computational model different from the classical one.
For more details, we refer to~\cite{MichealANielsen}.
In the circuit model of computation, the first difference between the two computational paradigms is that in a quantum circuit, wires represent quantum bits, or qubits, rather than bits.

According to the postulates of Quantum Mechanics, the state of one qubit is any vector of norm $1$ in a two-dimensional complex Hilbert space, and the state of a register of $n$ qubits is a vector of norm $1$ in the  $2^n$-dimensional complex Hilbert space corresponding to the tensor product of the 2-dimensional complex Hilbert space of each qubit. The complex coefficients are called probability amplitudes for their role in the measurement process described later.

We will use the standard notation in quantum physics and indicate a vector $(\alpha_1,\ldots,\alpha_m)$ in a m-dimensional Hilbert space with respect to a basis $\cB$  by the Dirac ket $\ket{\psi}=\sum_{k=1}^m \alpha_k \ket{e_k}$, where $\ket{e_k}\in\cB$. Thus, for example, the standard\footnote{Every set of $m$ orthonormal vectors is a basis for an m-dimensional Hilbert space. The standard basis is used as a default basis when no other definition is given.} basis vectors of the state space of one qubit will be denoted by $\ket{0}$ and $\ket{1}$ corresponding to $[1,0]^T$ and $[0,1]^T$, respectively, and a generic vector, $[\alpha,\beta]^T$, in this space by $\alpha \ket{0} + \beta \ket{1}$.
The postulate about the composition of quantum systems by tensor product allows us to construct the standard basis for the space of two qubits by 
$\{\ket{0},\ket{1}\}\otimes \{\ket{0},\ket{1}\} = \{\ket{0}\otimes\ket{0},\ket{0}\otimes\ket{1},\ket{1}\otimes\ket{0},\ket{1}\otimes\ket{1}\}$, also denoted by 
$\{\ket{00},\ket{01},\ket{10},\ket{11}$, corresponding to (in linear algebraic notation) $\{[1,0,0,0]^T,[0,1,0,0]^T,[0,0,1,0]^T,[0,0,0,1]^T\}$. By using the same procedure, we can construct the standard basis for an m-dimensional Hilbert space.

An example of a quantum circuit is shown in \autoref{fig: circuit_ex}, where the two initial gates represent two particular unitary operators on one qubit, namely $\gateg{H}$, which transforms a basis vector into a superposition, and $\gateg{X}$, which is the unitary implementation of the classical NOT gate. 
All operations on $1$ qubit correspond to rotations of the point in the unitary three-dimensional sphere (the so-called Bloch sphere) with polar coordinates corresponding to the qubit's amplitudes. The rotations, $\gateg{Rx}$, $\gateg{Ry}$, $\gateg{Rz}$, around the three axes $x,y,z$ are universal for all operations on one qubit, in this case $\gateg{H}$ is a rotation of a $\sfrac{\pi}{2}$ angle around the $y$ axes while the NOT is a rotation of $\pi$ around $y$ or $x$. 
The next gate in the circuit is the controlled-NOT ($\CN$) operation, which applies $\gateg{X}$ to the second qubit only if the first qubit is $\ket{1}$.
This circuit implements the unitary operator $\CN\cdot (\gateg{X}\otimes \gateg{H})$.
Here, the tensor product $(\gateg{X}\otimes \gateg{H})$ represents the operation $\gateg{H}$ applied to the less significant qubit ($q_0$) and $\gateg{X}$ to the most significant one, $q_1$. 
If the initial state is $\ket{00}$, where bot $q_0$ and $q_1$ are $\ket{0}$, $\ket{\psi_1} = (\gateg{X}\otimes \gateg{H})\cdot\ket{00} = \ket{1}\sfrac{1}{\sqrt{2}}(\ket{0}+\ket{1})$, i.e. we have negated $q_1$ and applied $\gateg{H}$ to $q_0$. 
The final state is $\ok{\CN\ket{\psi_1} = \frac{1}{\sqrt{2}}(\ket{01}+\ket{10})}$.
All quantum circuits can be constructed by using only gates in $\{\gateg{Rx},\gateg{Rx},\gateg{Rx}, \CN\}$, i.e., this set is universal for quantum computation\cite{barenco95elementary}.
%
The final gate 
is the measurement operation, which is necessary to extract the final (classical) result from the quantum state $\ket{\psi_2}$ obtained from the quantum evolution of the input $\ket{00}$.
Formally, quantum measurement corresponds to the application of projection operators\footnote{We will consider here only a special type of quantum measurement called `projective measurement'.} 
$\{M_e\}_{\{e \in O\}}$, where $O$ is the set of all labels (possible outcomes) associated with the basis 
$B=\{\ket{e}\}_{e\in O}$ of the state space.
Thus, if the system is in the quantum state $\ket{\psi}$ before the measurement, the probability of obtaining the outcome $e$ is given by $p(e) = \|M_e\ket{\psi}\|^2$, i.e. the squared modulus of the probability amplitude (coefficient) of $\ket{e}$ in $\psi=\sum_{e\in O}\alpha_e\ket{e}$.
As an example, to measure a qubit in the standard basis 
$\{\ket{0},\ket{1}\}$ means to apply the projector operators $M_0$ on the axis $\ket{0}$ and $M_1$ on the axis $\ket{1}$. 
If the qubit state before the measurement is $\ket{\psi} = \alpha \ket{0} + \beta \ket{1}$, the probability of measuring $0$ is $p(0) = \|M_0 \ket{\psi}\|^2 = |\alpha|^2$, and the probability of measuring $1$ is $p(1) = \|M_1 \ket{\psi}\|^2 = |\beta|^2$. 

\begin{figure}
    \centering
    \resizebox{.3\linewidth}{!}{
    \begin{quantikz}[row sep=2pt]
        \lstick{$q_0: \ket{0}$} & \gate{\gateg{H}} & \ctrl{1} & \meter{}\\
        \lstick{$q_1: \ket{0}$} & \gate{\gateg{X}} & \targ{} & \meter{}
    \end{quantikz}
    }
    \caption{A simple quantum circuit implementing the unitary operator $\CN\cdot (\gateg{X}\otimes \gateg{H})$ applied to $\ket{00}$.} \label{fig: circuit_ex}
\end{figure}

\subsection{Abstract Interpretation in a Nutshell}\label{sec: abint}
In the standard framework of \emph{abstract interpretation}~\cite{cousot1977abstract}, abstract domains can be equivalently formulated either in terms of Galois connections or closure operators~\cite{cousot1979systematic}. A pair of functions $\alpha : \tL \lra \tA$ and $\gamma : \tA \lra \tL$ on posets forms an \emph{adjunction} or a \emph{Galois insertion} (GI, denoted $\ok{\tL\galoiS{\alpha}{\gamma}\tA}$) if for any $x \in \tL$ and $y \in \tA$ we have $\alpha(x) \leq_{\tA} y \Lra x \leq_\tL \gamma(y)$. In this case, $\alpha$ (resp. $\gamma$) is the \emph{left-} (resp. \emph{right-}) \emph{adjoint} of $\gamma$ (resp. $\alpha$), and it is additive (resp. co-additive). Given a co-additive\footnote{A function is co-additive if it commutes with the greatest lower bound, dually an additive function commutes with the least upper bound.} function $\gamma$ we can always build its adjoint as $\alpha(x)=\bigwedge_{\tA}\sset{y\in\tA}{x\leq_{\tL}\gamma(y)}$, and vice-versa, given any additive function $\alpha$ it is always possible to build its adjoint.
%
Abstract interpretation is generally used for abstracting {\em functions}. 
Given a function $\func: \tL_i \lra \tL_o$, an input abstraction $\ok{\tL_i\galoiS{\alpha_i}{\gamma_i}\tA_i}$ and an output abstraction $\ok{\tL_o\galoiS{\alpha_o}{\gamma_o}\tA_o}$, we say that $\ok{\func^\sharp}:\tA_i\to\tA_o$ is a sound abstraction of $\func$ if $\forall x\in\tL_i.\:\alpha_o \comp \func(x)\leq_{\tL_o} \ok{\func^\sharp}\comp\alpha_i(x)$, meaning that we may lose information when computing abstract inputs. The abstract function is complete if 
$\forall x\in\tL_i.\:\alpha_o \comp \func(x)= \ok{\func^\sharp}\comp\alpha_i(x)$, i.e., when there is no loss of precision by computing with objects in the abstract domain~\cite{cousot1977abstract, cousot1979systematic}. If completeness holds for a fixed sets of inputs $X$ we say that $\func^\sharp$ is local complete on $X$~\cite{Bruni2021AInterpretations,imvmcai25} The best of such approximation is called best correct approximation (bca for short), and it is defined as $\alpha_o \comp \func \comp \gamma_i$~\cite{cousot1979systematic}.
Note that if there exists a $\func^\sharp$ that is complete, then also the bca is complete~\cite{cousot1977abstract, cousot1979systematic, GRS00}.

\section{Variational Quantum Circuits}\label{sec: vqc}
A Variational Quantum Circuit (VQC) 
consists of three main components: an input encoding (feature map or quantum embedding), a parametric circuit, and a final measurement followed by a classical optimization of a cost function for the parameters' update.
We describe these components via the small circuit on two qubits depicted in \autoref{fig: VQC_example}. 
In this example, the encoding consists of two gates $\gateg{R_x}[x_i]$, $i\in\{0,1\}$, which are applied to the qubits $q_0$ and $q_1$, both in the blank state $\ket{0}$.
The initial state is $\ket{00} = \ket{0}_{q_1} \otimes \ket{0}_{q_0}$.
These gates are rotations around the $x$-axis by an angle that is proportional to the input features $x_i$ of the data and, for a generic $x$, are therefore defined by 
\begin{gather}\label{eq: Rx}
    \gateg{Rx}[x] =
    \begin{pmatrix} \cos(\sfrac{x}{2}) & -\ii \sin(\sfrac{x}{2}) \\
    -\ii \sin(\sfrac{x}{2}) & \cos(\sfrac{x}{2})
\end{pmatrix},
\end{gather}
We use the notation $\gateg{Rx}^{q_0}[x_0]$ for the operator $\gateg{I}\otimes\gateg{Rx}[x_0]$, where $\gateg{I}$ is the identity operator, acting as a rotation on $q_0$ and as the identity on $q_1$.
Thus, $\gateg{Rx}^{q_0}[x_0]$ is a $4\times 4$ matrix that acts on the whole vector space of the circuit.
In our example, the encoding can be obtained by computing:
\begin{equation*}
    \begin{aligned}
        \gateg{Rx}^{q_1}[x_1] \cdot \gateg{Rx}^{\tinyd{q_0}}[x_0] \cdot \ket{00} =& \cos(\sfrac{x_1}{2}) \cos(\sfrac{x_0}{2})\ket{00} + \ii \cos(\sfrac{x_1}{2}) \sin(\sfrac{x_0}{2})\ket{01} + \\
        &+ \ii \sin(\sfrac{x_1}{2}) \cos(\sfrac{x_0}{2})\ket{10} - \sin(\sfrac{x_1}{2}) \sin(\sfrac{x_0}{2})\ket{11}
    \end{aligned}
\end{equation*}
%
This shows how the encoding gates map the values of $x_i$  in the amplitude of a quantum state, on which we can now operate with the parametric quantum circuit.
\begin{figure}[t]
    \centering
    \resizebox{.5\linewidth}{!}{\begin{quantikz}[row sep=2pt, slice all, slice label style={inner sep=1pt,anchor=south west,rotate=40}, slice
titles=$\ \ket{\psi_{\col}}$]
        \lstick{$q_0: \ket{0}$} & \gate{\gateg{Rx}[x_0]}\gategroup[2,style={dashed,rounded
            corners,fill=blue!20, inner
            sep=0.5pt},background,label style={label position=below, anchor=north,yshift=-0.2cm}]{Encoding} &  \gate{\gateg{Ry}_{w_0}}\gategroup[2,steps=3,style={dashed,rounded
            corners,fill=red!20, inner
            sep=0.5pt},background, label style={label position=below, anchor=north,yshift=-0.2cm}]{Parametric Circuit} & \ctrl{1} & \gate{\gateg{Ry}_{w_2}} & \meter{}\\
        \lstick{$q_1: \ket{0}$} & \gate{\gateg{Rx}[x_1]} &  \gate{\gateg{Ry}_{w_1}} & \targ{} & \gate{\gateg{Ry}_{w_3}} & \ground{}
    \end{quantikz}}
    \caption{A Variational Quantum Circuit. The input $x_0,x_1$ are encoded in a quantum state by $\gateg{Rx}[x_0]$ and $\gateg{Rx}[x_1]$. The weights $w_i$ are the trainable parameters optimized during the training. The final measurement provides the values required for classical optimization.}\label{fig: VQC_example}
\end{figure}
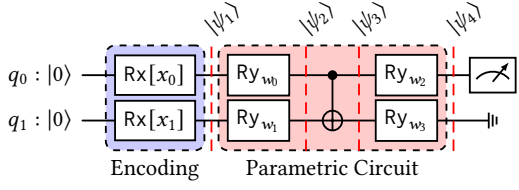
%
%
As a concrete instantiation, consider some classical data represented by the vector $[x_0,x_1]^T = [6.0, 2.7]^T$. The encoding phase applies $\gateg{Rx}[6.0]$  to $q_0$ and $\gateg{Rx}[2.7]$ to $q_1$, with $\ket{q_0q_1}=\ket{00}$, obtaining:
\begin{equation}
    \begin{aligned}\label{eq: psi1}
    \ket{\psi_1} = (\gateg{Rx}^{\tinyd{q_1}}[2.7] \cdot \gateg{Rx}^{\tinyd{q_0}}[6.0])\ket{00} =& - 0.22\ket{00} - 0.03 i\ket{01} + 0.97\ii\ket{10} - 0.14\ket{11}
    \end{aligned}
\end{equation}

On this quantum state, we can now apply the parametric quantum circuit, which, in our example, is represented by the \textit{ansatz} in \autoref{fig: VQC_example} with trainable parameters $w_0,\dots,w_3$.
As with Neural Networks, during the training phase, the parameters $w_i$ are optimized to minimize the classification error.
To distinguish the rotation gates used in the input encoding phase from those in the variational part, we will denote the latter by $\gateg{R}_w$, where $w$ is the weight optimized during training. When verifying the circuit, this is a fixed constant, as verification is performed after the training phase. 
In our example, the training phase produces $w_0 = 0.99$, $w_1 = -0.50$, $w_2 = 3.27$ and $w_3 = -0.69$.
Thus, by first applying to $\ket{\psi_1}$ the gates $\gateg{Ry}^{\tinyd{q_1}}_{0.99}$, $\gateg{Ry}^{\tinyd{q_0}}_{-0.50}$, and then 
 $\gateg{CX}^{\tinyd{q_0,q_1}}$ followed by  $\gateg{Ry}^{\tinyd{q_0}}_{3.27}$ and $\gateg{Ry}^{\tinyd{q_1}}_{ -0.69}$, we  obtain the final state $\ket{\psi_4}$:
\begin{gather}\label{eq: psi4}
    \ket{\psi_4} \!\!= (0.14 - 0.49\ii)\ket{00} - (0.11 - 0.46\ii)\ket{01} + (0.08 + 0.03\ii)\ket{10} + (0.17 + 0.70\ii)\ket{11}
\end{gather}
The last step is the measurement of the qubits in the computational basis.
We have four possible outcomes: $00$, $01$, $10$, and $11$.
By computing the squared modulus\footnote{Given a complex number $z = a + ib$, the squared modulus $|z|^2 = a^2 +b^2$.} of the amplitude of each state, we obtain the distribution $(0.26, 0.21, 0.01, 0.52)$ over the possible outcomes $\{00, 01, 10, 11\}$.
Based on this, a classification can be made by looking at the result of $q_0$.
Since $q_0$ is the least significant qubit, the probability of measuring $q_0$ in state $1$ corresponds to measuring $01$ or $11$ and can be computed as $0.21 + 0.52 = 0.73$.
Similarly, the probability of measuring it in state $0$ (i.e., measuring $00$ or $10$) is $0.26+0.01 = 0.27$.
This means that we can classify the input $[x_0,x_1]^T = [6.0, 2.7]^T$ as class $1$ with a probability of $0.73$.

In the training of a VQC, the measurement outcome of the qubit is typically used to compute a cost function, which is then optimized by using classical optimization tools to update the parameters of the circuits. 
For the verification task, which is the purpose of this paper, we can concentrate on trained VQC, where the classical optimization phase has already taken place.

\section{A language for Variational Quantum Circuits}\label{sec: vqc lan}
In order to formally verify a VQC, we introduce a language, which we call $\lang$, defining the structure and the behavior of VQCs. 
For this language, we first define a syntax (establishing the structure of a VQC), and then we give a semantics, inductively defined on the constructs of the syntax, by using a function-based representation of quantum states.

\subsection{Syntax}
Let $q, q_i, q_j$ denote the VQC qubits and $x$ a classical input variable.
We define $\lang$ as the language generated by the grammar:
\begin{equation*}
    \begin{aligned}
      &\Up := \code{Rx}^{\mbox{\tiny $q$}}[x] \mid \code{Ry}^{\tinyd{q}}[x] \mid \code{Rz}^{\tinyd{q}}[x] && \text{(Encoding operations)} \\ 
      &\U := \code{Rx}_{\tinyd{w}}^{\tinyd{q}} \mid \code{Ry}_{\tinyd{w}}^{\tinyd{q}} \mid \code{Rz}_{\tinyd{w}}^{\tinyd{q}} \mid \CN^{q_i, q_j} && \text{(Parametric operations)}\\
      &\nonter{s} := \Up \mid \U \mid \nonter{s};\nonter{s}&& \text{(Quantum statements)}\\
      &\qnn := \nonter{s};\M && \text{(VQC)}\\
    \end{aligned}
\end{equation*}

A VQC $\qnn$ is any element in the language $\lang$ 
consisting of a quantum circuit $\nonter{s}$, composed by quantum statements, followed by a final measurement operation $\M$. 
A quantum statement is a sequential composition of parametric and encoding operations.
$\Up$ represents the encoding operations, used to embed input values into quantum states. 
Each operation $\Up$ represents a rotation around one of the three axes of the Bloch sphere, with an angle determined by the input variable $x$ and applied to qubit $q$. 
$\U$ defines the operations used in the parametric (trainable) part of the circuit. 
It includes single-qubit rotations and the controlled-NOT operation $\CN^{q_i, q_j}$, where $q_i$ is the control and $q_j$ the target qubit, with $q_i \neq q_j$. 
We syntactically distinguish between $\U$ and $\Up$ to reflect their different roles within the circuit: Encoding operations depend on input variables, and are used to embed input data into quantum states; Parametric operations are parameterized by constants $w$, representing the trainable weights optimized during the training phase.
For instance, the VQC in \autoref{fig: VQC_example} can be written as: 
$$\code{Rx}^{q_0}[x_0]; \code{Rx}^{q_1}[x_1]; \code{Ry}_{w_0}^{q_0}; \code{Ry}_{w_1}^{q_1}; \CN^{q_0, q_1}; \code{Ry}_{w_2}^{q_0}; \code{Ry}_{w_3}^{q_1};\M$$
We recall that, even if we are considering only the control-not and one-qubit rotation gates, we can represent in $\lang$ any possible quantum circuit~\cite{barenco95elementary}.

\subsection{Semantics}\label{sec:semdom}
To define a semantics for $\lang$, we first need to specify the {\em domain} of the denotations of data and then a function modeling how the computation transforms such data. 
\subsubsection{Semantic Data Domains}
Let \(\varset\) represent the set of qubits in a VQC $\qnn\in\lang$. 
Each qubit \(q \in \varset\) is associated with a two-dimensional Hilbert space \(\Hilb_q\). 
Consequently, the global vector space of the program is denoted as \(\ok{\SpaceP = \bigotimes_{q \in \varset} \Hilb_q}\). The program state is then represented as a vector \(\ket{\psi} \in \SpaceP\).
In this work, we use an alternative representation rather than the standard vector formalism, which is more suitable for defining the abstract domain.  
Since a quantum state corresponds to a superposition of basis states, each one associated with a complex amplitude, we represent quantum states $\psi$ as mappings from the set $\bases$ of standard basis vectors of \(\ok{\SpaceP}\), to complex amplitudes in $\bC$.  
Formally, a state $\ok{\psi \in \DomP \defi \bases \to \bC}$ can be represented as $\psi = \lambda e \in \bases.\:z$ ($z\in\bC$),
where the normalization condition $\ok{\sum_{e \in \bases} |\psi(e)|^2 = 1}$ must hold.

%
This functional representation is equivalent to the standard vector representation. 
Throughout the paper, we use $\psi$ to denote the state as a mapping and $\ket{\psi}$ to refer to its equivalent vector representation.
For instance, consider the state $\ket{\psi_1}$ defined in \autoref{eq: psi1}. Since this is a two-qubit state, the basis set in mapping notation is $\bases = \{00, 01, 10, 11\}$. Then, the corresponding mapping is: $\psi_1 = \{00 \mapsto -0.22,\ 01 \mapsto -0.03\ii,\ 10 \mapsto +0.97\ii,\ 11 \mapsto -0.14\}$.


States are not the only data on which a VQC works. Indeed, we must also consider the input for the variables $x$. In particular, we need to assign them real values—that is, we require an \emph{input environment}.
Given $\qnn\in\lang$, let $\parset$ denote the set of classical input variables.  
We define an input environment as a mapping $\ok{\Stenv \in \DomSenv \defi \parset \to \bR}$, associating each variable in $\parset$ with a real-valued input in $\bR$.  
For instance, in the example described in \autoref{sec: vqc}, the input environment is given by: $\Stenv = \{x_0 \mapsto 6.0,\ x_1 \mapsto 2.7\}$.

\subsubsection{Quantum Statements Semantics}
We can define the semantics of a quantum statement $\nonter{s}$ as a function $\sSem[]{\nonter{s}}: \DomSenv \to (\DomP \to \DomP)$, which, given an input environment, transforms an initial state $\psi \in \DomP$ into a resulting quantum state in $\DomP$.
%
\paragraph{Parametric Operations $\,\U$}
The semantics of every parametric operation corresponds to a matrix that describes the effect of this operation (whether a single-qubit rotation or a controlled-Not) on the full Hilbert space of the program. 
For example, consider the parametric operation \(\code{Ry}_{\tinyd{w}}^{\tinyd{q}}\). 
\(\code{Ry}_{\tinyd{w}}\) corresponds to a \(2 \times 2\) unitary matrix acting on a single qubit. However, to represent the semantics of the rotation in the full program space, we must define a unitary matrix of size \(2^{|\varset|} \times 2^{|\varset|}\) for \(\code{Ry}_{\tinyd{w}}^{\tinyd{q}}\).
This larger matrix models the action of \(\code{Ry}_{\tinyd{w}}\) on qubit \(q\) while acting as the identity on all other qubits in \(\varset \smallsetminus \{q\}\). Concretely, this is achieved by taking the tensor product of the \(2 \times 2\) matrix representing \(\code{Ry}_{\tinyd{w}}\) with identity matrices corresponding to the remaining qubits.

Given an operation in $\U$, we abuse notation considering as \(\ok{\U\in\bC^{2^{|\varset|} \times 2^{|\varset|}}}\) the matrix that describes the effect of the operation in the full Hilbert space of the program. 
Given this matrix, each entry $\U_{\tinyd{e_1, e_2}}\in \bC$\footnote{We recall that the bit strings $e_1$ and $e_2$ that represent standard basis elements can be used as indices to the rows and columns of the matrix $\U$, respectively.} denotes the amplitude of transitioning from basis state $e_2$ to $e_1$, and applying $\U$ to a quantum state $\ket{\psi} \in \bC^{|\bases|}$ corresponds to the matrix-vector product $\U \cdot \ket{\psi}$. 
To define the semantics of this transformation in our setting, we must express matrix multiplication using the mapping-based formalism employed to represent abstract states.
Given an operation in $\U$, its semantics is defined as a function $\sSem[]{\U}: \DomSenv \to (\DomP \to \DomP)$, defined as:
\begin{equation}\label{eq: U sem}
    \sSem{\U} \defi \lambda \psi. \left(\lambda e \in \bases.\, \sum_{e' \in \bases} \U_{\tinyd{e,e'}} \cdot \psi(e') \right)
\end{equation}
Since $\U$ is not an encoding operation, its semantics is independent of the input environment $\sigma$.
For instance, consider a one qubit quantum state $\ket{\psi}_{q} = [a, b]^T$ (equivalently represented as $\psi_{q}=\{0\mapsto a,1\mapsto b\}$),  and the operation described by the matrix $\code{G} = \begin{bmatrix} u_{0,0} & u_{0,1}\\ u_{1,0} & u_{1,1} \end{bmatrix}$.
In vector formalism the application of $\code{G}$ to $\ket{\psi}$ is given by: $\code{G}\cdot\ket{\psi} = [u_{0,0}a + u_{0,1}b, u_{1,0}a + u_{1,1}b]^T$.
that correspond, in the mapping notation to: $\csSem[]{\code{G}}(\psi)= \{0\mapsto u_{0,0}a + u_{0,1}b, 1\mapsto u_{1,0}a + u_{1,1}b\} = \{0\mapsto u_{0,0}\cdot\psi(0) + u_{0,1}\cdot\psi(1), 1\mapsto u_{1,0}\cdot\psi(0) + u_{1,1}\cdot\psi(1)\}$. 

\paragraph{Encoding Operations $\Up$}
We can adopt an approach similar to the one used for parametric operations. 
$\Up$ represents single-qubit rotations by an angle depending on the value of $x$.
This means that every rotations are describe by a matrix over functions, $\fun^{2 \times 2}$, where $\fun = \{ f[x] \mid f: \mathbb{R} \to \mathbb{C},\ x \in \parset \}$ is a set of complex-valued functions over the variable $x$.
The matrix $\gateg{Rx}[x]$ in \autoref{eq: Rx} is an example of such a rotation in $\Up$. 
Abusing notation, by $\Up$ we refer to the matrix modeling the encoding operation acting on the entire system: the matrix of the specific rotation on a qubit in tensor with the identity on the rest. 
Accordingly, $\ok{\Up \in \fun^{2^{|\varset|} \times 2^{|\varset|}}}$.

In order to define the semantics of $\Up$, we first need to evaluate the functions $\func[x] \in \fun$ with respect to an input environment $\Stenv \in \DomSenv$.  
We define the interpretation function $\grassto{\func[x]} : \DomSenv \to \mathbb{C}$ as: $\grassto{\func[x]}_{\Stenv} = \func(\Stenv(x)) \in \mathbb{C}$, that is, we evaluate the function $\func$ on the input value $\Stenv(x)$.
Given an encoding operation $\Up$, each matrix entry $\Up_{\tinyd{e_1,e_2}}$ is a function in $\fun$, mapping an input value to a complex coefficient. 
The semantics of $\Up$ is a function $\sSem[]{\Up}: \DomSenv \to (\DomP \to \DomP)$ defined as:
\begin{equation}\label{eq: Up-sem}
    \sSem{\Up} \defi \lambda \psi.\:(\lambda e \in \bases.\:\sum_{e' \in \bases} \grassto{\Up_{\tinyd{e,e'}}}_\Stenv \cdot \psi(e'))
\end{equation}


\paragraph{Sequential Composition}
The last operator for building a VQC is sequential composition of statements $\nonter{s}_1;\, \nonter{s}_2$. 
In this case, the semantics is trivially the functional composition of the semantics:
\[
\sSem{\nonter{s}_1;\, \nonter{s}_2} \defi \sSem{\nonter{s}_2} \comp \sSem{\nonter{s}_1}
\]

\subsubsection{The Semantics of a VQC}
Since a VQC $\qnn\in\lang$ is built as $\qnn:= \nonter{s};\M$, its semantics requires a denotation for the final measurement operator. 
This can be defined as a function that, given a quantum state, returns a probability distribution over the set of possible results.
In our setting, we define $\ok{\DomR \defi \bases \to \bR}$ (with metavariable $\dist$), the set of all possible probability distributions over the bases $\bases$.
The measurement semantics can be expressed (abusing notation) as a function $\sSem[]{\M}:\DomP \to \DomR$, formally defined as:
\begin{equation}\label{eq: Msem}
    \sSem[]{\M}\defi\lambda \psi.\:(\lambda e \in \bases.\:|\psi(e)|^2).
\end{equation}
For instance, if we consider the state in \autoref{eq: psi4}, that in the mapping representation is $\psi_{4} = \{00\mapsto (0.14 - 0.49\ii), 01\mapsto -(0.11 - 0.46\ii), 10\mapsto (0.08 + 0.03\ii), 11\mapsto (0.17 + 0.70\ii)\}$.
The result of the application measurement operators is: 
$$
\sSem[]{\M}(\psi_{4}) = \{00\mapsto 0.26,01\mapsto 0.21,10\mapsto 0.01,11\mapsto 0.52\}\in\DomR[\tinyd{q_0,q_1}]
$$

Finally, the concrete semantics of the entire $\qnn= \nonter{s};\M$, is the composition of the semantics of the measurement applied to the result of operator $\nonter{s}$ semantics, obtaining (again abusing notation) a function $\sSem[]{\qnn}\!\!:\DomSenv\to(\DomP\to\DomR)$ defined as:
\[
\sSem{\qnn} \defi \lambda \psi.\: \sSem[]{\M}\comp\sSem{\nonter{s}}(\psi).
\]

When verifying a VQC, we execute the circuit on a set of input values, which corresponds to a specific input and all possible values obtained by perturbing it. 
In the concrete domain, this would involve executing the circuit for every possible input.
To formalize this, we define the concrete semantics of the circuit in a collecting style, i.e., as a function $\csSem[]{\qnn}:\wp(\DomSenv)\to (\wp(\DomP) \to \wp(\DomR))$. 
We abuse notation by using the same semantic notation for the additive lift of $\sSem[]{\cdot}$ to sets of environments and states, i.e., $\ok{\csSem{\qnn}(\Psi)\defi\sset{\sSem{\qnn}(\psi)}{\sigma\in\Sigma,\psi\in\Psi}}$, where $\Sigma\in\wp(\DomSenv)$ and $\Psi\in\wp(\DomP)$ represent a set of initial states of the VQC.

\section{Abstracting $\lang$ Semantics}\label{sec: abstract}
In many fields of computation, the concrete semantics is usually an intractable model of computation due to computability or computational complexity reasons. In this section, we follow the idea widely exploited both in programming languages and in NN~\cite{cousot1979systematic,Ai2}, consisting of executing the computational system on {\em approximated data}, e.g., intervals of numerical values.
In order to approximate the computation, we first need to abstract the data. 
We observe that inputs (environments) and outputs (distributions) are mappings to real values, while the VQC works on states, which are mappings to complex values. 
This means that we have to split the abstraction process, namely, we have to introduce the real intervals abstraction for environments and distributions, and the complex interval abstraction for the states.

\subsection{Abstracting Environments (inputs) and Distributions (outputs)}

Let us first extend the standard abstract domain of intervals to real values, with a slight change consisting of considering only bounded intervals.

\subsubsection{Reals Interval Domain}

Let us consider the domain of closed~\cite{cousot_2021_dynamic} and bounded~\cite{Pulina2010} intervals on $\bR$, i.e., $\bRI\defi\sset{\interv{r_l}{r_u}}{r_l,r_u\in\bR}\cup\{\varnothing,\bR\}$. As it happens for integer intervals, also this domain can be characterized as an abstraction of the powerset of reals $\tuple{\wp(\bR),\subseteq}$~\cite{cousot_2021_dynamic}.
We define two functions $\alphaR:\wp(\bR)\to\bRI$ and $\gammaR:\bRI\to\wp(\bR)$ as: Let $I\in\bRI$
\begin{align*}
\gammaR(I)\defi&\left \{
\begin{array}{ll}
     \sset{r}{r_l\leq r\leq r_u}& \mbox{if $I=\interv{r_l}{r_u}$} \\
     I& \mbox{if}\ I=\varnothing\ \mbox{or}\ I=\bR
\end{array}
\right.
\quad
\alphaR(X)\defi&\bigcap\sset{I\in\bRI}{X\subseteq\gammaR(I)}
\end{align*}
Note that the definition of $\alphaR$ is a good definition since closed intervals on $\bR$ are closed sets in the corresponding topological space, and the intersection of any family of closed sets is a closed set, with $\varnothing$ and $\bR$ closed sets.
In general, let $X\in\wp(\bR)$,  $\alphaR(X)=[\min(X),\max(X)]$ if both $\min(X)$ and $\max(X)$ exist in $X$;  If at least one of them does not exist since $X$ has an {\em open} bound, e.g., $X=\sset{r}{2<r\leq 10}$, then the abstraction closes it, namely $\alphaR(X)=[2,10]$; If, at least one bound not exist since the values in $X$ grow to infinite (or decrease to negative infinite), e.g., $X=\sset{r\in\bR}{r\geq 10}$, then $\alphaR(X)=\bR$; If $X=\varnothing$ then the abstraction is $\varnothing$. When $\alphaR(X)=[l,u]$ we abuse notation writing $l=\min(X)$ and $u=\max(X)$~\cite{Pulina2010}.

The following results come from the fact that $\gammaR$ is a co-additive function over $\bRI$, and therefore, by construction, $\alphaR$ is its adjoint function forming a Galois Insertion.


\begin{proposition}
    $\tuple{\wp(\bR),\subseteq}\galoiS{\alphaR}{\gammaR}\tuple{\bRI,\leq_{\tinyd{\bRI}}}$, where $\leq_{\tinyd{\bRI}}$ is standard inclusion between intervals.
\end{proposition}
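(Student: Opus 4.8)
The plan is to establish the Galois insertion by exploiting the general fact, recalled in \autoref{sec: abint}, that a co-additive map $\gammaR$ between complete lattices always has a left adjoint, and that this left adjoint is precisely the function $\alphaR(X) = \bigwedge \sset{I \in \bRI}{X \subseteq \gammaR(I)}$ — which is exactly how $\alphaR$ was defined above (with $\bigwedge$ instantiated as $\bigcap$, since the ordering on $\bRI$ is interval inclusion and the meet is intersection of the represented sets). So the proof reduces to two things: checking that $\tuple{\bRI, \leq_{\tinyd{\bRI}}}$ is a suitable lattice and that $\gammaR$ is co-additive, i.e. preserves arbitrary meets.

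First I would verify that $\gammaR$ is well defined and monotone, and record that $\leq_{\tinyd{\bRI}}$ — inclusion of intervals, with $\varnothing$ bottom and $\bR$ top — makes $\bRI$ a complete lattice whose meet of a family $\{I_j\}$ is $\alphaR\big(\bigcap_j \gammaR(I_j)\big)$; equivalently, $\gammaR$ of the meet is the topological closure of $\bigcap_j \gammaR(I_j)$. Next, the core step: show $\gammaR$ is co-additive, i.e. $\gammaR\big(\bigwedge_j I_j\big) = \bigcap_j \gammaR(I_j)$. The non-trivial inclusion is $\supseteq$: a point in every $\gammaR(I_j)$ lies in a closed set (each $\gammaR(I_j)$ is closed) and hence in $\bigcap_j \gammaR(I_j)$, which is closed; we must argue that this closed set is already of the form $\gammaR(I)$ for some $I \in \bRI$ — i.e. that intersections of closed intervals (together with $\varnothing$, $\bR$) stay within the image of $\gammaR$. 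Since an arbitrary intersection of closed bounded intervals of $\bR$ is again a closed bounded interval (possibly empty), and intersecting with $\bR$ changes nothing, this closure holds; I would spell out the short case analysis (some $I_j$ empty; all equal to $\bR$; otherwise a bounded intersection). Then co-additivity of $\gammaR$ follows, and by the adjunction-construction theorem cited in \autoref{sec: abint}, $\alphaR$ as defined is its left adjoint, giving the Galois connection $\alphaR(X) \leq_{\tinyd{\bRI}} I \Lra X \subseteq \gammaR(I)$.

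Finally, to upgrade the Galois \emph{connection} to a Galois \emph{insertion} I would check that $\gammaR$ is injective (equivalently, $\alphaR \comp \gammaR = \mathrm{id}$): distinct elements of $\bRI$ have distinct concretizations — two distinct closed bounded intervals differ as sets, and $\varnothing$, $\bR$ are sent to the distinguished sets $\varnothing$ and $\bR$ which are not closed bounded intervals with finite endpoints. Hence no redundancy in $\bRI$, which is exactly the insertion condition.

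The main obstacle is the middle step: making precise that the image of $\gammaR$ is closed under arbitrary intersections, so that $\gammaR$ genuinely lands co-additively and $\bRI$ is the right lattice to phrase it in. This is where the deliberate choice to restrict to \emph{bounded closed} intervals (plus the two extra elements $\varnothing, \bR$) does the work — an unbounded or half-open family could otherwise intersect to something outside the intended domain — and the argument must invoke, as the surrounding text already hints, that closed intervals are closed sets and that the relevant intersections remain closed and bounded. Everything else (monotonicity, the two easy inclusions, injectivity) is routine case-checking on the shape of intervals.
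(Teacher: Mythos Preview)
Your proposal is correct and follows essentially the same approach as the paper, which simply remarks (just before the proposition) that ``$\gammaR$ is a co-additive function over $\bRI$, and therefore, by construction, $\alphaR$ is its adjoint function forming a Galois Insertion.'' You have spelled out the details the paper leaves implicit—closure of the image of $\gammaR$ under arbitrary intersections, and injectivity of $\gammaR$ for the insertion part—which is fine and accurate.
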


In~\cite{cousot_2021_dynamic,intervalbook} we can find the whole arithmetic on real intervals. 
Here we will use only multiplication (extended to matrices) and sum, that we denote, abusing notation, as $+$ and $\cdot$.

\subsubsection{Abstract Environments}
Environments map input variables to real values. 
In order to abstract the VQC computation to intervals, we have to define {\em abstract environments} $\abenv$ mapping the input variables of $\qnn\in\lang$ (i.e., those in $\parset$) to the {\em intervals} of values in $\bRI$ representing the range of possible values for that variable. This abstraction is performed on the collection of reals associated with each variable (due to a set of possible input environments). 
For example, if a variable $x\in\parset$ can take values between $0$ and $r\in\bR$, the abstract environment would map $x$ to the interval $[0, r]$.  
Formally, $\ok{\abenv \in \aDomSenv \defi \parset \to \bRI}$ can be defined by abstract interpretation on the power domain of environments. 
Let $\ok{\alphaE : \wp(\DomSenv) \to \aDomSenv}$ defined as follows: Let $\Sigma\in\wp(\DomSenv)$
\begin{align*}
    \alphaE(\Sigma) &\defi \lambda x \in \parset.\:\alphaR\big(\sset{\sigma(x)}{\sigma\in\Sigma}\big),
\end{align*}
This function abstracts a set of inputs (i.e., a set of static environments) into a single abstract environment $\abenv$.
The concretization is the standard adjoint $\gammaE:\aDomSenv\to\wp(\DomSenv)$, namely $\forall \abenv\in\aDomSenv$ we have $\ok{\gammaE(\abenv)\defi\sset{\Stenv\in\DomSenv}{\forall x\in\parset.\:\Stenv(x)\in\gammaR(\abenv(x))}}$.
\begin{proposition}
    $\tuple{ \wp(\DomSenv),\subseteq}\galoiS{\alphaE}{\gammaE}\tuple{\aDomSenv,\dot{\leq}_{\tinyd{\bRI}}}$\footnote{$\dot{\leq}_{\tinyd{\bRI}}$ is the point-wise ordering induced by $\leq_{\tinyd{\bRI}}$ on $\aDomSenv$, i.e., $\forall\abenv_1,\abenv_2\in\aDomSenv$ we have $\abenv_1\dot{\leq}_{\tinyd{\bRI}}\abenv_2$ iff $\forall x\in\parset.\:\abenv_1(x)\leq_{\tinyd{\bRI}}\abenv_2(x)$.}.
\end{proposition}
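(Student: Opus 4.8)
The plan is to derive the Galois insertion for abstract environments as a direct, pointwise consequence of the Galois insertion $\tuple{\wp(\bR),\subseteq}\galoiS{\alphaR}{\gammaR}\tuple{\bRI,\leq_{\tinyd{\bRI}}}$ established in the first proposition. Since $\aDomSenv = \parset \to \bRI$ and $\wp(\DomSenv)$ with $\DomSenv = \parset \to \bR$, the whole construction lives over the single "coordinate" given by the variable set $\parset$, so I would reduce everything to the per-variable reasoning and then lift.

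First I would show the Galois connection property, i.e., that for every $\Sigma\in\wp(\DomSenv)$ and every $\abenv\in\aDomSenv$ we have $\alphaE(\Sigma)\dot{\leq}_{\tinyd{\bRI}}\abenv \Lra \Sigma\subseteq\gammaE(\abenv)$. Unfolding the pointwise order, the left side says $\alphaR(\sset{\sigma(x)}{\sigma\in\Sigma})\leq_{\tinyd{\bRI}}\abenv(x)$ for all $x\in\parset$; by the adjunction for reals this is equivalent to $\sset{\sigma(x)}{\sigma\in\Sigma}\subseteq\gammaR(\abenv(x))$ for all $x$, which in turn says exactly that every $\sigma\in\Sigma$ satisfies $\sigma(x)\in\gammaR(\abenv(x))$ for all $x$, i.e. $\Sigma\subseteq\gammaE(\abenv)$. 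The only mild care needed here is the quantifier juggling: "$\forall x$ (the slice of $\Sigma$ at $x$ is contained in $\gammaR(\abenv(x))$)" is logically the same as "$\forall \sigma\in\Sigma.\:\forall x.\:\sigma(x)\in\gammaR(\abenv(x))$", which is routine.

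Next I would upgrade the Galois connection to a Galois \emph{insertion}, i.e., check that $\alphaE\comp\gammaE$ is the identity on $\aDomSenv$ (equivalently, that $\gammaE$ is injective, equivalently that $\alphaE$ is surjective). This follows because $\alphaR\comp\gammaR = \mathrm{id}_{\bRI}$ — which holds since $\tuple{\wp(\bR),\subseteq}\galoiS{\alphaR}{\gammaR}\tuple{\bRI,\leq_{\tinyd{\bRI}}}$ is itself an insertion — and because the slice of $\gammaE(\abenv)$ at a fixed variable $x$ is all of $\gammaR(\abenv(x))$: given any $I_0 = \abenv(x)$ and any $v\in\gammaR(I_0)$, one can build $\sigma\in\gammaE(\abenv)$ with $\sigma(x)=v$ by choosing, for each other variable $y$, some witness in $\gammaR(\abenv(y))$ (nonempty unless some $\abenv(y)=\varnothing$, the degenerate case where $\gammaE(\abenv)=\varnothing$ and the identity check is trivial). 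Hence $\alphaE(\gammaE(\abenv))(x) = \alphaR(\gammaR(\abenv(x))) = \abenv(x)$ for every $x$, so $\alphaE\comp\gammaE = \mathrm{id}$.

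Finally I would note, as in the reals case, that $\alphaE$ is well defined: for each $x$, $\sset{\sigma(x)}{\sigma\in\Sigma}$ is an arbitrary subset of $\bR$ and $\alphaR$ is total on $\wp(\bR)$, so $\alphaE(\Sigma)\in\aDomSenv$; monotonicity of $\alphaE$ and $\gammaE$ is inherited pointwise from $\alphaR$ and $\gammaR$. I do not anticipate a genuine obstacle here — the statement is essentially the standard fact that a Galois insertion lifts pointwise to a product/function space — so the "hard part" is only bookkeeping: keeping the pointwise order $\dot{\leq}_{\tinyd{\bRI}}$ straight and handling the empty-interval corner case when verifying the insertion property.
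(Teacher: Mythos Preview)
Your pointwise reduction of the adjunction condition to the underlying $\alphaR$/$\gammaR$ adjunction is correct and is exactly the kind of argument the paper has in mind; the paper itself does not spell out a proof for this proposition, treating it as a routine lifting of the reals-interval Galois insertion to the function space (cf.\ the remark before the reals-case proposition that $\gammaR$ is co-additive and hence has $\alphaR$ as its left adjoint ``by construction'').

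One thing worth flagging: in this paper, ``Galois insertion'' is \emph{defined} in \autoref{sec: abint} to mean just the adjunction condition $\alpha(x)\leq y \Lra x\leq\gamma(y)$, not the stronger standard notion requiring $\alpha\comp\gamma=\mathrm{id}$. So your first block already establishes the proposition as stated, and the second block (upgrading to $\alphaE\comp\gammaE=\mathrm{id}$) is extra. That extra step also has a small wrinkle you brushed past: if some $\abenv(y)=\varnothing$ while $\abenv(z)\neq\varnothing$ for another $z$, then $\gammaE(\abenv)=\varnothing$ and $\alphaE(\gammaE(\abenv))(z)=\alphaR(\varnothing)=\varnothing\neq\abenv(z)$, so $\alphaE\comp\gammaE$ is not literally the identity on all of $\aDomSenv$ without reducing such redundant abstract environments. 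This does not affect the proposition under the paper's definition, but it is not ``trivial'' in the way you suggested.
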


\subsubsection{Abstract Distributions} The output of the VQC semantics is a probability distribution, which must also be abstracted in order to return intervals of probabilities when measuring an abstract state. 
Formally, let us define $\ok{\aDomR\defi \bases \to \bRI}$, the abstract domain of probability distributions on intervals, point-wise ordered by $\dot{\leq}_{\tinyd{\bRI}}$. We can define 
the abstraction function $\ok{\alphaD: \wp(\DomR) \to \aDomR}$,
between sets of probability distributions $\tuple{\wp(\DomR),\subseteq}$ and abstract interval distributions $\ok{\tuple{\aDomR,\dot{\leq}_{\tinyd{\bRI}}}}$. 
Let $\Delta\in \wp(\DomR)$,
\[
\alphaD(\Delta) \defi \lambda e \in \bases.\:\alphaR(\sset{\dist(e)}{\dist\in \Delta})
\]
The concretisation function $\gammaD:\aDomR\to\wp(\DomR)$ is the standard adjoint, i.e., $\forall \absdist\in\aDomR.\:\gammaD(\absdist)\defi\sset{\dist\in\DomR}{\forall e\in\bases.\:\dist(e)\in\gammaR(\absdist(e))}$.

For instance, the mapping $\absdist = \{00 \mapsto  [0.126, 0.50], 01 \mapsto  [0.116, 0.460], 10 \mapsto [0, 0.140], 11 \mapsto  [0.291, 0.761]\}$ in \autoref{ex:absdist}, express the abstract state where  we can obtain $00$ with probability (w.p.) $[0.126, 0.50]$, $01$ w.p. $[0.116, 0.460]$, $10$ w.p. $[0, 0.140]$, and $11$ w.p. $[0.291, 0.761]$.

\subsection{Abstracting Quantum States}
In order to abstract states, mapping quantum variables to complex values, into the interval domain, we need to extend the abstract domain of intervals to complex values, i.e., to pairs of real intervals. 

\subsubsection{Complex Interval Domain}

Let us consider the set of complex numbers $\bC$.
Given a generic complex number $z=r_1 + \ii r_2\in \bC$, we denote by $\Re(z) = r_1$ its real part, and by $\Im(z) = r_2$ its imaginary part.
Analogously, given a $Z \in \wp(\bC)$, $\ok{\Re(Z) \defi \sset{\Re(z)}{z \in Z }}$ and $\ok{\Im(Z) \defi \sset{\Im(z)}{z \in Z}}$.
\\
Now, we can abstract complex numbers to intervals simply by keeping the information that $z$ is determined by the pairs of real values $r_1$ and $r_2$. Following this idea, we define the abstract domain of complex intervals as $\ok{\bCI \defi \bRI \times \bRI}$. Hence, we can trivially extend the interval abstraction and concretization functions to complex numbers as: Let $Z\in\wp(\bC)$ and $R, I\in\bRI$
\begin{gather*}
        \alphaC : \wp(\bC) \to \bCI,\ \alphaC(Z) \defi \tuple{\alphaR (\Re(Z)), \alphaR (\Im(Z))}\\
        \gammaC : \bCI \to \wp(\bC),\ \gammaC(R,I) \defi \sset{z \in \bC}{\Re(z) \in \gammaR(R),\: \Im(z) \in \gammaR(I)}
\end{gather*}

\begin{proposition}
    $\tuple{\wp(\bC),\subseteq}\galoiS{\alphaC}{\gammaC}\tuple{\bCI,\leq_{\tinyd{\bCI}}}$, where $\forall \tuple{R_1,I_1}, \tuple{R_2,I_2}\in\bCI$ we have $\tuple{R_1,I_1}\leq_{\tinyd{\bCI}}\tuple{R_2,I_2}$ iff $R_1\leq_{\tinyd{\bRI}} R_2$ and $I_1\leq_{\tinyd{\bRI}} I_2$.
\end{proposition}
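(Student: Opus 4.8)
The plan is to establish the Galois insertion $\tuple{\wp(\bC),\subseteq}\galoiS{\alphaC}{\gammaC}\tuple{\bCI,\leq_{\tinyd{\bCI}}}$ by reducing it to the already-established real case. Recall that by an earlier Proposition we have $\tuple{\wp(\bR),\subseteq}\galoiS{\alphaR}{\gammaR}\tuple{\bRI,\leq_{\tinyd{\bRI}}}$, so the strategy is to exhibit $\bCI$ as essentially the product of two copies of $\bRI$ and to observe that $\alphaC,\gammaC$ are built componentwise from $\alphaR,\gammaR$ via the real/imaginary decomposition. Concretely, I would first note that the map $z\mapsto\tuple{\Re(z),\Im(z)}$ is a bijection $\bC\to\bR\times\bR$, and that it lifts to an order-isomorphism between $\tuple{\wp(\bC),\subseteq}$ and a suitable sub-order of $\tuple{\wp(\bR)\times\wp(\bR),\subseteq\times\subseteq}$ — though here one must be careful, since a set $Z\in\wp(\bC)$ is \emph{not} in general a product $\Re(Z)\times\Im(Z)$.

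The core of the argument is the adjunction condition: for all $Z\in\wp(\bC)$ and all $\tuple{R,I}\in\bCI$,
\[
\alphaC(Z)\leq_{\tinyd{\bCI}}\tuple{R,I}\iff Z\subseteq\gammaC(R,I).
\]
I would unfold both sides. By definition of $\leq_{\tinyd{\bCI}}$, the left-hand side is equivalent to $\alphaR(\Re(Z))\leq_{\tinyd{\bRI}} R$ \emph{and} $\alphaR(\Im(Z))\leq_{\tinyd{\bRI}} I$. Applying the real Galois insertion to each conjunct, this is equivalent to $\Re(Z)\subseteq\gammaR(R)$ and $\Im(Z)\subseteq\gammaR(I)$. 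For the right-hand side, unfolding $\gammaC(R,I)=\sset{z\in\bC}{\Re(z)\in\gammaR(R),\ \Im(z)\in\gammaR(I)}$ shows that $Z\subseteq\gammaC(R,I)$ holds iff every $z\in Z$ satisfies $\Re(z)\in\gammaR(R)$ and $\Im(z)\in\gammaR(I)$, which is exactly $\Re(Z)\subseteq\gammaR(R)$ and $\Im(Z)\subseteq\gammaR(I)$. The two unfoldings match, giving the adjunction. To get a Galois \emph{insertion} rather than just a connection, I would additionally check that $\alphaC$ is surjective (equivalently $\gammaC$ injective, equivalently $\alphaC\comp\gammaC=\mathrm{id}$): for $\tuple{R,I}\in\bCI$, compute $\alphaC(\gammaC(R,I))=\tuple{\alphaR(\Re(\gammaC(R,I))),\alphaR(\Im(\gammaC(R,I)))}$ and observe $\Re(\gammaC(R,I))=\gammaR(R)$ and $\Im(\gammaC(R,I))=\gammaR(I)$ (since the two coordinates in $\gammaC$ range independently), so this reduces to $\tuple{\alphaR(\gammaR(R)),\alphaR(\gammaR(I))}=\tuple{R,I}$ by the real-case insertion property.

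Alternatively — and perhaps cleaner for the write-up — I would invoke the general fact recalled in \autoref{sec: abint}: given a co-additive $\gamma$, its left adjoint $\alpha$ is built as $\alpha(x)=\bigwedge\sset{y}{x\leq\gamma(y)}$ and the pair forms a Galois insertion whenever $\alpha\comp\gamma=\mathrm{id}$. So it suffices to show $\gammaC$ is co-additive (commutes with arbitrary intersections in $\bCI$) and that $\alphaC$ as defined coincides with this canonical left adjoint. Co-additivity of $\gammaC$ follows from co-additivity of $\gammaR$ applied in each coordinate, since meets in $\bCI=\bRI\times\bRI$ are computed componentwise and the defining condition of $\gammaC(R,I)$ is a conjunction of the two coordinatewise conditions.

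The main obstacle — really the only subtle point — is the non-product nature of arbitrary $Z\in\wp(\bC)$: one has $Z\subseteq\Re(Z)\times\Im(Z)$ but not equality in general, so one cannot naively say "$\bCI$ abstracts $\wp(\bR)\times\wp(\bR)$ and $\wp(\bC)$ is that product." The resolution is that this asymmetry is harmless here: in the adjunction unfolding above, $Z\subseteq\gammaC(R,I)$ genuinely is equivalent to the conjunction $\Re(Z)\subseteq\gammaR(R)\wedge\Im(Z)\subseteq\gammaR(I)$ even though $Z$ itself is smaller than the product, precisely because $\gammaC(R,I)$ \emph{is} a product box $\gammaR(R)\times\gammaR(I)$ (viewed in $\bC$), and containment of a set in a product box is equivalent to containment of its two projections. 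I would make this observation explicit as the one non-routine step and let everything else be a mechanical componentwise reduction to the real Proposition.
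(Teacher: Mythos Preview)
Your proof is correct. The paper does not spell out a proof for this proposition; it states it immediately after defining $\alphaC$ and $\gammaC$, relying implicitly on the same justification given just above for the real case, namely that $\gammaR$ is co-additive and hence $\alphaR$ is its adjoint by construction. Your write-up actually supplies more detail than the paper does: you verify the adjunction equivalence directly by unfolding both sides componentwise and reducing to the real Galois insertion, and you separately check $\alphaC\comp\gammaC=\mathrm{id}$ for the insertion property. Your alternative route via co-additivity of $\gammaC$ is precisely the paper's (implicit) argument. Either route is fine; your explicit handling of the one genuine subtlety---that an arbitrary $Z\subseteq\bC$ need not be a product, yet containment in the product box $\gammaC(R,I)$ is still equivalent to containment of the two projections---is a point the paper glosses over entirely.
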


We abuse notation by using the same operator introduced on real intervals also on complex intervals, with the only difference of being point-wise applied to the elements of the pair, i.e., $\tuple{[a_1,b_1],[c_1,d_1]}+\tuple{[a_2,b_2],[c_2,d_2]}=\tuple{[a_1,b_1]+[a_2,b_2],[c_1,d_1]+[c_2,d_2]}$, analogous for $\cdot$~\cite{intervalbook,gargantini1971circular}. 


\subsubsection{Abstract States}
%

Similarly to what we did for the abstract environments, we now need to map each standard basis vector to an interval of complex values, in order to abstract states. 
We define the set of abstract states $\aDomP= \bases \to \bCI$ mapping each standard basis $e\in\bases$ to a complex abstract interval $\tuple{\interv{a}{a'},\interv{b}{b'}}\in\bCI$.
This domain $\aDomP$ can be obtained as the abstraction function $\ok{\alphaP : \wp(\DomP)\to \aDomP}$:  
Let $\Psi\in\wp(\DomP)$
\[
\alphaP(\Psi) \defi \lambda e \in \bases.\:\alphaC(\sset{\psi(e)}{\psi\in \Psi})
\]
The concretisation function $\gammaP:\aDomP\to\wp(\DomP)$ is the standard adjoint, i.e., is defined as $\forall\abste\in\aDomP.\:\gammaP(\abste)\defi\sset{\psi\in\DomP}{\forall e\in\bases.\psi(e)\in\gammaC(\abste(e))}$.
\begin{proposition}
    $\tuple{\wp(\DomP),\subseteq}\galoiS{\alphaP}{\gammaP}\tuple{\aDomP,\dot{\leq}_{\tinyd{\bCI}}}$ ($\dot{\leq}_{\tinyd{\bCI}}$ is the point-wise ordering 
    induced by $\leq_{\tinyd{\bCI}}$).
\end{proposition}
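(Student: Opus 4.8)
The plan is to derive this proposition as the pointwise lifting, over the finite index set $\bases$, of the complex-interval insertion $\tuple{\wp(\bC),\subseteq}\galoiS{\alphaC}{\gammaC}\tuple{\bCI,\leq_{\tinyd{\bCI}}}$ already established, using the recipe from \autoref{sec: abint}: first show that $\gammaP$ is co-additive, so that it automatically induces a left adjoint $\alpha$, and then check that this $\alpha$ coincides with the $\alphaP$ defined above. I would start by recording the order-theoretic setting: since $\bCI=\bRI\times\bRI$ is a complete lattice, so is $\aDomP=\bases\to\bCI$ under $\dot{\leq}_{\tinyd{\bCI}}$, with meets computed coordinatewise, and $\tuple{\wp(\DomP),\subseteq}$ is a complete lattice as well.

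The core step is co-additivity of $\gammaP$. Given a family $\{\abste_i\}_{i\in J}\subseteq\aDomP$, unfolding definitions gives $\psi\in\gammaP\big(\bigwedge_i\abste_i\big)$ iff $\psi\in\DomP$ and, for every $e\in\bases$, $\psi(e)\in\gammaC\big((\bigwedge_i\abste_i)(e)\big)$. Since meets in $\aDomP$ are pointwise, $(\bigwedge_i\abste_i)(e)=\bigwedge_i\abste_i(e)$, and since $\gammaC$ is co-additive (being the right adjoint in the $\bCI$ insertion), $\gammaC\big(\bigwedge_i\abste_i(e)\big)=\bigcap_i\gammaC(\abste_i(e))$. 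Hence the condition becomes ``$\psi\in\DomP$ and, for all $i$ and all $e$, $\psi(e)\in\gammaC(\abste_i(e))$'', i.e.\ $\psi\in\bigcap_i\gammaP(\abste_i)$. Thus $\gammaP$ preserves arbitrary meets and induces the left adjoint $\alpha(\Psi)\defi\bigwedge\sset{\abste\in\aDomP}{\Psi\subseteq\gammaP(\abste)}$.

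It remains to identify $\alpha$ with $\alphaP$. The key observation is that the constraint $\Psi\subseteq\gammaP(\abste)$ decouples over coordinates: since every $\psi\in\Psi\subseteq\DomP$ is already normalized, $\Psi\subseteq\gammaP(\abste)$ holds iff for every $e\in\bases$, $\sset{\psi(e)}{\psi\in\Psi}\subseteq\gammaC(\abste(e))$. Because meets in $\aDomP$ are pointwise, this yields $\alpha(\Psi)(e)=\bigwedge\sset{c\in\bCI}{\sset{\psi(e)}{\psi\in\Psi}\subseteq\gammaC(c)}=\alphaC\big(\sset{\psi(e)}{\psi\in\Psi}\big)=\alphaP(\Psi)(e)$, where the middle equality is the adjoint characterization of $\alphaC$ from the $\bCI$ insertion. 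Therefore $\alphaP(\Psi)\dot{\leq}_{\tinyd{\bCI}}\abste\Lra\Psi\subseteq\gammaP(\abste)$, which is the claimed adjunction.

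Thus the argument is largely routine once the $\bRI$ and $\bCI$ propositions are available; the step needing care is the coordinate decoupling above, which leans on the ``for every $e$'' shape of $\gammaP$ and on the normalization constraint in $\DomP$ being vacuous on the left-hand side of the adjunction. I expect the only genuine subtlety to appear if one reads ``Galois insertion'' in the strict sense $\alphaP\comp\gammaP=\mathrm{id}_{\aDomP}$: then normalization does bite, since $\gammaP(\abste)$ is the coordinatewise box \emph{intersected} with the normalized states and may be empty or strictly thinner than $\abste$, so $\gammaP$ is not injective. In that case I would use the standard remedy of taking $\aDomP$ to be the reduced domain given by the image of $\alphaP$; the general identity $\gammaP\comp\alphaP\comp\gammaP=\gammaP$ then gives $\alphaP\comp\gammaP\comp\alphaP=\alphaP$, so $\alphaP\comp\gammaP$ is the identity there, and no precision is lost because every abstract state the analysis produces already lies in that image.
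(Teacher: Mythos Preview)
Your argument is correct and follows exactly the route the paper itself sketches (for the $\bRI$ case) and then leaves implicit here: establish co-additivity of the concretization and invoke the adjoint construction from \autoref{sec: abint}, then identify the induced left adjoint with the defined $\alphaP$ via coordinatewise decoupling. The paper does not spell out a proof for this proposition, so your level of detail exceeds what is given; your closing remark about the strict ``insertion'' reading (i.e.\ $\alphaP\comp\gammaP=\mathrm{id}$) failing because normalization can make $\gammaP(\abste)$ empty or thinner is a genuine observation the paper does not address, and your proposed reduction to the image of $\alphaP$ is the standard fix.
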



An example of an abstract state is in \autoref{ex:absstate}.
Here, $\abste_4(00) = \tuple{[-0.159, 0.433], [-0.559, -0.355]}$, meaning that the real part of the amplitude associated with $00$ ranges between -0.159 and 0.433 while the imaginary part ranges between -0.559 and -0.355.


 



\subsection{Abstract Semantics of $\lang$}\label{sec: abssem}

Starting from an abstract environment $\abenv$ and an abstract state $\abste$, both working on intervals, we can define the abstract semantics of a VQC $\qnn\in\lang$ executing $\qnn$ on intervals.

We first have to define an abstract evaluation of functions $\func[x]\in\fun$ w.r.t.\ the abstract environment, and therefore we have to compute $\func$ on intervals.
Let $\abenv\in\aDomSenv$, we define $\agrassto{\func[x]}:\aDomSenv\to\bCI$ as an approximation of $\func$ on intervals: 
$\ok{\agrassto{\func[x]}_{\tinyd{\abenv}}\defi\alphaC\comp \func\comp\gammaC(\abenv(x))\in\bCI.}$
This is the best correct approximation of (the additive lift to sets of) $\func$ on $\bCI$.

Given $\qnn=\nonter{s};\M\in\lang$, in order to define its abstract semantics, we need to define separately the abstract semantics of quantum statements $\nonter{s}$ and of the measurement operator $\M$.
\begin{definition}[Abstract Quantum Statements Semantics]\label{def: abssem}
    For each statement $\nonter{s}$, the abstract semantics is the function $\abSem{\nonter{s}}{}:\aDomSenv\to (\aDomP\to\aDomP)$ is defined as follows: Let $\abste\in\aDomP$
    \begin{enumerate}
        \item $\abSem{\U}{\abenv} \defi \lambda \abste.  \left(\lambda e \in \bases . \sum_{e' \in \bases} (\U_{\tinyd{e,e'}}\cdot\abste(e'))  \right)$
        \item $\abSem{\Up}{\abenv} \defi \lambda \abste.\:(\lambda e \in \bases.\: \sum_{e'\in \bases} \agrassto{\Up_{\tinyd{e,e'}}}_{\tinyd{\abenv}}\cdot
        \abste(e')) $;
        \item $\abSem{\nonter{s}_1;\, \nonter{s}_2}{\abenv} \defi \abSem{\nonter{s}_2}{\abenv} \circ \abSem{\nonter{s}_1}{\abenv}$,
    \end{enumerate}  
\end{definition}
%
\noindent
Since $\Up_{\tinyd{e,e'}}$ is always a sum of sine and cosine, $\agrassto{\Up_{\tinyd{e,e'}}}_{\tinyd{\abenv}}$ can be defined using the standard interval arithmetic~\cite{intervalbook}.

Then, let $\M$ be a measurement operator, the abstract semantics of $\M$ is a function $\abSem{\M}{}: \aDomP \to \aDomR$ defined as:
$$
\abSem{\M}{} \defi \lambda\abste.(\lambda e\in\bases.\:|\abste(e)|^2)
$$
where $|\cdot|^2$ is computed according to the interval arithmetic~\cite{intervalbook}.

Finally, the abstract semantics of $\qnn := \nonter{s};\M$ is a function $\abSem{\qnn}{}: \aDomSenv\to(\aDomP \to \aDomR)$ defined as:
$$
\abSem{\qnn}{\abenv} \defi \lambda \abste.\:\abSem{\M}{} \circ \abSem{\nonter{s}}{\abenv}(\abste).
$$



The following result tells us that computing a VQC on the abstract domain of intervals by using the so far defined abstract semantics $\abSem{\cdot}{}$ preserves all the concrete computations.

\begin{theorem}[Soundness]\label{th:sound}
    The abstract VQC semantics $\abSem{\cdot}{}$ is sound. Formally, $\forall\Sigma\in\wp(\DomSenv)$ and $\forall \Psi \in \wp(\DomP)$, then $\alphaD (\csSem{\qnn}(\Psi)) \dot{\leq}_{\tinyd{\bRI}} \abSem{\qnn}{\alphaE(\Sigma)}(\alphaP(\Psi))$
\end{theorem}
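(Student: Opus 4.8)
The statement is a standard soundness-of-abstract-interpretation result, so the natural strategy is structural induction on the syntax of $\qnn = \nonter{s};\M$, reducing the claim to two ingredients: (i) soundness of the abstract statement semantics $\abSem{\nonter{s}}{}$ with respect to the collecting statement semantics, and (ii) soundness of the abstract measurement $\abSem{\M}{}$ with respect to $\sSem[]{\M}$ lifted to sets. Since $\csSem{\qnn}(\Psi) = \sset{\sSem[]{\M}(\sSem{\nonter{s}}(\psi))}{\sigma\in\Sigma,\psi\in\Psi}$ and $\abSem{\qnn}{\abenv} = \abSem{\M}{}\circ\abSem{\nonter{s}}{\abenv}$, composing the two soundness facts (using monotonicity of $\abSem{\M}{}$ to propagate the inclusion obtained at the state level) gives the result. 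So the real content is proving the two ingredients, plus checking that the abstraction maps $\alphaE,\alphaP,\alphaD$ interact correctly with the collecting lifts.

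\textbf{Step 1: reduce to a per-operation statement.}
First I would prove, by induction on $\nonter{s}$, that $\alphaP(\csSem{\nonter{s}}(\Psi)) \dot{\leq}_{\tinyd{\bCI}} \abSem{\nonter{s}}{\alphaE(\Sigma)}(\alphaP(\Psi))$, where $\csSem{\nonter{s}}(\Psi)$ denotes the collecting lift $\sset{\sSem{\nonter{s}}(\psi)}{\sigma\in\Sigma,\psi\in\Psi}$. The sequential-composition case is immediate from the inductive hypotheses and monotonicity of $\abSem{\nonter{s}_2}{\abenv}$ (each abstract operation is monotone because interval sum and multiplication are). The two base cases are the parametric operation $\U$ and the encoding operation $\Up$.

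\textbf{Step 2: the two base cases.}
For $\U$: the concrete semantics (Eq.~\ref{eq: U sem}) maps each $\psi$ to $\lambda e.\sum_{e'}\U_{\tinyd{e,e'}}\psi(e')$, with $\U_{\tinyd{e,e'}}$ a \emph{constant} complex number; the abstract semantics does the same with interval arithmetic. The needed fact is that $\alphaC$ commutes (soundly) with finite sums and with multiplication by a fixed complex scalar—i.e., $\alphaC(\sset{\sum_{e'} \U_{\tinyd{e,e'}}\psi(e')}{\psi\in\Psi}) \leq_{\tinyd{\bCI}} \sum_{e'}\U_{\tinyd{e,e'}}\cdot\alphaC(\sset{\psi(e')}{\psi\in\Psi})$. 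This follows from soundness of interval arithmetic on $\bRI$ (hence on $\bCI$), since $\alphaC(Z_1)+\alphaC(Z_2)$ over-approximates $\alphaC(Z_1+Z_2)$ and likewise for $\cdot$; one just unfolds $\alphaP$ pointwise over $\bases$. For $\Up$: identical, except the coefficient $\Up_{\tinyd{e,e'}}$ is a function $\func[x]\in\fun$ evaluated under the environment. Here I would invoke the already-stated fact that $\agrassto{\func[x]}_{\tinyd{\abenv}} = \alphaC\comp\func\comp\gammaC(\abenv(x))$ is the \emph{best correct approximation} of $\func$ on $\bCI$, so $\sset{\grassto{\func[x]}_\sigma}{\sigma\in\gammaE(\abenv)} \subseteq \gammaC(\agrassto{\func[x]}_{\tinyd{\abenv}})$, i.e. abstract evaluation soundly over-approximates the set of concrete evaluations; then proceed exactly as in the $\U$ case. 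Note one must take $\abenv = \alphaE(\Sigma)$ and use $\Sigma\subseteq\gammaE(\alphaE(\Sigma))$ (the GI unit law) to match the sets up.

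\textbf{Step 3: the measurement step, and assembling.}
For $\M$: I must show $\alphaD(\sset{\sSem[]{\M}(\psi)}{\psi\in\Psi'}) \dot{\leq}_{\tinyd{\bRI}} \abSem{\M}{}(\alphaP(\Psi'))$ for any $\Psi'\in\wp(\DomP)$ — concretely, that for each basis element $e$, $\alphaR(\sset{|\psi(e)|^2}{\psi\in\Psi'}) \leq_{\tinyd{\bRI}} |\alphaC(\sset{\psi(e)}{\psi\in\Psi'})|^2$, where the right-hand $|\cdot|^2$ is the interval-arithmetic squared modulus. This is soundness of the $|\cdot|^2$ interval operation: for a complex interval $\tuple{R,I}$, $|\tuple{R,I}|^2$ must over-approximate $\sset{r^2+s^2}{r\in\gammaR(R),s\in\gammaR(I)}$, which holds by construction of interval arithmetic. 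Finally, instantiating Step 1 with $\Psi$ gives $\alphaP(\csSem{\nonter{s}}(\Psi))\dot{\leq}_{\tinyd{\bCI}}\abSem{\nonter{s}}{\alphaE(\Sigma)}(\alphaP(\Psi))$; applying monotone $\abSem{\M}{}$ to both sides and then the measurement soundness with $\Psi' = \csSem{\nonter{s}}(\Psi)$ yields $\alphaD(\csSem{\qnn}(\Psi)) = \alphaD(\sSem[]{\M}(\csSem{\nonter{s}}(\Psi))) \dot{\leq}_{\tinyd{\bRI}} \abSem{\M}{}(\alphaP(\csSem{\nonter{s}}(\Psi))) \dot{\leq}_{\tinyd{\bRI}} \abSem{\M}{}(\abSem{\nonter{s}}{\alphaE(\Sigma)}(\alphaP(\Psi))) = \abSem{\qnn}{\alphaE(\Sigma)}(\alphaP(\Psi))$.

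\textbf{Expected main obstacle.}
The genuinely delicate point is handling the encoding operation cleanly: one must be careful that the \emph{set} of concrete environments $\Sigma$ is first abstracted to $\abenv=\alphaE(\Sigma)$, then re-concretized, and that $\agrassto{\Up_{\tinyd{e,e'}}}_{\tinyd{\abenv}}$ already bakes in the bca over $\gammaC(\abenv(x))$ — so no extra loss is incurred beyond what interval arithmetic forces, but also one cannot claim equality. A secondary nuisance is bookkeeping: the soundness inequality must be threaded pointwise over all $e\in\bases$ and the collecting semantics must be correctly interchanged with finite sums (which is fine because $\bases$ is finite and interval $+$ is sound and monotone). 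Everything else is a routine unfolding of definitions once soundness of the underlying real/complex interval arithmetic ($+$, $\cdot$, $|\cdot|^2$) is taken as given from the cited interval-analysis literature.
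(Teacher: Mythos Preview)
Your proposal is correct and follows essentially the same approach as the paper: soundness of $\abSem{\cdot}{}$ reduces to soundness of the underlying interval-arithmetic operations ($+$, $\cdot$, $|\cdot|^2$) composed through the syntax. The paper's own proof is a one-liner to exactly this effect, so your write-up is simply the fully unfolded version of that argument.
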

\begin{proof}
    The soundness comes directly from the fact that the abstract semantics corresponds to a composition of operations in the interval arithmetic, and these operations are all sound.
\end{proof}

\subsection{An Example}\label{sec: vqc ab example}

We consider the same VQC in \autoref{sec: vqc}.
Now, suppose we want to execute the circuit for all inputs generated by perturbing the original input \( [x_0,x_1]^T = [6.0, 2.7]^T \) by an  
\( \epsilon = 0.5 \). 
This perturbation will create a the set of environments \( \Sigma = \sset{ \sigma}{ \mid 5.5 < \sigma(x_0) < 6.5 \wedge 2.2 < \sigma(x_1) < 3.2} \). 
To do this, we first define the abstract environment as \( \alphaE(\Sigma) = \abenv: \{ x_0 \to [5.5, 6.5], x_1 \to [2.2, 3.2] \} \).
Now we can execute our circuit abstractly.
We start from the initial abstract state $\init{\abste} =\{00 \to \tuple{[1,1],[0,0]}, (01,10,11) \to \tuple{[0,0],[0,0]}\}$, which represents the abstraction of the concrete initial state $\init{\psi}$ in \autoref{sec: vqc}.
The first step is to execute the abstract semantics of the encoding. 
We need to evaluate the abstract semantics of $\gateg{Rx}^{\tinyd{q_0}}[x_0]$ and $\gateg{Rx}^{\tinyd{q_1}}[x_1]$ whit $\abenv(x_0) = [5.5, 6.5]$ and $\abenv(x_1) = [2.2, 3.2]$, that corresponds to:  
\begin{equation*}
\abste_1 \!=\! \abSem{\gateg{Rx}^{\tinyd{q_1}}[x_1]}{\abenv} \circ \abSem{\gateg{Rx}^{\tinyd{q_0}}[x_0]}{\abenv}(\init{\abste}) \!=\!
\begin{Bmatrix}
00 \mapsto \tuple{[-0.454, 0.029], [0, 0]}\!\!\! & \!\!\!01 \mapsto \tuple{[0, 0], [-0.173, 0.049]}  \\
10 \mapsto \tuple{[0, 0], [0.824, 1.0]} & 11 \mapsto \tuple{[-0.382, 0.108], [0, 0]}
\end{Bmatrix}.
\end{equation*}

At this point, we proceed by applying the variational part of the circuit, which computes the classification.
As in the concrete case, this is obtained applying, in sequence, the following operations $\gateg{Ry}^{\tinyd{q_0}}_{-0.50}$, $\gateg{Ry}^{\tinyd{q_1}}_{0.99}$, $\gateg{CX}^{\tinyd{q_0,q_1}}$, $ \gateg{Ry}^{\tinyd{q_0}}_{3.27}$ and $\gateg{Ry}^{\tinyd{q_1}}_{-0.69}$.
The final state is thus obtained by computing:
\begin{equation}\label{ex:absstate}
    \begin{aligned}
    \abste_{\tinyd{4}} &= \abSem{\gateg{Ry}^{\tinyd{q_1}}_{-0.69}}{\abenv} \circ \abSem{\gateg{Ry}^{\tinyd{q_0}}_{3.27}}{\abenv} \circ \abSem{\gateg{CX}^{\tinyd{q_0,q_1}}}{\abenv} \circ \abSem{\gateg{Ry}^{\tinyd{q_1}}_{0.99}}{\abenv}\circ\abSem{\gateg{Ry}^{\tinyd{q_0}}_{-0.50}}{\abenv}(\abste_1) =\\
     & = \begin{Bmatrix} 
     00 \mapsto \tuple{[-0.159, 0.433], [-0.559, -0.355]} & 
     01 \mapsto \tuple{[-0.404, 0.184], [0.341, 0.544]}\\ 
     10 \mapsto \tuple{[-0.171, 0.328], [-0.074, 0.181]} & 
     11 \mapsto \tuple{[-0.086, 0.413], [0.54, 0.768]} \end{Bmatrix}.
\end{aligned}
\end{equation}

\noindent
The last step is the measurement. We compute $\abSem{\M}{}(\abste_4)$ obtaining the following abstract distribution: 
\begin{align}\label{ex:absdist}
    \absdist = \begin{Bmatrix} 00 \mapsto  [0.126, 0.50] & 01 \mapsto  [0.116, 0.460]\\ 10 \mapsto [0, 0.140] & 11 \mapsto  [0.291, 0.761] \end{Bmatrix}
\end{align}
%
As we see in \autoref{fig: VQC_example}, we consider only the measurement of $q_0$.
Thus, as in the concrete state we sum the probability intervals of $00$ and $10$, obtaining that the probability of measuring $0$ is $[0.126, 0.640]$ and we sum the probability of $01$ and $11$ obtaining that the probability of measuring $1$ ranges between $[0.407, 1,221]$.
The fact that the probability of measuring $1$ exceeds $1$ is due to the approximations of our abstraction.

\section{On the Precision of $\lang$ Abstract Semantics}
\label{sec: precision}
In the previous section, we have introduced a {\em sound} abstract semantics. 
Soundness means that the abstract computation may add imprecision, potentially over-approximating the concrete results. 
In our case, this manifests as computing intervals that are larger than those we would obtain by abstracting the result of a concrete execution.

\subsection{Sources of incompleteness}
In this section, we discuss where and how the abstract computation of a VQC loses precision with respect to the concrete one.
To better understand the source of imprecision, we appeal to the concept of \emph{completeness} \cite{cousot1977abstract,cousot1979systematic} of the abstraction.  
Analyzing completeness allows us to identify whether the loss of precision stems from the abstraction's loss of information or from the computations over the abstract domain.

\begin{lemma}
    Let $\func: \bR^n \to \bR$ be defined as $\func(\tuple{r_i}_{i=1}^n) = \sum_{i=1}^n r_i$, where $\tuple{r_i}_{i=1}^n$ denotes a tuples of $n$ reals.  
    The abstraction of $\func$ over the interval domain is not (globally) complete.
\end{lemma}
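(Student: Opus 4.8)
The plan is to exhibit a concrete counterexample: a specific family of sets $X_1,\dots,X_n\subseteq\bR$ for which the abstract computation of $\func$ over $\bRI$ strictly over-approximates the abstraction of the concrete image $\func$ applied to the product set. Recall that completeness over the interval domain means $\alphaR\bigl(\{\func(\tuple{r_i}_{i=1}^n)\mid r_i\in X_i\}\bigr)=\sum_{i=1}^n\alphaR(X_i)$, where on the right the sum is interval addition. I would take the simplest nontrivial instance, $n=2$, since the failure already shows up there and the general case follows by padding the remaining $X_i$ with singletons (which contribute degenerate intervals $[c,c]$ and do not affect the argument).

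First I would pick $X_1=X_2=\{0,1\}$ (two-point sets, not intervals). Then $\alphaR(X_1)=\alphaR(X_2)=[0,1]$, so the right-hand side is $[0,1]+[0,1]=[0,2]$, with concretization $\gammaR([0,2])=\{r\mid 0\le r\le 2\}$, an uncountable set. On the left-hand side, the concrete image is $\{0+0,\,0+1,\,1+0,\,1+1\}=\{0,1,2\}$, a three-element set; applying $\alphaR$ gives $[\min,\max]=[0,2]$ as the abstract value, but the point is more naturally phrased by comparing the concretizations, or equivalently by observing that the best correct approximation $\alphaR\comp\func\comp\gammaR$ applied to $\alphaR(X_1),\alphaR(X_2)$ also yields $[0,2]$ while the concrete $\alphaR(\func(X_1\times X_2))$ yields $[0,2]$ — so to make the inequality strict I must instead compare $\func$ on the original sets with $\func$ on their interval closures. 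The cleanest route is: completeness would require $\alphaR(\func(X_1\times\cdots\times X_n))$ to equal $\sum_i\alphaR(X_i)$ and also, by the remark after the definition of bca in Section~\ref{sec: abint}, the bca to be complete, i.e. $\alphaR\bigl(\func(\gammaR(I_1)\times\cdots\times\gammaR(I_n))\bigr)$ to recover the exact image when the $I_j=\alphaR(X_j)$; global completeness in the sense of the lemma fails because $\func(X_1\times X_2)=\{0,1,2\}$ whose concretization-after-abstraction $\gammaR(\alphaR(\{0,1,2\}))=[0,2]\supsetneq\{0,1,2\}$, so no sound abstract function can return the exact set. Put differently, $\alphaP$/$\alphaR$ already loses the "holes" $\{0,1,2\}$ versus $[0,2]$, and there is no $\func^\sharp$ on intervals that is complete, because a complete one would force the bca to be complete, and the bca on $[0,1],[0,1]$ returns $[0,2]$ whereas we would need it to "know" the image is $\{0,1,2\}$.

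The key steps, in order, are: (1) unfold the definition of global completeness for $\func$ over $\bRI$; (2) instantiate with the two-point sets $X_1=X_2=\{0,1\}$ (and singletons elsewhere for general $n$); (3) compute both sides — the concrete image is the finite set $\{0,1,2\}$, while its abstraction-concretization is the interval $[0,2]$, which is strictly larger as a set; (4) invoke the fact recalled in Section~\ref{sec: abint} that if any complete $\func^\sharp$ exists then the bca is complete, and check directly that the bca $\alphaR\comp\func\comp\gammaR$ evaluated at $([0,1],[0,1])$ returns $[0,2]$, which cannot coincide with an abstraction of the non-convex set $\{0,1,2\}$; conclude that no complete abstract addition exists, i.e. $\func$ is not globally complete. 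The main obstacle — really the only subtlety — is being careful about what "incomplete" means here: since $\alphaR$ of a finite set is still an interval, the inequality $\alpha_o\comp\func \ne \func^\sharp\comp\alpha_i$ must be exhibited at the level of the sets fed in (two-point sets whose additive image has gaps), not by naively comparing $\alphaR$-values, and the argument hinges on the collapse of $\{0,1,2\}$ to $[0,2]$ under $\gammaR\comp\alphaR$ being irreparable by any sound abstract operation.
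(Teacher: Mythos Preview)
Your proposed counterexample does not work, and the attempted salvage conflates two different notions. With $X_1=X_2=\{0,1\}$ you correctly compute both sides: the concrete image is $\{0,1,2\}$ with $\alphaR(\{0,1,2\})=[0,2]$, and the abstract sum is $[0,1]+[0,1]=[0,2]$. These are \emph{equal}, so this is not a witness of incompleteness. In fact, interval addition is complete on any \emph{product set} $X_1\times\cdots\times X_n\in\wp(\bR^n)$: the Minkowski sum satisfies $\min(\sum_i X_i)=\sum_i\min X_i$ and $\max(\sum_i X_i)=\sum_i\max X_i$, so $\alphaR(\func(X_1\times\cdots\times X_n))=\sum_i\alphaR(X_i)$ always. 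Your framing of the input as a tuple of one-dimensional sets $X_i\subseteq\bR$ therefore cannot produce a counterexample.

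Your fallback argument --- that $\gammaR(\alphaR(\{0,1,2\}))=[0,2]\supsetneq\{0,1,2\}$ --- is a statement about the closure operator $\gammaR\comp\alphaR$, not about completeness of $\func^\sharp$. Completeness compares the \emph{abstract} values $\alpha_o\comp\func(X)$ and $\func^\sharp\comp\alpha_i(X)$; both live in $\bRI$, and in your example both equal $[0,2]$. The fact that the concrete image has ``holes'' is irrelevant once $\alphaR$ has been applied. The paper's proof instead works over $\wp(\bR^n)$ (sets of tuples, not tuples of sets) and chooses a \emph{non-product} set --- the unit sphere $X=\{\tuple{r_i}\mid\sum r_i^2=1\}$ --- whose coordinate projections give $\alphaRn(X)=\tuple{[-1,1]}_i$, hence $\func^\sharp(\alphaRn(X))=[-n,n]$, while Cauchy--Schwarz bounds the true image by $[-\sqrt{n},\sqrt{n}]$. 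The dependency between coordinates (lost by the non-relational abstraction $\alphaRn$) is precisely what drives the incompleteness, and your product-set setup erases that dependency from the outset.
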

\begin{proof}
    We proceeded to find counterexamples for completeness.
    Let us recall that $\func^\sharp: \bRI^n \to \bRI$ and it is defined using the interval sum.
    Let $X \in \wp(\bR^n)$ be a set of tuples, we write $X_j$ as the set $\sset{r_j}{\forall \tuple{r_i}_{i=1}^n \in X.\; r_j \in \tuple{r_i}_{i=1}^n}$, i.e., the all set elements in position $j$ in the tuples in $R$.
    We call $\alphaRn: \wp(\bR^n) \to \bRI^n$ the abstraction from a set of tuples to an interval tuple defined as: $\alphaRn(X) = \tuple{\alphaR(X_j)}_{j=1}^n$.
    Let $X = \sset{\tuple{r_i}_{i=1}^n}{\sum_{i=1}^n r_i^2 = 1}$ a set of normalized reals tuples such that $\opint{X} = \alphaRn(X) = \tuple{[-1,1]}_{j=1}^n$.
    Thus $\func^\sharp(\opint{X}) = \sum_{i=1}^n [-1, 1] = [-n, n]$.
    On the other hand, the concrete image of $X$ through $\func$ is: $\func(X) = \sset{ \sum_{i=1}^n r_i}{\tuple{r_i}_{i=1}^n \in X}$, and by the Cauchy-Schwarz inequality and the normalization constraint: $\left| \sum_{i=1}^n r_i \right| \leq \sqrt{n} \cdot \left( \sum_{i=1}^n r_i^2 \right)^{1/2} = \sqrt{n}$.
    Hence: $\alphaR(\func(X)) \leq_{\tinyd{\bRI}} [-\sqrt{n}, \sqrt{n}] <_{\tinyd{\bRI}} [-n, n] = \func^\sharp(\alphaRn(X))$ i.e., the abstraction is not complete.
\end{proof}

From the previous lemma, we derive a more general result. The counterexample was based on tuples of normalized real numbers, i.e., $\sum r_i^2 = 1$.
As shown in \autoref{eq: U sem} and \autoref{eq: Up-sem}, the semantics of our quantum operators rely on sums of products of complex numbers. 
Since addition and multiplication over $\bC$ decompose into sum of real numbers, the semantics reduces to real-valued functions over tuples of normalized values,\footnote{Both state vectors $\psi$ and matrix coefficients are normalized, due to unitarity and state normalization constraints.} and by the previous lemma, incompleteness arises in such settings.

\begin{theorem}[Incompleteness]\label{th: incompleteness}
    The abstract semantics $\abSem{\cdot}{}$ of $\lang$ is not complete.
\end{theorem}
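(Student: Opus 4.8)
The plan is to exhibit a concrete VQC $\qnn \in \lang$, together with a set of input environments and input states, for which the abstract semantics $\abSem{\qnn}{\cdot}$ strictly over-approximates the abstraction of the concrete collecting semantics $\csSem{\qnn}$. By \autoref{th:sound} soundness already gives one inequality, so it suffices to rule out equality on a single instance. Since the two preceding results (the \textbf{Lemma} on the non-completeness of interval sum on normalized tuples, and the remark that the semantics of $\U$ and $\Up$ decompose into real sums of products of normalized values) have done the analytical work, the only remaining task is to package a counterexample into the VQC language.

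First I would take the simplest nontrivial circuit: a single qubit, a single parametric rotation (say $\code{Ry}_{\tinyd{w}}^{\tinyd{q}}$ for a suitable constant $w$), followed by $\M$ — i.e.\ $\qnn := \code{Ry}_{\tinyd{w}}^{\tinyd{q}};\M$. Then I would choose an abstract input state $\abste$ that is \emph{not} the abstraction of a single concrete normalized state but of a \emph{set} of them — concretely, take $\Psi \subseteq \DomP$ to be (a finite subset of) the normalized states $\{0 \mapsto a,\ 1 \mapsto b\}$ with $a^2+b^2=1$, $a,b\in\bR$, so that $\alphaP(\Psi) = \abste$ with $\abste(0) = \abste(1) = \tuple{[-1,1],[0,0]}$. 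Applying $\abSem{\code{Ry}_{\tinyd{w}}^{\tinyd{q}}}{\abenv}$ to $\abste$ computes, for each basis element, an interval sum $u_{0,0}\cdot\abste(0) + u_{0,1}\cdot\abste(1)$ exactly of the form analyzed in the Lemma; for $w$ chosen so that the rotation mixes the two amplitudes with comparable weights (e.g.\ $w = \sfrac{\pi}{2}$, giving entries $\pm\sfrac{1}{\sqrt 2}$), the resulting amplitude interval is strictly wider than $\alphaC$ of the true image $\{u_{0,0}a + u_{0,1}b \mid a^2+b^2=1\}$, which by Cauchy–Schwarz is bounded in modulus by $\sqrt{u_{0,0}^2 + u_{0,1}^2} = 1 < \sqrt 2 \cdot \max(|u_{0,0}|,|u_{0,1}|)$-scale. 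Squaring via the abstract measurement $\abSem{\M}{}$ then yields a probability interval strictly containing $\alphaR$ of the concrete distribution set, so $\alphaD(\csSem{\qnn}(\Psi)) <_{\tinyd{\bRI}} \abSem{\qnn}{\alphaE(\Sigma)}(\alphaP(\Psi))$ at that basis element, witnessing incompleteness. One can also simply reuse the worked example in \autoref{sec: vqc ab example}: there the probability of measuring $1$ came out as $[0.407, 1.221]$, which cannot equal any abstraction of a set of genuine probabilities (all in $[0,1]$), already demonstrating strict over-approximation.

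The only subtlety — and the place I would be most careful — is the matching of quantifiers in the definition of completeness: incompleteness of $\abSem{\cdot}{}$ means there exist $\Sigma,\Psi$ with $\alphaD(\csSem{\qnn}(\Psi)) \neq \abSem{\qnn}{\alphaE(\Sigma)}(\alphaP(\Psi))$, and one must ensure the chosen $\Psi$ is a \emph{legal} set of VQC states (each element normalized), and that $\abste = \alphaP(\Psi)$ really is the interval domain element fed to the abstract semantics — i.e.\ the imprecision is genuinely introduced by abstract computation and not hidden in an already-lossy input abstraction. This is handled by taking $\Psi$ to be the full circle of real normalized one-qubit states (or a dense finite sample of it), whose abstraction is exactly $\tuple{[-1,1],[0,0]}$ on each coordinate, and then invoking the Lemma to conclude that the interval sum strictly loses precision. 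A one-line wrap-up then states that since this single VQC already fails local completeness on this input, the abstract semantics $\abSem{\cdot}{}$ is not complete.
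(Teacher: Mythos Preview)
Your proposal is correct, but the counterexample you build differs from the paper's in where the ``spread'' is injected. The paper's Example~\ref{ex: incompletness} keeps $\Psi=\{\init{\psi}\}$ a singleton (the canonical $\ket{0}$) and generates the interval via an \emph{encoding} gate $\code{Rx}^{\tinyd{q}}[x]$ with an interval environment $\Sigma$, then applies a fixed parametric rotation and observes that a resulting amplitude component exceeds $1$ in modulus --- an impossibility for any concrete normalized state. You instead take $\Sigma$ trivial and put the spread in $\Psi$ (the real unit circle of one-qubit states), apply a single parametric $\code{Ry}_{\pi/2}$, and invoke the Cauchy--Schwarz Lemma directly to get $[-\sqrt 2,\sqrt 2]$ versus $[-1,1]$. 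Both are valid witnesses to incompleteness, since the soundness theorem quantifies over arbitrary $\Psi\in\wp(\DomP)$. The paper's choice is closer to the actual verification scenario (fixed initial state, perturbed classical input), which is why later sections on local completeness and precision recovery are phrased that way; your choice is shorter and ties more transparently to the preceding Lemma. Your second suggestion --- pointing to the probability $>1$ in \autoref{sec: vqc ab example} --- is essentially the paper's style of argument. One small omission: in your circuit $\qnn=\code{Ry}_{\tinyd{w}}^{\tinyd{q}};\M$ the environment plays no role, so you should state explicitly that $\Sigma$ is any nonempty set (e.g.\ a singleton), just so the collecting semantics is well-defined.
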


We show incompleteness by means of the following example.
\begin{example}\label{ex: incompletness}
    Let us consider the following circuit on one qubit $q$: $\nonter{s} := \code{Rx}^{\tinyd{q}}[x];\code{Rx}_{\sfrac{\pi}{2}}^{\tinyd{q}}$, on the set of inputs $\Sigma: \sset{\sigma}{\sigma(x) \in [\sfrac{\pi}{2} - 0.2, \sfrac{\pi}{2} + 0.2]}$.
    We start from initial state $\psi_{0} = \{0 \mapsto 1, 1 \mapsto 0\}$.
    First, we compute the abstract semantics of the first rotation gate:
    \[
     \abste_1 = \abSem{\code{Rx}^{\tinyd{q}}[x]}{\abenv}(\alphaP(\{\init{\psi}\}))
    = \{0 \mapsto ([0.51, 0.86], [0, 0]), 1 \mapsto ([0, 0], [-0.86, -0.51])\}.
    \]
    Then, applying the $\code{Rx}_{\sfrac{\pi}{2}}^{\tinyd{q}}$ rotation, we obtain:
    \[
    \abste_f = \abSem{\code{Rx}_{\sfrac{\pi}{2}}^{\tinyd{q}}}{\abenv}(\abste_1) = \{0 \mapsto ([-0.247, 0.247], [0, 0]), 1 \mapsto ([0, 0], [-1.216, -0.722])\}.
    \]
    We observe that abstraction introduces components with norms greater than one, since the imaginary component of the amplitude associated with basis state $1$ exceeds 1, clearly values that result from over-approximation noise.
    Thus this examples shows that, given $\init{\psi} = \{0 \to 1, 1 \to 0\}$, $\alphaC\circ\sSem[\Sigma]{\nonter{s}}(\{\init{\psi}\}) \dot{<}_{\tinyd{\code{CI}}} \abSem{\nonter{s}}{\alphaE(\Sigma)}(\alphaP(\{\init{\psi}\}))$, i.e. the abstract semantics is incomplete.
\end{example}
We conclude that the non-relational nature of the interval domain introduces significant over-approximation, especially in the quantum setting, where amplitudes and matrix coefficients are inherently dependent due to normalization and unitary constraints.

A visual example of how overapproximation arises in the 2D complex plane is shown in \autoref{fig: interval-2d}.  
\begin{figure}
    \centering
    \begin{subfigure}[c]{0.32\textwidth}
    \centering
    \includegraphics[trim=20cm 20cm 0em 0em, clip,width=.6\textwidth]{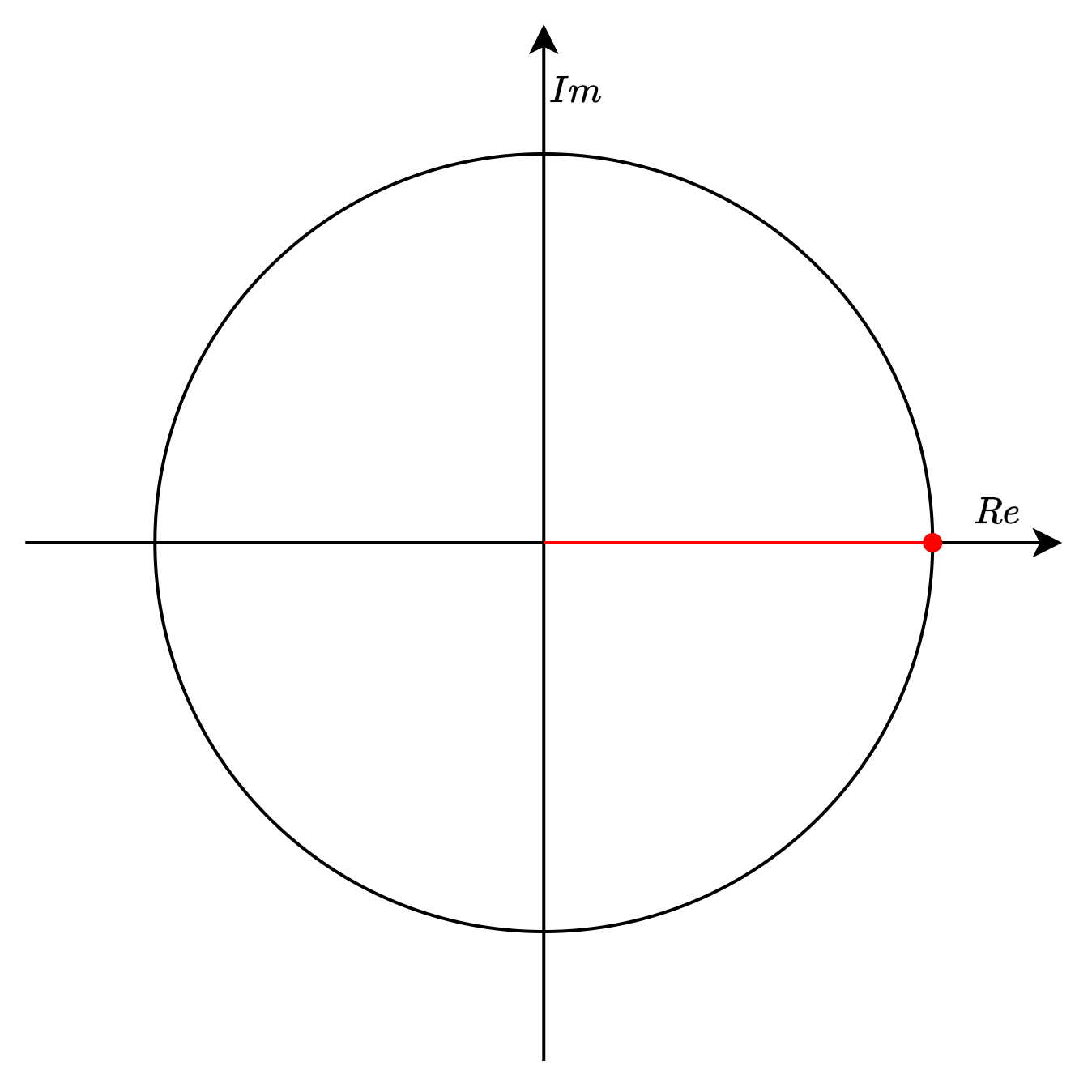}
    \caption{}\label{fig: init-state}
    \end{subfigure}
    \begin{subfigure}[c]{0.32\textwidth}
    \centering
    \includegraphics[trim=20cm 20cm 0em 0em, clip,width=.75\textwidth]{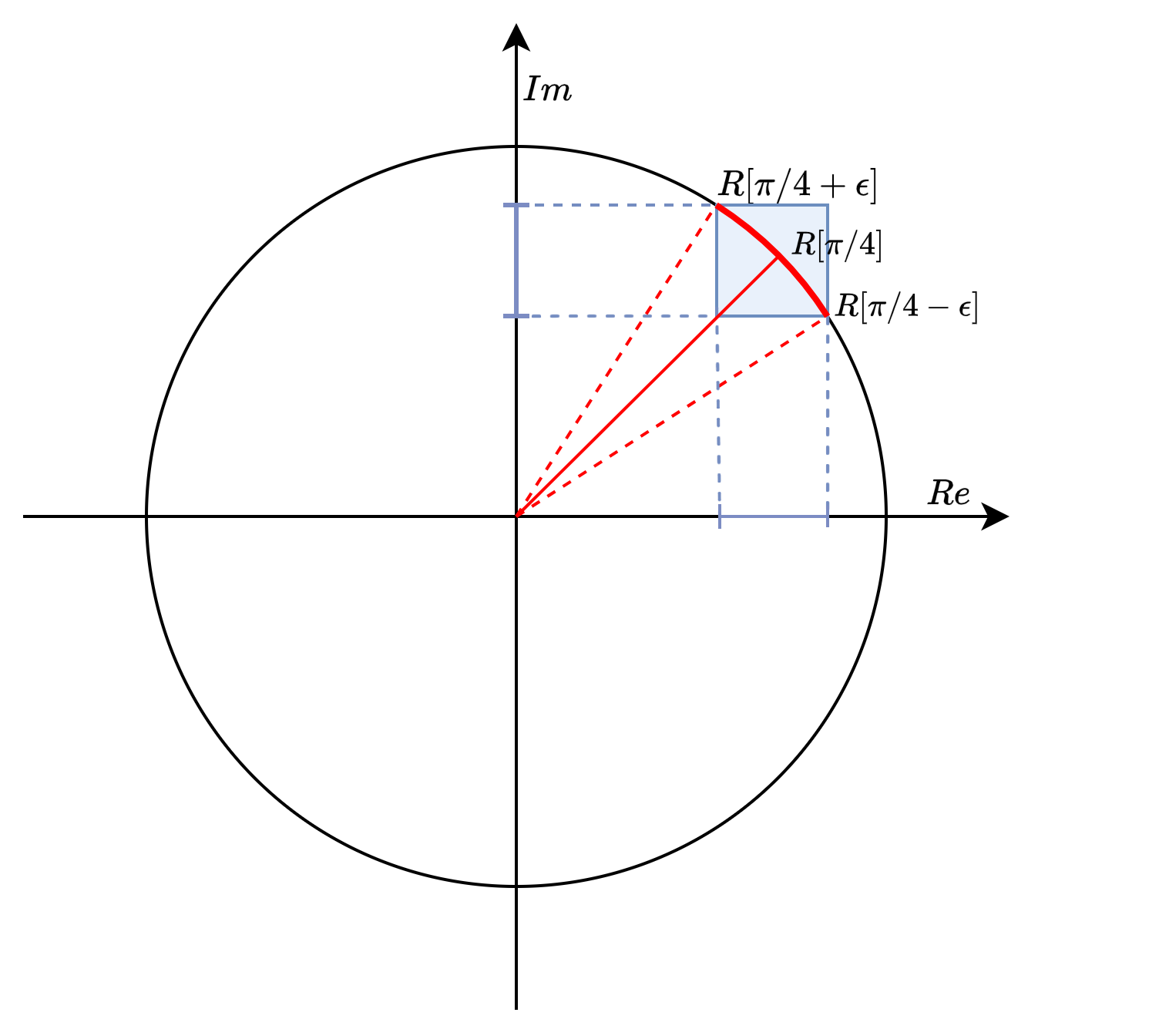}
    \caption{}\label{fig: interval-approx1}
    \end{subfigure}
    \begin{subfigure}[c]{0.32\textwidth}
    \includegraphics[trim=10cm 20cm 0em 0em, clip,width=.75\textwidth]{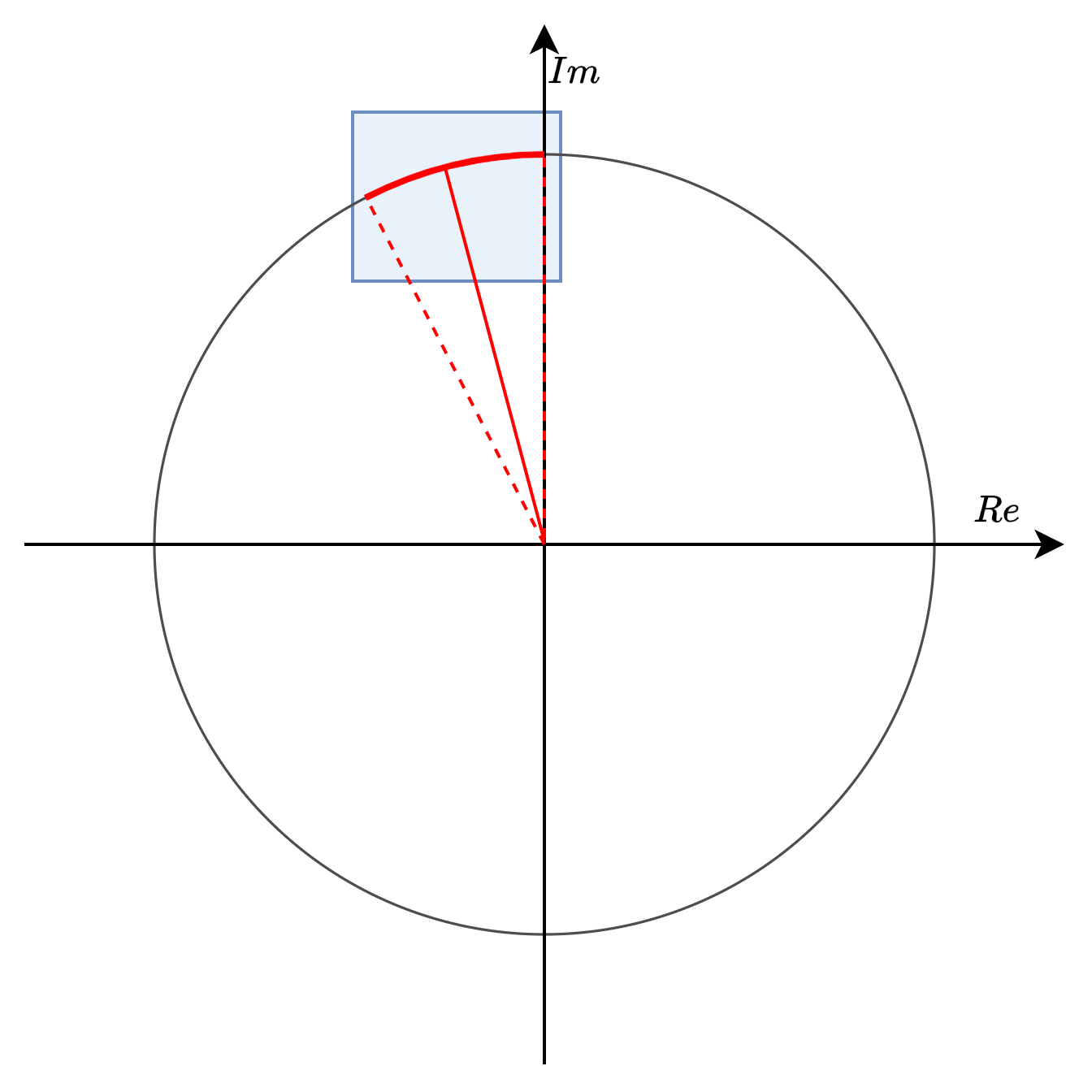}
    \caption{}\label{fig: interval-approx2}
    \end{subfigure}
    \caption{We represent a complex number as a point with real and imaginary parts on the $x$ and $y$ axes. In (a) we represent $z = \tuple{1,0}$; in (b) the results of computing the rotation $R[\theta](z)$ with $\theta \in [\frac{\pi}{4} - \epsilon,\, \frac{\pi}{4} + \epsilon]$ abstractly (blue box) and point wise (red arc); in (c) the results of the abstract and point wise application of $R[\sfrac{\pi}{3}]$ to the abstract and concrete state in (b) respectively. The results in the figures use $\epsilon \approx 0.2$.}\label{fig: interval-2d}
\end{figure}
We observe that applying a rotation to a point generates an arc on the unit circle.  
The interval abstraction, which treats each component independently, encloses this arc within a bounding box that may include points not present in the actual set.  
While the first abstraction in \autoref{fig: interval-approx1}, is complete (the box exactly contains the arc without including any additional points on the circumference), the situation changes after a second rotation (\autoref{fig: interval-approx2}): due to the extra points already introduced by the first abstraction, the new bounding box now includes not only the concrete points in the rotated arc but also additional, spurious points along the unit circle causing incompleteness.

\subsection{On the Completeness of the Abstract Semantics}\label{sec: completeness}
We have seen that completeness cannot hold for all possible inputs, i.e., it fails \emph{globally}~\cite{cousot1979systematic, GRS00}. 
However, we can still investigate whether completeness holds under certain conditions on the operators or by restricting the set of possible inputs.%
\footnote{When completeness is guaranteed only for specific inputs instead of universally, we refer to it as \emph{local completeness}~\cite{bruni2023correctness}.}
This perspective allows us to better understand when and where precision is lost, namely, which inputs and which computations are responsible.

As shown in \autoref{eq: U sem} and \autoref{eq: Up-sem}, the semantics of the statements involve sums of complex numbers that are linked by normalization constraints, which ultimately cause incompleteness. 
However, as we are about to show, by restricting either the type of operations or the structure of the inputs, these sums over complex numbers may simplify to single terms. 
This would eliminate the dependency issues discussed in the previous section and ensure completeness.

\begin{theorem}
    If $\U$ is a {\em generalized permutation}\footnote{A generalized permutation is a matrix with exactly one nonzero entry in each row and each column.}, $\abSem{\U}{}$ is complete.
\end{theorem}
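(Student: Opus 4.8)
The plan is to show that when $\U$ is a generalized permutation matrix, the sum in the abstract semantics $\abSem{\U}{}$ degenerates into a single term, which removes the inter-variable dependency responsible for incompleteness. Concretely, write $\U_{\tinyd{e,e'}}$ for the entries: by hypothesis, for each row index $e\in\bases$ there is exactly one column index $\pi(e)\in\bases$ with $\U_{\tinyd{e,\pi(e)}}\neq 0$, and $\pi:\bases\to\bases$ is a bijection (the ``permutation'' part), while $\U_{\tinyd{e,\pi(e)}}$ is a nonzero complex number (the ``generalized'' part, a phase times unit modulus by unitarity, though we only need that it is a single fixed scalar). Then both the concrete semantics of \autoref{eq: U sem} and the abstract semantics of \autoref{def: abssem}(1) collapse: $\sSem{\U}(\psi) = \lambda e.\,\U_{\tinyd{e,\pi(e)}}\cdot\psi(\pi(e))$ and $\abSem{\U}{\abenv}(\abste) = \lambda e.\,\U_{\tinyd{e,\pi(e)}}\cdot\abste(\pi(e))$.

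Next I would verify completeness directly against the definition, i.e.\ show $\alphaP\comp\csSem{\U}(\Psi) = \abSem{\U}{}\comp\alphaP(\Psi)$ for every $\Psi\in\wp(\DomP)$ (the environment plays no role since $\U$ is a parametric, not encoding, operation). Evaluating the left-hand side at a basis element $e$ gives $\alphaC(\sset{\U_{\tinyd{e,\pi(e)}}\cdot\psi(\pi(e))}{\psi\in\Psi})$, while the right-hand side at $e$ gives $\U_{\tinyd{e,\pi(e)}}\cdot\alphaC(\sset{\psi(\pi(e))}{\psi\in\Psi})$. So the whole claim reduces to the statement that multiplication by a fixed complex constant $c$ commutes with $\alphaC$, i.e.\ $\alphaC(\sset{c\cdot z}{z\in Z}) = c\cdot\alphaC(Z)$ for all $Z\in\wp(\bC)$, where on the right $c$ is identified with the complex interval $\tuple{\alphaR(\{\Re c\}),\alphaR(\{\Im c\})}$ and $\cdot$ is complex-interval multiplication. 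This is the single-variable (unary) case, where interval arithmetic is exact: scaling a box $\tuple{[a,b],[c',d']}$ by a fixed complex number is an affine (rotation-plus-scaling) map of the plane composed coordinatewise, and since there is no repeated occurrence of an interval variable the best correct approximation and the computed result coincide. I would prove this as a small lemma by splitting $c = c_r + \ii c_i$ and checking that $\alphaR$ commutes with the four real affine maps $z\mapsto c_r\Re z$, $z\mapsto -c_i\Im z$, etc., and that $\alphaR$ commutes with interval sum when the two summands come from the same underlying set (again no dependency problem because we are abstracting a single function of one variable — this is exactly the degenerate, complete case excluded from the counterexample of the preceding lemma).

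Finally I would assemble: completeness of the single-qubit/controlled gate layer follows pointwise over $\bases$ from the constant-multiplication lemma, and then by \autoref{def: abssem}(3) completeness is preserved under sequential composition since a composition of complete abstract functions is complete (standard abstract-interpretation fact, e.g.\ \cite{GRS00}). The main obstacle, and the only place requiring genuine care, is the constant-multiplication lemma: one must be careful that complex-interval multiplication as defined in the paper (coordinatewise pair arithmetic built from real-interval $+$ and $\cdot$) really is exact when one factor is a degenerate interval (a point). A subtle point is that interval multiplication $[a,b]\cdot[m,m]$ is exact but $[a,b]\cdot[a,b]$ is not; here we are always in the former situation because $\U_{\tinyd{e,\pi(e)}}$ is a program constant, so each coordinate computation is multiplication of a genuine interval by a singleton followed by one addition of two intervals that partition the image — I would make explicit that no interval variable occurs twice, which is precisely the syntactic condition guaranteeing exactness of interval arithmetic and hence completeness.
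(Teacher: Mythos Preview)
Your overall strategy mirrors the paper's: use that a generalized permutation has a single nonzero entry per row to collapse both semantics to $e\mapsto c_e\cdot\psi(\pi(e))$ for a fixed scalar $c_e=\U_{\tinyd{e,\pi(e)}}$, then argue that multiplication by a fixed complex constant commutes with $\alphaC$. The paper dispatches this last step in one line (``since $\U_{\tinyd{e,u_e}}$ is a number, the equality always holds''); you, more carefully, isolate it as the key lemma and try to justify it via the single-occurrence criterion for exactness of interval arithmetic.

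That justification has a gap. The single-occurrence criterion guarantees exactness on \emph{box} inputs, i.e.\ it yields $c\cdot\alphaC(Z)=\alphaC\bigl(c\cdot\gammaC(\alphaC(Z))\bigr)$; it does not yield the completeness identity $c\cdot\alphaC(Z)=\alphaC(c\cdot Z)$ for arbitrary $Z\in\wp(\bC)$. When both $\Re c\neq 0$ and $\Im c\neq 0$, multiplication by $c$ mixes the real and imaginary axes, so the correlation between $\Re z$ and $\Im z$ over $z\in Z$---which $\alphaC$ discards---matters. Concretely, with $c=(1-i)/\sqrt{2}$ (the $(0,0)$ entry of $\code{Rz}_{\pi/2}$, a generalized permutation) and $Z=\{0,(1+i)/\sqrt{2}\}$ one gets $\alphaC(cZ)=\tuple{[0,1],[0,0]}$ but $c\cdot\alphaC(Z)=\tuple{[0,1],[-\tfrac12,\tfrac12]}$; realising $Z$ as $\{\psi(0)\mid\psi\in\Psi\}$ with $\Psi=\{\,\{0\mapsto 0,\,1\mapsto 1\},\ \{0\mapsto(1+i)/\sqrt 2,\,1\mapsto 0\}\,\}$ gives a direct counterexample to completeness of $\abSem{\code{Rz}_{\pi/2}}{}$. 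Your argument that ``no interval variable occurs twice'' is about the two \emph{interval} variables $\Re(\abste(e))$ and $\Im(\abste(e))$, but those abstract a single \emph{concrete} variable $z$, so the dependency is still there. The lemma, and with it the theorem, does go through when every nonzero entry of $\U$ is real or purely imaginary (then $z\mapsto cz$ acts coordinatewise up to a swap and sign), which covers $\CN$, $\code{Rx}_{k\pi}$, $\code{Ry}_{k\pi}$, $\code{Rz}_{k\pi}$. The paper's proof shares this gap---your write-up simply makes the hidden step explicit enough to see where it breaks.
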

\begin{proof}
    Given a generalized permutation $\U$, for each row of the matrix, there is only one entry different from zero.
    We recall that given a basis $e \in \bases$, the entries  $\{\U_{\tinyd{e,e'}}\}_{\forall e' \in \bases}$ corresponds to one row of the matrix $\U$.
    Thus for each $e$ there is one base $u_e$ that corresponding to only non-zero entry in the row $e$, in other words, $\forall e\in\bases.\;\U_{\tinyd{e,e'}}\neq0$ iff $e'=u_e$.
    Thus, considering the equations in \autoref{eq: U sem}, if $\U$ is a generalized permutation its semantics is $\csSem{\U} = \lambda \psi . \left(\lambda e \in \bases . (\U_{\tinyd{e,u_e}} \cdot \psi(u_e)\right)$, where $u_e$ is the index of the only non-zero entry for each row $e$.
    Similarly, considering the \autoref{def: abssem}, we have $\ok{\abSem{\U}{\abenv} = \lambda \abste . \left(\lambda e \in \bases . (\U_{\tinyd{e,u_e}} \cdot \abste(u_e)\right)}$.
    According to these simplifications, $\abSem{\U}{}$ is complete, i.e., $\forall \Psi \in \wp(\DomP), \Sigma \in \wp(\DomSenv)$ $\alphaP(\csSem{\U}(\Psi)) = \abSem{\code{S}}{\alphaE(\Sigma)}(\alphaP(\Psi))$, if and only if $\alphaP(\sset{\lambda e \in \bases . (\U_{\tinyd{e,u_e})} \cdot \psi(u_e)}{\psi \in \Psi} = 
    \lambda e \in \bases . (\U_{\tinyd{e,u_e}} \cdot (\alphaP(\Psi))(u_e)$ that holds iff $\forall e \in \bases.\; \alphaC\left(\sset{ (\U_{\tinyd{e,u_e}} \cdot \psi(u_e))}{\psi \in \Psi}\right) 
    = \left( \U_{\tinyd{e,u_e}} \cdot \alphaC\left(\sset{\psi(u_e)}{\psi \in \Psi} \right)\right)$.
    Since $\U_{\tinyd{e,u_e}}$ is a number, the equality always holds.
\end{proof}
   
\begin{corollary}
    $\forall k \in \bN$, $\abSem{\CN}{}$, $\abSem{\code{Ry}_{k\pi}}{}$ $\abSem{\code{Rx}_{k\pi}}{}$ $\abSem{\code{Rz}_{k}}{}$, are complete.
\end{corollary}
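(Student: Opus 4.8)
The plan is to derive the corollary directly from the preceding theorem: since each of $\CN$, $\code{Ry}_{k\pi}$, $\code{Rx}_{k\pi}$, and $\code{Rz}_k$ is an instance of a parametric operation $\U$, it suffices to show that each of them, viewed as a matrix on the full Hilbert space $\SpaceP$, is a \emph{generalized permutation}, and then invoke completeness of $\abSem{\U}{}$ for generalized-permutation $\U$.

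First I would reduce the single-qubit gates to their $2\times 2$ matrices. Each of $\code{Ry}_{k\pi}^{\tinyd{q}}$, $\code{Rx}_{k\pi}^{\tinyd{q}}$, $\code{Rz}_k^{\tinyd{q}}$ acts on $\SpaceP$ as the tensor product of a $2\times 2$ matrix on $\Hilb_q$ with identities on the remaining qubits; since the identity is a (generalized) permutation and the tensor product of generalized permutations is again one — each row and column of $A\otimes B$ inherits a unique nonzero entry from the unique nonzero entries of the corresponding rows and columns of $A$ and $B$ — it is enough to check the property on the $2\times 2$ factor. For $\CN^{q_i,q_j}$ the full-space matrix is already a $0/1$ permutation matrix (it permutes the computational basis, flipping $q_j$ exactly on the subspace where $q_i=\ket{1}$), hence a generalized permutation, so the theorem applies with no further work.

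Then I would do a parity case split for the rotations. By \autoref{eq: Rx} (and the analogous form of $\code{Ry}$, with diagonal entries $\cos(\sfrac{\theta}{2})$ and off-diagonal entries $\pm\sin(\sfrac{\theta}{2})$), at angle $\theta=k\pi$ the relevant half-angle is $\sfrac{k\pi}{2}$. If $k$ is even then $\sin(\sfrac{k\pi}{2})=0$ and $\cos(\sfrac{k\pi}{2})=\pm 1$, so $\code{Rx}_{k\pi}$ and $\code{Ry}_{k\pi}$ are diagonal with nonzero diagonal — a generalized permutation. If $k$ is odd then $\cos(\sfrac{k\pi}{2})=0$ and $\sin(\sfrac{k\pi}{2})=\pm 1$, so they are anti-diagonal with nonzero anti-diagonal — again a generalized permutation. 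Finally $\code{Rz}_k$ corresponds to $\mathrm{diag}(e^{-\ii k/2},e^{\ii k/2})$, which is diagonal with nonzero entries for every $k\in\bN$ (indeed for every real angle, which is why this case needs no factor of $\pi$). Combining these observations with the preceding theorem yields the claim. There is no genuine obstacle here; the only points requiring a little care are that tensoring with identities preserves the generalized-permutation structure (so the single-qubit check suffices) and the parity split for $\code{Rx}$/$\code{Ry}$, together with the remark that $\code{Rz}$ is diagonal at every angle, which explains the $k\pi$-versus-$k$ asymmetry in the statement.
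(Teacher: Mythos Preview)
Your proposal is correct and follows exactly the route the paper intends: the corollary is stated immediately after the theorem on generalized permutations, and its (implicit) proof is precisely to observe that $\CN$, $\code{Rx}_{k\pi}$, $\code{Ry}_{k\pi}$, and $\code{Rz}_k$ are generalized permutations on the full space, which is what you verify. Your explicit treatment of the tensor-with-identity step and the parity split for the rotation gates simply spells out the details the paper leaves to the reader.
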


We now focus on the encoding operators $\Up$, which, unlike the parametric operators $\U$, depend on the environment and thus represent families of operators.
This dependency prevents us from characterizing their completeness by restricting the operator's structure alone.
Instead, we turn our attention to \emph{local completeness}, which can be studied by restricting the class of possible inputs.

\begin{theorem}
    $\abSem{\Up}{}$ is (local) complete on $\Sigma \in \wp(\DomSenv)$ and $\psi \in \DomP$ if $\exists u \in \bases.\; \psi(u) = 1 \wedge \forall e \neq u.\; \psi(e) = 0$, and $\gammaE\circ\alphaE(\Sigma) = \Sigma$ namely, for any $x$ the set $\sset{\sigma(x)}{\sigma \in \Sigma}$ is a closed interval.
\end{theorem}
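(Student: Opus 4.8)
The plan is to show that under the two hypotheses the sum defining $\sSem{\Up}$ in \autoref{eq: Up-sem} collapses to a single term, exactly as happens for generalized permutations, so that no inter-variable dependency can be introduced. Concretely, since $\psi$ has $\psi(u)=1$ and $\psi(e)=0$ for $e\neq u$, the concrete semantics of the encoding becomes $\sSem{\Up}(\psi) = \lambda e\in\bases.\:\grassto{\Up_{\tinyd{e,u}}}_\Stenv$, a single function of $x$ evaluated at $\Stenv(x)$, with no summation surviving. The same simplification applies in the abstract setting: by \autoref{def: abssem}(2), $\abSem{\Up}{\abenv}(\abste) = \lambda e\in\bases.\:\agrassto{\Up_{\tinyd{e,u}}}_{\tinyd{\abenv}}$, where $\agrassto{\Up_{\tinyd{e,u}}}_{\tinyd{\abenv}} = \alphaC\comp\Up_{\tinyd{e,u}}\comp\gammaC(\abenv(x))$ is by definition the best correct approximation of $\Up_{\tinyd{e,u}}$ on complex intervals.

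With these simplifications, local completeness reduces to proving, for each $e\in\bases$, that
\[
\alphaC\big(\sset{\grassto{\Up_{\tinyd{e,u}}}_\Stenv}{\Stenv\in\Sigma}\big) \;=\; \alphaC\comp\Up_{\tinyd{e,u}}\comp\gammaC\big(\alphaE(\Sigma)(x)\big),
\]
i.e.\ that the bca of the scalar function $\Up_{\tinyd{e,u}}$ is locally complete on the input set $\Sigma$. Here the hypothesis $\gammaE\comp\alphaE(\Sigma)=\Sigma$ is used: it says $\sset{\Stenv(x)}{\Stenv\in\Sigma}$ is already a closed interval $\gammaR(\alphaE(\Sigma)(x))$, so the left-hand side is $\alphaC\comp\Up_{\tinyd{e,u}}\comp\gammaR(\alphaE(\Sigma)(x))$ (lifted additively), which matches the right-hand side provided $\abSem{\cdot}{}$ computes exactly the bca on this single entry. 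Since $\Up_{\tinyd{e,u}}$ is $\pm\cos(x/2)$, $\pm\ii\sin(x/2)$, or $0$ depending on the rotation axis and the bit positions, and since the real/imaginary parts of the image of a closed real interval under such a function are themselves closed intervals, the interval-arithmetic evaluation of $\cos$ and $\sin$ on the interval $\abenv(x)$ returns exactly $\alphaR$ of the image — so $\agrassto{\Up_{\tinyd{e,u}}}_{\tinyd{\abenv}}$ coincides with the collecting abstraction. I would conclude by assembling these per-entry equalities through $\alphaP$, giving $\alphaP\comp\csSem{\Up}(\{\psi\}) = \abSem{\Up}{\alphaE(\Sigma)}(\alphaP(\{\psi\}))$ on the specified $\Sigma$ and $\psi$.

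The main obstacle I anticipate is the clean handling of the interval evaluation of the trigonometric entries: showing that $\agrassto{\Up_{\tinyd{e,u}}}_{\tinyd{\abenv}}$ — defined abstractly as the bca $\alphaC\comp f\comp\gammaC$ but computed, per the remark after \autoref{def: abssem}, "using the standard interval arithmetic" for sines and cosines — really returns the exact abstraction of the image rather than an over-approximation. This needs the fact that $\sin$ and $\cos$ (restricted to the relevant half-angle range) are continuous, hence map closed intervals to closed intervals, and that the standard interval extension of a monotone-on-pieces continuous function is exact on a single interval argument; the subtlety is that this exactness is a property of evaluating a \emph{single} entry (one function of one variable), and it is precisely the absence of any summation — guaranteed by the basis-vector hypothesis on $\psi$ — that keeps us in this exact regime. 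Everything else (the collapse of the sum, the role of $\gammaE\comp\alphaE(\Sigma)=\Sigma$, the lifting through $\alphaP$) is routine and parallels the generalized-permutation argument already given.
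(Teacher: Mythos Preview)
Your proposal is correct and follows essentially the same route as the paper: collapse the sum in \autoref{eq: Up-sem} to a single term using the basis-vector hypothesis on $\psi$, reduce local completeness to the per-entry equality $\alphaC(\sset{\grassto{\Up_{\tinyd{e,u}}}_\Stenv}{\Stenv\in\Sigma}) = \agrassto{\Up_{\tinyd{e,u}}}_{\tinyd{\abenv}}$, and conclude via the hypothesis $\gammaE\comp\alphaE(\Sigma)=\Sigma$.

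One remark: the ``main obstacle'' you anticipate---whether the interval-arithmetic evaluation of $\sin$ and $\cos$ actually matches the bca---is not needed, and the paper does not engage with it. Since $\agrassto{\func[x]}_{\tinyd{\abenv}}$ is \emph{defined} in \autoref{sec: abssem} as the bca $\alphaC\comp\func\comp\gammaC(\abenv(x))$, the final step is the purely abstract-interpretation fact that any bca is locally complete on inputs fixed by $\gamma\comp\alpha$: if $\gammaE\comp\alphaE(\Sigma)=\Sigma$ then $\alphaC\comp\func(\Sigma_x) = \alphaC\comp\func\comp\gammaR\comp\alphaR(\Sigma_x)$ trivially. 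Your detour through continuity and monotonicity of trigonometric functions is an implementation-level argument (that the remark after \autoref{def: abssem} is justified), orthogonal to the completeness proof itself.
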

\begin{proof}
    Let $\abenv = \alphaE(\Sigma)$.
    As before, by simplifying \autoref{eq: Up-sem} and \autoref{def: abssem}, we can say that $\abSem{\code{S}}{\abenv}(\alphaP(\psi))$ is complete if and only if $\forall e \in \bases.$ $\alphaC\left(\sset{\grassto{\Up_{\tinyd{e,u}}}_{\tinyd{\Sigma}} \cdot \psi(u)}{\sigma \in \Sigma} \right) = \agrassto{\Up_{\tinyd{e,u}}}_{\tinyd{\abenv}} \cdot \alphaC(\psi(u))$.
    Since $\psi(u) = 1$, the equation above holds if and only if $\alphaC\left(\sset{ \grassto{\Up_{\tinyd{e,u}}}_{\tinyd{\Sigma}} }{\sigma \in \Sigma} \right) = \agrassto{\Up_{\tinyd{e,u}}}_{\tinyd{\abenv}},$ i.e., $\agrassto{\Up_{\tinyd{e,u}}}_{\tinyd{\abenv}}$ is complete.
    From \autoref{sec: abssem}, we know that $\agrassto{\Up_{\tinyd{e,u}}}_{\tinyd{\abenv}}$ is defined as the $bca$ of $\grassto{\Up_{\tinyd{e,u}}}_{\tinyd{\Sigma}}$. Therefore, if $\gammaE \circ \alphaE(\Sigma) = \Sigma$, then $\agrassto{\Up_{\tinyd{e,u}}}_{\tinyd{\abenv}}$ is trivially complete~\cite{cousot1979systematic,GRS00}.
\end{proof}

Finally, we consider when local completeness holds for the final measurement.

\begin{theorem} 
    If $\forall e \in \bases. \Re(\Psi_{\!\qnn}(e)) = 0 \vee \Im(\Psi_{\!\qnn}(e)) = 0$, $\abSem{\M}{}$
    is (local) complete on $\Psi$.
\end{theorem}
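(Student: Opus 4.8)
The plan is to exploit that the measurement semantics acts \emph{independently on each basis state}: the concrete $\sSem[]{\M}=\lambda\psi.(\lambda e\in\bases.\,|\psi(e)|^2)$ and the abstract $\abSem{\M}{}=\lambda\abste.(\lambda e\in\bases.\,|\abste(e)|^2)$ both produce, at each $e\in\bases$, a value depending only on the amplitude at $e$, and $\alphaP,\alphaD$ are themselves pointwise over $\bases$. So I would first reduce local completeness of $\abSem{\M}{}$ on $\Psi$ to the family of scalar identities, one for each $e\in\bases$,
\[
\alphaR\bigl(\sset{\,|\psi(e)|^2\,}{\psi\in\Psi}\bigr)\;=\;\bigl|\,\alphaC\bigl(\sset{\psi(e)}{\psi\in\Psi}\bigr)\,\bigr|^2 ,
\]
where on the right $|\tuple{R,I}|^2$ denotes the interval expression $R^2+I^2$ built from the square and the sum of interval arithmetic. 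Fixing $e$, I would set $S=\sset{\psi(e)}{\psi\in\Psi}\subseteq\bC$ and $\alphaC(S)=\tuple{R,I}$ with $R=\alphaR(\Re(S))$, $I=\alphaR(\Im(S))$, so the theorem reduces to proving this single identity for every $e$.

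Next I would unfold $|z|^2=\Re(z)^2+\Im(z)^2$ and isolate the two ways the right-hand side can be strictly larger than the left: (i) the interval form treats $R$ and $I$ independently, so $\gammaC(R,I)$ may contain real/imaginary-part combinations realized by no element of $S$; and (ii) the interval square is itself not complete on an arbitrary set of reals. The hypothesis is exactly what neutralizes (i): by assumption one of $\Re(S),\Im(S)$ equals $\{0\}$; say $\Im(S)=\{0\}$ (the symmetric case is identical). Then $I=[0,0]$, $|\psi(e)|^2=\Re(\psi(e))^2$ for every $\psi\in\Psi$, and the target identity collapses to $\alphaR\bigl(\sset{\,\Re(\psi(e))^2\,}{\psi\in\Psi}\bigr)=R^2$, i.e.\ to completeness of squaring on the real set $\Re(S)$.

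To close this last step I would use that the relevant component set is a closed interval, i.e.\ $\Re(S)=R=\interv{l}{u}$ --- the state-level counterpart of the condition $\gammaE\comp\alphaE(\Sigma)=\Sigma$ that the encoding-operator theorem requires. A closed real interval lies entirely in $[0,+\infty)$, lies entirely in $(-\infty,0]$, or contains $0$; in each case, by monotonicity of $t\mapsto t^2$ on each half-line together with the minimum $0$ being attained whenever $0\in\interv{l}{u}$, the direct image $\sset{t^2}{t\in\interv{l}{u}}$ equals the interval-arithmetic square $\interv{l}{u}^2$, which is already closed and hence its own $\alphaR$. Ranging over $e\in\bases$ gives the claim.

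I expect step (ii) to be the delicate point. First, it must be checked that $\abSem{\M}{}$ implements the \emph{exact range square} $\interv{l}{u}\mapsto\interv{l}{u}^2$ and not the generic product $\interv{l}{u}\cdot\interv{l}{u}$, which is not even sound once $0$ is interior to $\interv{l}{u}$. Second, the interval shape of the component set is genuinely needed and is \emph{not} implied by the real/imaginary hypothesis alone: taking $S=\{-1/2,\,1/2\}\subseteq\bR$ gives $\alphaR(\sset{t^2}{t\in S})=\interv{1/4}{1/4}$ but $R^2=\interv{-1/2}{1/2}^{2}=\interv{0}{1/4}$, which is precisely why the statement is local. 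The remaining ingredients --- the pointwise reduction, the unfolding of $|z|^2$, and the collapse of the sum when $I=[0,0]$ --- are routine.
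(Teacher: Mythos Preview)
Your approach matches the paper's: both reduce to the observation that under the hypothesis $|\psi(e)|^2$ collapses to the square of a single real number at each basis state, so local completeness of $\abSem{\M}{}$ follows from completeness of real squaring. The paper's proof stops there, asserting that ``the abstraction of the square function is always complete in real interval arithmetic by definition.''

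You go further, and rightly so. Your counterexample $S=\{-1/2,1/2\}$ shows that squaring is \emph{not} complete on arbitrary subsets of $\bR$: $\alphaR(\sset{t^2}{t\in S})=[1/4,1/4]$ while $(\alphaR(S))^2=[0,1/4]$. This is a genuine gap that the paper's one-line proof elides---the real/imaginary-zero hypothesis alone does not force the component sets $\sset{\psi(e)}{\psi\in\Psi}$ to be intervals, and without that the conclusion can fail (e.g.\ take a one-qubit $\Psi=\{\psi_1,\psi_2\}$ with $\psi_i(0)=\pm 1/2$, $\psi_i(1)=\sqrt{3}/2$). Your remedy---adding the interval-shape assumption $\Re(S)=\gammaR\comp\alphaR(\Re(S))$, by analogy with the $\gammaE\comp\alphaE(\Sigma)=\Sigma$ hypothesis in the encoding-operator theorem---is exactly what is needed, and your case analysis on the sign of the endpoints then correctly establishes $\alphaR(\sset{t^2}{t\in[l,u]})=[l,u]^2$. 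One small quibble: ``precisely why the statement is local'' slightly misreads the terminology---\emph{local} here means completeness for a fixed input $\Psi$, not that $\Psi$ has interval-shaped components---but your mathematical point is sound and your proof is more careful than the paper's.
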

\begin{proof}
    By definition, if for all $\psi \in \Psi$ and for all bases $e$, $\Re(\Psi_{\!\qnn}(e)) = 0 \vee \Im(\Psi_{\!\qnn}(e)) = 0$, the measurement operators simply compute the square of a real numbers.
    The abstraction of the square function is always complete in real interval arithmetic by definition~\cite{intervalbook}.
\end{proof}

\section{(Abstract) VQC-based classifier}\label{sec: vqc classifier}
Similarly to what happens when dealing with NNs, VQCs are primarily employed for data classification tasks~\cite{wendlinger2024comparative}. 
In the previous section, we discussed how to abstract the execution of a VQC when its input is represented as an interval. In this section, we exploit both the concrete and abstract semantics of VQC for modeling the final step of its employment as a classifier: the classification function, i.e., the function deputed to map the output of the quantum computation to a corresponding class label.

\subsection{VQC-based Classification}
A VQC returns a probability distribution over the set of possible results.
Crucially, for classification purposes, the output distribution of a VQC must be post-processed. As illustrated in the example in \autoref{sec: vqc}, classification is determined by the value of the first qubit. 
This means that, for instance, the probabilities of measuring $10$ and $00$ are aggregated into one class, while $11$ and $01$ are aggregated into another. More generally, the number of target classes is often smaller than the number of possible results, requiring a mapping from quantum outputs to class labels by summing the probabilities of grouped outcomes accordingly.

For a $\qnn\in\lang$, we introduce the function $\winner: \DomR \times \bigcup_{n\leq |\varset|}\varset^n \to \bigcup_{n\leq |\varset|}\{0,1\}^n$ that determines which class is the selected one, based on the outcome of the VQC execution.
Let $\rho \in \DomR$ be the resulting distribution of a VQC, and let $\arrw{q} \in \varset^n$ be a tuple of $n$ qubits, we define: 
$$
\winner(\rho, \arrw{q}) \defi \arg\max_{b \in \{0,1\}^{n}} \sum_{\substack{e \in \bases .  e[\arrw{q}] = b}} \rho(e),
$$
where $e[\arrw{q}]$ denotes the projection of the bitstring $e$ onto the positions in $\arrw{q}$.
\\
As an example, consider $\arrw{q} = \tuple{q_0}$, meaning that we are interested in the classification task observing the value of the single ($n=1$) qubit $q_0$. Let us consider the distribution resulting from the VQC in \autoref{sec: vqc}, i.e., $\rho = \{00 \mapsto 0.26,\ 01 \mapsto 0.21,\ 10 \mapsto 0.01,\ 11 \mapsto 0.52\}$, where the first bit corresponds to qubit $q_1$ and the second to $q_0$. In order to determine the selected class, we compute the aggregated probabilities: for $b = 0$, i.e., outcomes where $q_0 = 0$, the relevant bases are $00$ and $10$  ($00[\arrw{q}]=10[\arrw{q}]=0$), with $\rho(00) = 0.26$ and $\rho(10) = 0.01$, hence $\rho(10)+\rho(00) = 0.27$.
Similarly, for $b = 1$, i.e., outcomes where $q_0 = 1$, the relevant bases are $01$ and $11$  ($01[\arrw{q}]=11[\arrw{q}]=1$), with $\rho(01) = 0.21$ and $\rho(11) = 0.52$, so $\rho(01)+\rho(11) = 0.73$.
Since $0.73 > 0.27$, we have: $\winner(\rho, \tuple{q_0}) = 1$.
That is, the most likely class according to the classification rule based on $q_0$ is the one associated with $1$.

We can now define the semantics of the whole VQC classifier.
Without loss of generality, we assume that the execution of a VQC always starts from a specific initial state where all quantum variables are in the value $0$.
Thus, given $\qnn\in\lang$ ($|\varset|=n$), we define the initial state $\init{\psi}\in\DomP$ as:
\begin{equation}\label{eq: init concrete}
\init{\psi}\defi\lambda e\in\bases.\:\left \{
\begin{array}{ll}
     1& e=0^n \\
     0& \mbox{othewise}
\end{array}
\right.
\end{equation}
If qubits $\arrw{q} \in \varset^n$ ($n \leq |\varset|$) are the qubit encoding the classical dataset to be classified,
the semantics of the whole VQC classification algorithm can be modeled as $\winner(\sSem{\qnn}(\init{\psi}),\arrw{q})$. We abuse notation by using the function $\winner$ also to denote its additive lift to sets of distributions:
\begin{equation}\label{eq:concrete_class}
   \winner(\sSem[\Sigma]{\qnn}(\init{\psi}),\arrw{q})\defi\sset{\winner(\sSem{\qnn}(\init{\psi}),\arrw{q})}{\sigma\in\Sigma}. 
\end{equation}

An explicit definition of the classification functions is given in~\cite{schuld2020circuit}, where the authors define a classification function based on the measurement of a single qubit followed by classical post-processing with a tunable bias.
This bias, implemented purely at the classical level by adjusting the decision threshold, allows the model to learn more flexible decision boundaries without increasing the quantum circuit depth. 
For the sake of simplicity but without loss of generality, we omit such bias terms in this paper. Nonetheless, our framework is general enough to accommodate them.

\subsection{Abstract VQC-based classification}

The classification process described above, which assigns a unique class based on the output distribution, applies to concrete executions of the VQC. However, when abstracting the semantics of the VQC, the output is no longer a single probability distribution but an abstract distribution, where each standard basis vector is associated with an interval of possible probabilities. As a consequence, the aggregated probabilities for different classes may result in overlapping intervals, making it impossible to determine a unique selected class with certainty.

A first step toward managing the uncertainty introduced by abstraction is to return a {\em set} of potential selected classes, i.e., all classes that could be chosen given the overlapping probability intervals. This allows us to assess whether all possible classifications are acceptable, or a refinement of the abstraction is necessary to reduce ambiguity, as it has been formalized for verifying abstract robustness in NN~\cite{GMP24,marzari2025advancing}.

Hence, we first abstract the $\winner$ function in order to select only the intervals with a lower bound greater than any other interval distribution. Specifically, $\awinner: \absdist \times \bigcup_{n\leq |\varset|}\varset^n \to \bigcup_{n\leq |\varset|}\wp(\{0,1\}^n)$ selects the set of potential classes with higher probability. 
Formally, given
$\arrw{q}\in\varset^n$
\[
\awinner(\absdist,\arrw{q}) \defi \left \{b \in \{0,1\}^{n} ~\left |~\nexists b'\in \{0,1\}^{n} . \sum_{\substack{e \in \bases.e[\arrw{q}] = b}}\absdist(e)<_{\tinyd{\bRI}} \sum_{\substack{e \in \bases .  e[\arrw{q}] = b'}}\absdist(e)\right.\right\},
\]
where the summation $\sum$ is computed according to the real interval arithmetic. 
Since we are working with abstract distributions, we may have an overlap of classes that can be selected. In fact, we note that the image of the function is in $\wp(\{0,1\}^n)$.  
To see this on an example, let us consider the following abstract distribution: $\absdist = \{000 \to [0.2, 0.46], 001 \to [0, 0.2], 010 \to [0.1, 0.03], 101 \to [0.11, 0.6], 111 \to [0.1, 0.6]\}$ on 3 qubits $\tuple{q_2,q_1,q_0}$ standard basis vectors, where we omit the ones mapped to the singleton $[0,0]$.
Let $\arrw{q} = \tuple{q_2, q_0}$, we first compute $\sum_{\substack{e \in \bases .  e[\arrw{q}] = b}} \absdist(e)$ and we obtain: $[0.29, 0.56]$ for $00$, $[0, 0.2]$ for $01$, $[0.22, 1.2]$ for $11$ and $[0,0]$ for $10$.\footnote{For the sake of readability, we assume that the result represents a map from the qubits of interest to their values, i.e., if $\arrw{q} = \tuple{q_2, q_0}$ then $01$ stands for $q_2\mapsto 0$ and $q_0\mapsto 1$.} In this case, there is no single class whose lower bound exceeds the upper bounds of all other classes. However, we can observe that the configurations $\{01, 10\} \in \wp(\{0,1\}^2)$ have definitively lower probabilities than $\{00, 11\} \in \wp(\{0,1\}^2)$. Therefore, the function $\awinner$ will return the latter set of configurations, whose probability intervals overlap but are still greater than those of the others.

Hence, given $\arrw{q} \in \varset^n$ ($n \leq |\varset|$) consisting in the list of qubits to consider for the classification, and the abstract initial state $\init{\abste} = \alphaP(\{\init{\psi}\})$, the abstract semantics of the classification algorithm is:
\[
\awinner(\abSem{\qnn}{\abenv}(\init{\abste}), \arrw{q})
\]
whose task consists of computing such approximated classification, potentially returning a {\em set} of possible selected results. Then we can observe that abstract classification is also sound.

\begin{proposition}\label{prop: abs ver sound}
   $\forall \Sigma\in\wp(\DomSenv),\arrw{q}\in\varset^n\ (n\leq|\varset|).\:\winner(\sSem[\Sigma]{\qnn}(\init{\psi}),\arrw{q})\subseteq \awinner(\abSem{\qnn}{\alphaE(\Sigma)}(\init{\abste}),\arrw{q})$, i.e., the abstract classification is sound. 
\end{proposition}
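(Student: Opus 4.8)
The statement to prove is Proposition~\ref{prop: abs ver sound}: soundness of the abstract classifier, i.e., every concrete winning class is among the abstract candidate classes. The plan is to combine the soundness of the abstract VQC semantics (\autoref{th:sound}) with a ``soundness'' property of $\awinner$ relative to the additive lift of $\winner$. Concretely, I would first fix $\sigma\in\Sigma$ and let $\rho\defi\sSem{\qnn}(\init{\psi})$ be the corresponding concrete output distribution, and let $b\defi\winner(\rho,\arrw{q})\in\{0,1\}^n$ be the class it selects. I need to show $b\in\awinner(\abSem{\qnn}{\alphaE(\Sigma)}(\init{\abste}),\arrw{q})$.

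\textbf{Step 1: Lift soundness to distributions.} By \autoref{th:sound} applied to $\Psi=\{\init{\psi}\}$ (note $\init{\abste}=\alphaP(\{\init{\psi}\})$ and $\rho\in\csSem{\qnn}(\{\init{\psi}\})$), we get $\rho\in\gammaD(\absdist)$ where $\absdist\defi\abSem{\qnn}{\alphaE(\Sigma)}(\init{\abste})$; that is, $\rho(e)\in\gammaR(\absdist(e))$ for every $e\in\bases$. Then, because real interval arithmetic (in particular interval sum) is sound, for each class label $c\in\{0,1\}^n$ the concrete aggregated probability $\sum_{e:e[\arrw{q}]=c}\rho(e)$ lies inside the interval $\sum_{e:e[\arrw{q}]=c}\absdist(e)$. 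Write $S_c\defi\sum_{e:e[\arrw{q}]=c}\rho(e)\in\bR$ and $\opint{S}_c\defi\sum_{e:e[\arrw{q}]=c}\absdist(e)\in\bRI$, so $S_c\in\gammaR(\opint{S}_c)$.

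\textbf{Step 2: Relate the arg-max to the abstract selection.} Since $b=\arg\max_c S_c$, we have $S_b\geq S_{b'}$ for all $b'$. Now suppose, for contradiction, that $b\notin\awinner(\absdist,\arrw{q})$; by definition of $\awinner$ this means there exists $b'$ with $\opint{S}_b<_{\tinyd{\bRI}}\opint{S}_{b'}$, i.e., the upper bound of $\opint{S}_b$ is strictly below the lower bound of $\opint{S}_{b'}$. But then $S_b\leq\max(\opint{S}_b)<\min(\opint{S}_{b'})\leq S_{b'}$, contradicting $S_b\geq S_{b'}$. Hence $b\in\awinner(\absdist,\arrw{q})$. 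This holds for every $\sigma\in\Sigma$, and since by \autoref{eq:concrete_class} $\winner(\sSem[\Sigma]{\qnn}(\init{\psi}),\arrw{q})$ is exactly the set of such $b$'s, we conclude the claimed inclusion.

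\textbf{Main obstacle.} The delicate point is Step~2, specifically pinning down exactly what $\opint{S}_b<_{\tinyd{\bRI}}\opint{S}_{b'}$ means for non-degenerate intervals and checking that it is incompatible with $S_b\ge S_{b'}$ when $S_b\in\gammaR(\opint{S}_b)$ and $S_{b'}\in\gammaR(\opint{S}_{b'})$ --- i.e., that the order $<_{\tinyd{\bRI}}$ on intervals used to define $\awinner$ is the ``definitely-less-than'' order (upper bound of the first below the lower bound of the second), not mere interval inclusion. Once that reading is fixed, the argument is a short chain of inequalities; the only other thing to be careful about is handling the edge cases where some $\opint{S}_c$ is $\varnothing$ or unbounded, which cannot occur here since all amplitudes are finite and the aggregated probabilities are finite sums of squared-modulus intervals, so every $\opint{S}_c$ is a genuine bounded interval.
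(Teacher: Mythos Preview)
Your proof is correct and is precisely the natural argument one would expect; the paper itself does not spell out a proof for this proposition, treating it as an immediate consequence of the soundness theorem for the abstract semantics (\autoref{th:sound}) and the soundness of interval arithmetic. Your Step~1/Step~2 decomposition makes explicit exactly the two ingredients the paper leaves implicit: that the concrete aggregated probabilities land inside the abstract aggregated intervals (soundness of interval sum composed with \autoref{th:sound}), and that the ``definitely-less-than'' reading of $<_{\tinyd{\bRI}}$ in $\awinner$ is incompatible with the concrete $\arg\max$ witness.

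One remark worth keeping in your write-up: your observation in the ``main obstacle'' paragraph is genuinely important, because the paper overloads $\leq_{\tinyd{\bRI}}$. In the Galois-insertion propositions it means interval \emph{inclusion}, whereas in the definition of $\awinner$ the intended meaning (as the worked example with $\{00,11\}$ vs.\ $\{01,10\}$ makes clear) is the \emph{definitely-below} order on intervals. Your proof only goes through under the second reading, so flagging this disambiguation is not pedantry but essential to the argument's validity.
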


\subsection{Robustness Verification of (Abstract) VQC Classifier}
\label{sec: hardness}

In this section, we discuss the robustness verification problem, sometimes referred to as {\em stability}, of VQCs, i.e., the problem of verifying that perturbations of the input do not affect the output classification. The following formalization can be easily extended to the verification of other properties, such as safety, as in classical NN-Verification~\cite{liu2021algorithms}.

Specifically, we focus on input perturbations bounded by an $\ell_\infty$-norm ball centered at a nominal input $x \in \mathbb{R}^n$, formally defined as: $\mathcal{C}_\infty(x, \varepsilon) = \{ x' \in \mathbb{R}^n \mid \|x' - x\|_\infty \leq \epsilon \}.$ In practice, we test the circuit on abstract input intervals in the form $\ok{\Sigma_{\sigma,\epsilon} \defi \sset{\sigma'}{\vert\vert \sigma(x) - \sigma'(x) \vert \vert_\infty \leq \epsilon}}$, where $\sigma$ is the perturbated input (environment) and $\epsilon$ represents the noise applied to such input. 
We now formally define the robustness verification problem for VQC-based classifiers.

\begin{definition}[Robustness Verification of VQC (\textsc{RVVQC})]\label{def:vvqc}
  Let $\qnn\in\lang$, $\epsilon\in\bR$ the input perturbation, and $\arrw{q}\in\parset^n$ ($n\leq|\varset|$), the list of output observed qubits.
  We say that $\qnn$ is \textit{robust}, w.r.t.\ $\arrw{q}$ and $\epsilon$, (denoted $\qnn\models_{\arrw{q},\epsilon}\mbox{\textsc{RVVQC}}$) iff $\forall\sigma \in \DomSenv$
  \begin{equation}\label{eq:rvvqc}
      \winner(\sSem[\Sigma_{\sigma,\epsilon}]{\qnn}(\init{\psi}),\arrw{q}) \subseteq \{\winner(\sSem[\sigma]{\qnn}(\init{\psi}),\arrw{q})\}
  \end{equation}
 \end{definition}
 \noindent
 This definition tells us that a VQC classifier for $\qnn$ is robust whenever the set of resulting classes obtained by executing the VQC classifier on all the considered perturbations of the input $\sigma$ is the singleton set containing only the class obtained by executing the VQC classifier precisely on $\sigma$.

Now we have to understand the computability and complexity concerns of such VQC verification issues.
We can observe that, in the \textsc{RVVQC} problem, the input set (of environments) has the same cardinality as reals, being any possible 
association between (a finite set of) variables and reals
This makes, in general, the verification problem (but also the corresponding confutation problem) not computable. We can observe that this happens independently of the potential recursivity (i.e., decidability in finite time) of the property to verify on the single input. 
This means that, in order to computationally treat the verification problem, we need to restrict the input space to {\em computable real numbers with a fixed maximal precision} only (in practice, this is the set of floating-point numbers), namely those reals 
that can be computed, to within any desired precision, by a finite, terminating algorithm.\footnote{A computable number [is] one for which there is a Turing machine which, given n on its initial tape, terminates with the nth digit of that number [encoded on its tape]~\cite{Minsky67}.} 
This makes the input set space a countable and discrete subset of reals~\cite{Minsky67} and \autoref{eq:rvvqc} decidable, being $\Sigma_{\sigma,\epsilon}$ discrete and finite. In the following part of the paper, we consider as input environments only those defined by mapping variables to computable and discrete reals.

\begin{proposition}
Let $\qnn\in\lang$, $\epsilon\in\bR$, and $\arrw{q}\in\parset^n$ ($n\leq|\varset|$)
\begin{enumerate}
    \item $\gateg{CP}\defi\sset{\qnn}{\qnn \not\models_{\arrw{q},\epsilon} \mbox{\textsc{RVVQC}}}$ is computable (recursive enumerable).
    \item $\gateg{VP}\defi\sset{\qnn}{\qnn \models_{\arrw{q},\epsilon} \mbox{\textsc{RVVQC}}}$ is not computable;
\end{enumerate}
\end{proposition}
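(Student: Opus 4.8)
The plan is to treat the two parts separately, since the asymmetry between them is exactly the point. For part (1), I would argue that the confutation set $\gateg{CP}$ is recursively enumerable by exhibiting a semi-decision procedure. Given $\qnn$, the verifier must find a single input environment $\sigma$ (over the fixed-precision computable reals fixed earlier in this section) that witnesses a violation of \autoref{eq:rvvqc}: that is, some $\sigma$ for which there exists $\sigma' \in \Sigma_{\sigma,\epsilon}$ with $\winner(\sSem[\sigma']{\qnn}(\init{\psi}),\arrw{q}) \neq \winner(\sSem[\sigma]{\qnn}(\init{\psi}),\arrw{q})$. The key observations are that (a) the set of admissible $\sigma$ is countable and effectively enumerable (each is a finite map from $\parset$ to fixed-precision reals), (b) for any fixed such $\sigma$, the perturbation set $\Sigma_{\sigma,\epsilon}$ is \emph{finite} (as noted just before the proposition, discreteness plus boundedness of the $\ell_\infty$ ball), and (c) the concrete semantics $\sSem[\sigma]{\qnn}$ and the function $\winner$ are computable on such inputs — the semantics is a finite composition of matrix-vector products with entries that are sines/cosines of computable reals, and $\winner$ is an $\arg\max$ over finitely many finite sums. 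So the enumeration procedure is: dovetail over all admissible $\sigma$; for each, compute $\winner(\sSem[\sigma]{\qnn}(\init{\psi}),\arrw{q})$ and then, looping over the finitely many $\sigma' \in \Sigma_{\sigma,\epsilon}$, check for a mismatch; halt and accept as soon as one is found. This halts iff $\qnn \in \gateg{CP}$, giving recursive enumerability.

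For part (2), I would derive the non-computability of $\gateg{VP}$ from part (1) by a complementation argument. Note that $\gateg{VP}$ and $\gateg{CP}$ partition $\lang$: every $\qnn$ either satisfies $\models_{\arrw{q},\epsilon} \mbox{\textsc{RVVQC}}$ or does not, so $\gateg{VP} = \lang \setminus \gateg{CP}$, with $\lang$ itself a decidable (indeed context-free, effectively recognizable) set of syntactic objects. Hence if $\gateg{VP}$ were also recursively enumerable, then $\gateg{CP}$ would be both r.e.\ and co-r.e., hence decidable — and I would argue this is not the case. The cleanest route is a reduction: encode a $\Pi_1$-complete problem (e.g.\ the complement of the halting problem, or non-halting on all inputs of a suitable machine) into the question "is $\qnn$ robust for all $\sigma$?". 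Concretely, because the verification quantifies universally over a countably infinite input domain and the per-input check is a nontrivial predicate, one expects $\gateg{VP}$ to be genuinely $\Pi_1^0$ and not merely $\Sigma_1^0$; the universal quantifier over an infinite effectively-presented set cannot in general be eliminated.

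The main obstacle is making the second part rigorous without waving hands at "it's obviously $\Pi_1$". Part (1) is essentially bookkeeping: the only thing to be careful about is the exact-arithmetic issue — sines and cosines of computable reals are computable to arbitrary precision, but the $\arg\max$ in $\winner$ and the strict inequality comparisons could be ill-defined at ties; since we restrict to fixed-precision inputs and the circuit is fixed, one should argue that the relevant quantities are either exactly computable or that ties can be handled by a fixed tie-breaking convention consistent with the definition of $\winner$ (which already uses $\arg\max$, implicitly assuming a resolution). For part (2), the honest obstacle is that non-computability of a $\Pi_1$ set does not follow from r.e.-ness of its complement alone — one genuinely needs $\gateg{CP}$ to be \emph{non-recursive}, i.e.\ a reduction from a known non-recursive r.e.\ set (halting). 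So the real work is: given a Turing machine $M$, build a VQC $\qnn_M$ (using the encoding gates to simulate, say, a counter or the tape contents, exploiting that $\lang$ can express any quantum circuit and hence, via reversible simulation, any classical computation) such that $\qnn_M$ is non-robust at some input iff $M$ halts. I would sketch this reduction at the level of "classical reversible circuits simulate Turing machines and embed into $\lang$", then invoke Rice-style / halting-problem undecidability to conclude $\gateg{CP}$ is non-recursive, whence $\gateg{VP}$, being its complement within the decidable set $\lang$, is not r.e.\ and in particular not computable.

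\begin{proof}[Proof sketch]
\textbf{(1)} Fix $\qnn$, $\arrw{q}$, $\epsilon$. Since input environments range over maps from the finite set $\parset$ to the (countable, discrete) fixed-precision computable reals, the set of candidate environments is effectively enumerable $\sigma_1,\sigma_2,\dots$. For each $\sigma_i$, the set $\Sigma_{\sigma_i,\epsilon}$ is finite, and both $\sSem[\sigma]{\qnn}(\init{\psi})$ (a finite composition of matrix--vector products whose entries are values of sines and cosines at fixed-precision arguments) and $\winner(\cdot,\arrw{q})$ (an $\arg\max$ over finitely many finite interval-free sums) are computable. The procedure dovetails over $i$, computing $c_i \defi \winner(\sSem[\sigma_i]{\qnn}(\init{\psi}),\arrw{q})$ and then scanning the finitely many $\sigma'\in\Sigma_{\sigma_i,\epsilon}$ for one with $\winner(\sSem[\sigma']{\qnn}(\init{\psi}),\arrw{q}) \neq c_i$; it halts and accepts on the first such witness. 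This halts iff \autoref{eq:rvvqc} fails for some $\sigma$, i.e.\ iff $\qnn \not\models_{\arrw{q},\epsilon}\mbox{\textsc{RVVQC}}$, so $\gateg{CP}$ is recursively enumerable.

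\textbf{(2)} Since $\lang$ is a decidable set of syntactic objects and robustness either holds or fails for each $\qnn$, we have $\gateg{VP} = \lang \setminus \gateg{CP}$. Thus it suffices to show $\gateg{CP}$ is not recursive: were it recursive, $\gateg{VP}$ would be recursive too; and conversely, were $\gateg{VP}$ recursively enumerable, then $\gateg{CP}$ (being both r.e.\ and co-r.e.) would be recursive. We show $\gateg{CP}$ is non-recursive by reduction from the halting problem. Classical reversible circuits simulate Turing machines, and since $\lang$ can express any quantum circuit, a terminating classical computation can be compiled into a $\lang$ statement acting on basis states; using the encoding gates $\Up$ to inject the (discretized) input and a fresh output qubit toggled only on acceptance, one builds $\qnn_M$ from a machine $M$ so that $\qnn_M$ is non-robust for some input iff $M$ halts. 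Hence halting reduces to membership in $\gateg{CP}$, so $\gateg{CP}$ is non-recursive, and therefore $\gateg{VP} = \lang\setminus\gateg{CP}$ is not computable (indeed not even recursively enumerable).
\end{proof}
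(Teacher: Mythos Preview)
Your treatment of part (1) is essentially the paper's own argument, just made explicit: the paper's ``proof'' is the single paragraph of intuition immediately following the proposition, and its content for $\gateg{CP}$ is exactly dovetailing over the countable discrete input environments, relying on the finiteness of each $\Sigma_{\sigma,\epsilon}$ and the computability of the concrete semantics. You have filled in the bookkeeping (enumerability of environments, finiteness of the perturbation ball, computability of the matrix--vector products and of $\winner$) that the paper leaves implicit; this is correct and matches the paper.

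For part (2) the paper offers only the informal observation that verifying $\gateg{VP}$ ``requires executing the VQC on an infinite set of inputs''. You correctly recognise that this is not a proof---universal quantification over an infinite domain does not by itself imply undecidability---and you move to the right logical scaffolding: $\gateg{VP}=\lang\setminus\gateg{CP}$ with $\lang$ decidable, so it suffices to show $\gateg{CP}$ is not recursive, and for that one wants a reduction from halting. This is more than the paper attempts.

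However, your reduction sketch has a genuine gap. A program $\qnn_M\in\lang$ is a \emph{fixed finite} sequence of gates with a \emph{fixed finite} set $\parset$ of input variables, and the circuit's dependence on each $\sigma(x_i)$ factors through $\cos(\sigma(x_i)/2)$ and $\sin(\sigma(x_i)/2)$, which are $4\pi$-periodic. So the classification map $\sigma\mapsto\winner(\sSem[\sigma]{\qnn_M}(\init{\psi}),\arrw{q})$ is a fixed function on a finite-dimensional torus, determined entirely at construction time. Your proposed $\qnn_M$ would need to be ``non-robust at some input iff $M$ halts'', but a bounded-size reversible circuit can only simulate a bounded number of Turing-machine steps; there is no mechanism in $\lang$ by which the (finite-information, periodically-encoded) input $\sigma$ can drive an \emph{unbounded} simulation. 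The phrase ``classical reversible circuits simulate Turing machines'' is true only for time-bounded computations, and the encoding gates do not provide an escape hatch---their periodicity in fact collapses the apparently unbounded real input to a compact domain. So as written the reduction does not go through, and it is not clear how to repair it within $\lang$. The paper does not fill this gap either; it simply does not attempt a rigorous argument for part (2).
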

The intuition is that, for computing $\gateg{VP}$ we have to execute the VQC on an infinite (even if computable) set of inputs, while for computing $\gateg{CP}$ it is sufficient to find one input not satisfying \autoref{eq:rvvqc}, and if it exists, we can find it in finite time over a discrete space of inputs (e.g., by using a technique called {\em dovetail} for searching on the input space without diverging).
Trivially, if we restrict to a finite set of inputs, then both problems collapse to recursive problems.
Nevertheless, even when made decidable, providing an answer to the \textsc{RVVQC} problem requires testing the VQC on a potentially huge set of inputs (depending on the chosen precision in approximating reals), which is clearly hard. 

\begin{restatable}{theorem}{thsattovqc}\label{thm:3sat-to-vqc}
RVVQC is NP-hard.
\end{restatable}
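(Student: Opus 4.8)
The plan is to reduce from a known NP-hard problem, and the natural candidate is 3-SAT, as the theorem name \texttt{thsattovqc} suggests. Given a 3-SAT instance $\phi$ over variables $v_1,\dots,v_n$ with clauses $C_1,\dots,C_m$, I would construct in polynomial time a VQC $\qnn_\phi\in\lang$, a choice of observed qubits $\arrw{q}$, a nominal input $\sigma$, and a perturbation radius $\epsilon$, such that $\qnn_\phi$ fails \textsc{RVVQC} (equivalently, $\qnn_\phi\in\gateg{CP}$) if and only if $\phi$ is satisfiable. The key idea is to use the $\epsilon$-ball of classical inputs as the search space over truth assignments: pick $\epsilon$ and the encoding angles so that each input variable $x_i$, ranging over $[\sigma(x_i)-\epsilon,\sigma(x_i)+\epsilon]$, can be ``rounded'' by the circuit to one of two well-separated quantum configurations representing $v_i=\textsf{true}$ or $v_i=\textsf{false}$ (e.g.\ choosing angles that are integer multiples of $\pi$ at the two extremes, so that the encoding rotations map $\ket{0}$ to $\ket{0}$ or $\ket{1}$ up to global phase). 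At the nominal input $\sigma$ (say, all variables at their ``false'' extreme) the circuit outputs a fixed class $c_0$; the circuit is engineered so that some \emph{other} input in the ball produces a different class exactly when that input encodes a satisfying assignment.

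The main construction work is building the clause-checking part of the circuit out of the available gates $\{\code{Rx}_w,\code{Ry}_w,\code{Rz}_w,\CN\}$. Here I would exploit the paper's own remark (citing \cite{barenco95elementary}) that this gate set is universal, so any Boolean circuit computing $\phi$ can be simulated reversibly: introduce ancilla qubits initialized to $0^n$ (consistent with \autoref{eq: init concrete}), use the encoding gates to load the candidate assignment into the ``variable'' qubits, apply a Toffoli-style reversible evaluation of $\phi$ (decomposed into the universal gate set) writing the truth value of $\phi$ onto a designated output qubit $q_{\mathrm{out}}$, and let $\arrw{q}=\tuple{q_{\mathrm{out}}}$. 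Then $\winner$ on the output distribution returns $1$ iff the loaded assignment satisfies $\phi$. Since at $\sigma$ we load the all-false assignment (which we may assume, WLOG, is non-satisfying — or add a dummy clause to force this), $\winner(\sSem[\sigma]{\qnn_\phi}(\init\psi),\arrw q)=0$, whereas the classifier over $\Sigma_{\sigma,\epsilon}$ returns $\{0,1\}$ (non-robust) precisely when $\phi$ has a satisfying assignment reachable inside the ball. Checking that the reduction is polynomial-time is routine: the Toffoli decomposition has size polynomial (indeed linear) in $m+n$, and the number of ancillas is polynomial.

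The step I expect to be the main obstacle is making the ``rounding'' argument rigorous: the input variables range over a \emph{continuous} (after the discretization caveat in the paper, finite but fine-grained) interval, not just the two endpoints, so I must ensure that \emph{every} input in $\Sigma_{\sigma,\epsilon}$ yields a class that is still consistent with the reduction — i.e.\ intermediate angles must not accidentally create a third behavior or flip the nominal class. One clean way around this is to \emph{not} rely on rounding at all: instead, choose the encoding so that the map from the real interval to the amplitude is such that the relevant clause-evaluation qubit's measured probability crosses the $\tfrac12$ threshold iff the assignment (read off via, say, the sign of $\sin$ of the loaded angle) satisfies $\phi$; equivalently, design the ``variable'' qubits so the classification only depends on a sign pattern that is locally constant on each half of each coordinate interval, so the finitely many ``sign regions'' of the ball correspond exactly to the $2^n$ Boolean assignments, and $\winner$ is constant on each region. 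Then $\phi$ is satisfiable iff some region gives class $1$ iff the VQC is non-robust at $\sigma$. I would also need to confirm that $\epsilon$ and $\sigma$ can be taken to be computable reals with bounded precision, so the discretized problem of \autoref{def:vvqc} is the one we are actually hitting; choosing all angles as rational multiples of $\pi$ and $\sigma,\epsilon$ as dyadic rationals handles this. Finally I would note that NP-hardness of the confutation/non-robustness direction immediately gives NP-hardness of \textsc{RVVQC} in the standard sense (the decision problem and its complement are polynomial-time inter-reducible for this purpose, or one states hardness for the complement and invokes the usual convention).
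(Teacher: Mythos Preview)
Your proposal follows essentially the same route as the paper: reduce 3-SAT to \textsc{RVVQC} by building a VQC that consists of an encoding gadget (one $\code{Ry}$ rotation per Boolean variable, sending $0\mapsto\ket{0}$ and $\pi\mapsto\ket{1}$) composed with a reversible circuit $\gateg{U}_{\text{3-SAT}}$ evaluating $\phi$ onto a designated output qubit, then take $\arrw{q}=\tuple{q_{\mathrm{out}}}$. The paper is in fact \emph{less} explicit than you are on two points you worry about: it never names a concrete nominal $\sigma$ or $\epsilon$ (it phrases the equivalence as ``property $\Pi=\{1\}$ is satisfiable on $\qnn$ iff $\Phi$ is satisfiable'' and leaves the connection to Definition~\ref{def:vvqc} implicit), and it says nothing at all about intermediate angles.

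On that last point, there is a cleaner argument than your sign-region idea. Because $\gateg{U}_{\text{3-SAT}}$ permutes computational-basis states, on an arbitrary encoded superposition $\sum_x \alpha_x\ket{x}\ket{0\cdots 0}$ the probability of measuring $q_{\mathrm{out}}=1$ is exactly $\sum_{x:\phi(x)=1}|\alpha_x|^2$. Hence if $\phi$ is unsatisfiable this probability is identically zero over the entire $\epsilon$-ball (so the classifier is constant $0$ and robust), whereas if $\phi$ is satisfiable the corner encoding a satisfying assignment gives probability $1$ (class $1$), breaking robustness against the nominal all-false input. This dispenses with any rounding or region analysis and also shows that the discretization-of-reals caveat is harmless here. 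Your closing remark about the NP/coNP direction is apt; the paper, like you, leaves this at the level of the usual convention.
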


\begin{proof}

To show the hardness of the problem, we show the following reduction \textsc{3-SAT} $\preccurlyeq_k$ \textsc{RVVQC}. In particular, we will show that any \textsc{3-SAT} instance on a Boolean formula $\Phi$ can be converted into a VQC $\qnn$ such that the property $\prop$ is satisfied on $\qnn$ if and only if $\Phi$ is satisfiable.

Let $\Phi = (C_1 \wedge \ldots \wedge C_m)$ be a 3-conjunctive normal form (CNF) over a set of Boolean variables $X = \{x_1, x_2, \dots, x_n\}$. Each formula is a conjunction of clauses, where each clause is a disjunction of exactly three literals. Each literal is either a variable $x_k$ or its negation $\neg x_k$, for some $k \in \{1, \dots, n\}$. An assignment $a: X \to \{0,1\}$ satisfies $\Phi$ if it satisfies all the clauses simultaneously. 

For the reduction, we first note that $\Phi$ can be viewed as a Boolean circuit which computes a function $f: a \to \{0,1\}$, such that $f(a) = 1$ if and only if $a$ satisfies $\Phi$. Hence, from \cite[Section 7]{Kitaev} we can transform $f$ into a quantum circuit $\gateg{U}_{\mathrm{3-SAT}}$ acting on $\mathrm{poly}(n)$ qubits. 
However, for the complete reduction, since we aim to obtain a VQC from $\Phi$, we first need to introduce an \textit{encoding gadget} $\gateg{E}$ which takes as input $a$ and encodes it for the $\gateg{U}_{\mathrm{3-SAT}}$. In \autoref{fig: encoding sat} we report a schematic overview of $\gateg{E}$.  

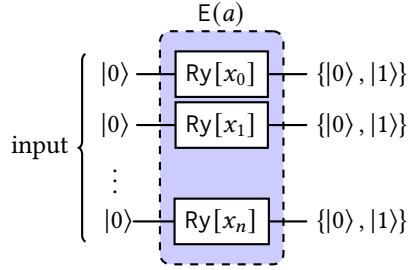
\begin{figure}[h!]
        \centering
        \begin{quantikz}[wire types={q,q,n,q},row sep=0.2em]
            \lstick[4]{input} \ \ket{0} \ & \gate{\gateg{Ry}[x_0]}
            \gategroup[4,style={dashed,rounded
            corners,fill=blue!20, inner
            xsep=2pt},background,label style={label
            position=above,anchor=north,yshift=0.3cm}]
            {{$\gateg{E}(a)$}} 
            & \rstick{\{$\ket{0}, \ket{1}$\}}\\
            \ \ket{0} \ & \gate{\gateg{Ry}[x_1]} & \rstick{\{$\ket{0}, \ket{1}$\}}\\
            \vdots & & \\
            \ \ket{0} & \gate{\gateg{Ry}[x_n]} & \rstick{\{$\ket{0}, \ket{1}$\}} 
        \end{quantikz}
        \caption{\textit{Encoding Gadget} $\gateg{E}$}
        \label{fig: encoding sat}
\end{figure}

In detail, the input of $\gateg{E}$ corresponds to $\vert X \vert$ qubits initialized to $\ket{0}$. Then for a given an assignment $a$ for $\{x_1\to\{0,1\}, \ldots, x_n\to\{0,1\}\}$, we define a static environment $\sigma_a$ as:

\begin{equation}
    \sigma_a(x_i) \gets
    \begin{cases}
    \pi & \text{if } a(x_i) = 1 \\
    0 & \text{if } a(x_i) = 0
    \end{cases}
\end{equation}

For the output of $\gateg{E}$, we then prove the following.

\begin{lemma}\label{lemma: Egate_correctness}
    For any assignment $a:X\to\{0,1\}$, the evaluation of the encoding gadget $\gateg{E}$ w.r.t. the static environment $\sigma_a$ produces an output which is consistent with $a$.
\end{lemma}

\begin{proof}
    The proof is straightforward by observing that if the original assignment for an $x_i$ is $1$, then the gate rotation $\gateg{Ry}[\sigma_a(x_i)=\pi]$ will produce $\ket{1}$ as output. On the other hand, if $x_i=0$ then $\gateg{Ry}[\sigma_a(x_i)=0]$ results in the identity leaving the state as $\ket{0}$.
\end{proof}

By combining the \textit{encoding gadget} $\gateg{E}$ with the $\gateg{U}_{\mathrm{3-SAT}}$ circuit, we obtain a VQC $\qnn$ as desired. Clearly, the reduction from any \textsc{3-SAT} instance $\Phi$ to the VQC $\qnn$ can be carried out in polynomial time. Let $\Phi$ have $n$ variables and $m$ clauses. The encoding gadget $\gateg{E}$ introduces one $R_y$ rotation gate per variable, thus requiring $O(n)$ gates. For each clause, the construction of the $\gateg{U}_{\text{3-SAT}}$ circuit involves a constant number of controlled-NOT and negation gates (to implement 3-literal OR via De Morgan), requiring $O(1)$ gates per clause. The final conjunction over the clause outputs requires $m-1$ additional controlled operations. Hence, the total number of quantum gates and qubits used in $\qnn$ is $O(n+m)$.

To complete the proof, we need to show the following.

\begin{lemma}
    Any \textsc{3-SAT} formula $\Phi$ can be reduced into a VQC $\qnn$ and a property $\prop = \{1\}$, such that\\ $\prop$ is satisfiable on $\qnn$ if and only if $\Phi$ is satisfiable.
\end{lemma}

\begin{proof}
Without loss of generality, we will prove the lemma exploiting the example of the reduction of a simple formula $\Phi$ with $n=3$ variables and $m=2$ clauses into a VQC $\qnn$ reported in \autoref{fig: sat circuit}.

\vspace{2mm}
\boxed{\implies}

We need to show that if $\winner(\sSem[\sigma_a]{\qnn}(\init{\psi}),\arrw{q}) \subseteq \prop$, i.e., the concrete execution of the VQC with the environment $\sigma_a$, then there exists an assignment $a$ such that $\Phi(a) = 1$. Assume that $\winner(\sSem[\sigma_a]{\qnn}(\init{\psi}),\arrw{q}) \subseteq \prop = \{1\}$. This means that when the VQC $\qnn$ is executed and measured, the output qubit is observed in the state $1$. By construction, $\qnn$ outputs $\ket{1}$ if and only if all clause qubits $a_i$ (for $i = 1, \ldots, m$) are set to $\ket{1}$, meaning that each corresponding clause of the formula $\Phi$ is satisfied. Therefore, measuring the output qubit in state $\ket{1}$ implies that the assignment used by the circuit satisfies every clause in $\Phi$. By \autoref{lemma: Egate_correctness}, each assignment $a$ to the Boolean variables of $\Phi$ corresponds to a valid and consistent quantum state prepared by the encoding gadget $\gateg{E}$. Hence, if the output qubit of $\qnn$ is measured as $\ket{1}$, then there must exist at least one assignment $a$ such that $\Phi(a) = 1$, which means that $\Phi$ is satisfiable.

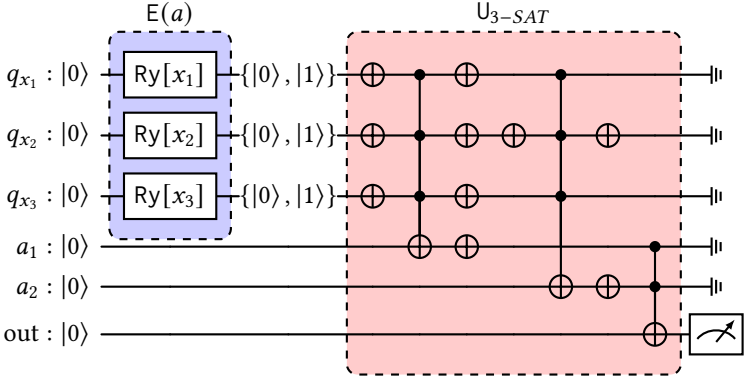
\begin{figure}[t]
        \centering
\begin{quantikz}[row sep=0.2cm, column sep=0.3cm]
\lstick{$q_{x_1}:\ket{0}$} & \gate{\gateg{Ry}[x_1]}\gategroup[3,style={dashed,rounded
            corners,fill=blue!20, inner
            xsep=2pt},background,label style={label
            position=above,anchor=north,yshift=0.3cm}]{$\gateg{E}(a)$} & \push{\{\ket{0},\ket{1}\}}& \targ{}\gategroup[6,steps=7,style={dashed,rounded
            corners,fill=red!20, inner
            xsep=2pt},background,label style={label
            position=above,anchor=north,yshift=0.3cm}]{$\gateg{U}_{3-SAT}$} & \ctrl{3} & \targ{} &          & \ctrl{3} &          & & \ground{} \\
\lstick{$q_{x_2}:\ket{0}$} & \gate{\gateg{Ry}[x_2]} & \push{\{\ket{0},\ket{1}\}}& \targ{} & \ctrl{2} & \targ{} & \targ{} & \ctrl{2} & \targ{} & & \ground{} \\
\lstick{$q_{x_3}:\ket{0}$} & \gate{\gateg{Ry}[x_3]} & \push{\{\ket{0},\ket{1}\}}& \targ{} & \ctrl{1} & \targ{} &          & \ctrl{2} &          & & \ground{} \\
\lstick{$a_1:\ket{0}$}        & & &          & \targ{}  & \targ{} &          &          &          & \ctrl{2}& \ground{} \\
\lstick{$a_2:\ket{0}$}       & & &          &          &          &    & \targ{}  & \targ{} & \ctrl{1}& \ground{}  \\
\lstick{$\text{out}:\ket{0}$} & & &          &          &          &          &          &          & \targ{} & \meter{} 
\end{quantikz}
\caption{Complete reduction of a formula $\Phi(x_1, x_2, x_3) = (x_1 \vee x_2 \vee x_3) \wedge (\neg x_1 \vee x_2 \vee \neg x_3)$ into a VQC $\qnn$.}
\label{fig: sat circuit}
\end{figure}

\vspace{2mm}
\boxed{\impliedby}

Fix an assignment $a$ for $\Phi(x_1, x_2, x_3)$, such that $\Phi(a)=1$. The first step in the reduction is to produce the output of the \textit{encoding gadget} $\gateg{E}$, for the assignment $a$, i.e., $\gateg{E}(a)$. From \autoref{lemma: Egate_correctness} we know that this gadget will produce an output consistent with $a$, for instance if the assignment $a$ produce $\{x_1 \to 0, x_2 \to 1, x_3 \to 1\}$ then the corresponding output of $\gateg{E}(a)$ which becomes the input for $\gateg{U}_{\mathrm{3-SAT}}$ will be $\{\ket{0}, \ket{1}, \ket{1}\}$. 
The $\gateg{U}_{\mathrm{3-SAT}}$ circuit, first applies the negation gates ($\oplus$) to $q_{x_1}, q_{x_2}, q_{x_3}$ to get their negations. By applying a multi-controlled NOT, it computes $(\neg q_{x_1} \wedge \neg q_{x_2} \wedge \neg q_{x_3})$, storing the result in $a_1$. By negating the result of $a_1 = \neg(\neg q_{x_1} \wedge \neg q_{x_2} \wedge \neg q_{x_3}) = (q_{x_1} \vee q_{x_2} \vee q_{x_3})$ it encodes the OR operation for the first clause, as performed in $\Phi$. Similarly, it is done for the second clause, so the circuit first restores the original values of $q_{x_1}, q_{x_2}, q_{x_3}$ by performing a negation on all the values. It then compute the negation only on $q_{x_2}$, storing the result in $a_2 = (x_1 \wedge \neg x_2 \wedge x_3)$. After this, it exploits again a multi-controlled NOT to get the result of $a_2 = \neg(x_1 \wedge \neg x_2 \wedge  x_3) = (\neg x_1 \vee x_2 \vee \neg x_3)$. Finally, to encode the AND of the clauses, it simply performs the $\oplus$ operation on $a_1$ and $a_2$, storing the final result in $out = (x_1 \vee x_2 \vee x_3) \wedge (\neg x_1 \vee x_2 \vee \neg x_3)$, which is precisely the original formula $\Phi$. Hence if $a$ satisfies $\Phi$ then it also necessarily holds by construction that $a_1, a_2 = \ket{1}$, which once measured, will produce $1 \subseteq \prop$.

\end{proof}

The proof is complete.

\end{proof}

To address this issue, we investigate whether our abstract interpretation framework can, in some sense, mitigate the inherent complexity of the problem. 
From \autoref{prop: abs ver sound}, we know that classification based on the abstract semantics of a VQC is sound. This soundness ensures conservativeness with respect to the concrete computations—that is, no concrete behaviors are missed, and thus no possible classifications are lost. 
Hence, we can use the computed over-approximation in order to verify a desirable property \textsc{RVVQC}. Specifically, if the abstract output assigns a strictly higher probability to one class compared to all others, then—by the soundness of the abstraction—that class is guaranteed to be selected in the concrete execution as well.

\begin{proposition}[Abstract \textsc{RVVQC}]
    \label{def:verification}
    Consider a VQC $\qnn$ and $\arrw{q}\in\parset^n$ ($n\leq|\varset|$).
    Given the initial abstract state $\init{\abste}$.
    Then $\qnn$ is \textit{robust} if $\forall\sigma\in\DomSenv$ $$\awinner(\abSem{\qnn}{\alphaE(\Sigma_{\sigma,\epsilon})}(\init{\abste}),\arrw{q}) \subseteq \{\winner(\sSem[\sigma]{\qnn}(\init{\psi}),\arrw{q})\}.$$
\end{proposition}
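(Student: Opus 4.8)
The statement is essentially an immediate corollary of the soundness of the abstract classifier (Proposition~\ref{prop: abs ver sound}) combined with the definition of robustness (Definition~\ref{def:vvqc}). The plan is to fix an arbitrary environment $\sigma\in\DomSenv$, assume the hypothesis of the proposition holds for this $\sigma$ — namely that $\awinner(\abSem{\qnn}{\alphaE(\Sigma_{\sigma,\epsilon})}(\init{\abste}),\arrw{q}) \subseteq \{\winner(\sSem[\sigma]{\qnn}(\init{\psi}),\arrw{q})\}$ — and then derive the concrete robustness condition \eqref{eq:rvvqc}.

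\textbf{Key steps.} First, I would instantiate Proposition~\ref{prop: abs ver sound} with the set $\Sigma_{\sigma,\epsilon}$ in place of $\Sigma$, obtaining the chain of inclusions
\[
\winner(\sSem[\Sigma_{\sigma,\epsilon}]{\qnn}(\init{\psi}),\arrw{q}) \;\subseteq\; \awinner(\abSem{\qnn}{\alphaE(\Sigma_{\sigma,\epsilon})}(\init{\abste}),\arrw{q}).
\]
Here one must observe that $\init{\abste} = \alphaP(\{\init{\psi}\})$ is exactly the abstraction of the singleton $\{\init{\psi}\}$, so that Proposition~\ref{prop: abs ver sound} applies verbatim with $\Psi = \{\init{\psi}\}$; this is a purely notational check. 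Second, I would compose this with the hypothesis to get
\[
\winner(\sSem[\Sigma_{\sigma,\epsilon}]{\qnn}(\init{\psi}),\arrw{q}) \;\subseteq\; \{\winner(\sSem[\sigma]{\qnn}(\init{\psi}),\arrw{q})\}.
\]
Since $\sigma$ was arbitrary, this is precisely condition~\eqref{eq:rvvqc}, i.e.\ $\qnn \models_{\arrw{q},\epsilon} \mbox{\textsc{RVVQC}}$, so $\qnn$ is robust by Definition~\ref{def:vvqc}.

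\textbf{Main obstacle.} There is no real mathematical difficulty: transitivity of set inclusion does all the work once soundness of the abstract classifier is invoked. The only subtlety worth spelling out is the bookkeeping of which set of environments the various semantics range over — that the concrete robustness definition quantifies over all $\sigma$ and internally uses the perturbation set $\Sigma_{\sigma,\epsilon}$, and that the abstract side must be evaluated on $\alphaE(\Sigma_{\sigma,\epsilon})$ accordingly — together with noting that $\sigma \in \Sigma_{\sigma,\epsilon}$ (so the reference class $\winner(\sSem[\sigma]{\qnn}(\init{\psi}),\arrw{q})$ is among the concrete outcomes), which guarantees the right-hand singleton is genuinely the "witnessed" class rather than an arbitrary label. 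Apart from this alignment of notation, the proof is one line of inclusion-chasing.
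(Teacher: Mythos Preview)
Your proposal is correct and matches the paper's (implicit) argument: the paper states the proposition without proof, treating it as an immediate consequence of Proposition~\ref{prop: abs ver sound} by transitivity of inclusion, which is exactly what you do. The remark about $\sigma\in\Sigma_{\sigma,\epsilon}$ is true but not actually needed for the inclusion chain to go through.
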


However, due to the over-approximation introduced by abstraction, the abstract semantics may produce overlapping probability intervals across classes, making the classification result potentially uncertain and the verification outcome potentially inconclusive. 
%

As a result, the potential overlap in the abstract output can distort the classification outcome, falsely declaring a VQC as potentially unsafe (i.e., the function cannot return a single class), even when it is actually safe, a phenomenon commonly referred to as a false alarm due to abstraction.
Moreover, in general, from the computational complexity point of view, we cannot guarantee any improvement, as stated by the following result.

\begin{proposition}
    Abstract \textsc{RVVQC} is NP-hard.
\end{proposition}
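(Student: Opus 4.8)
The plan is to reduce from the (already established) NP-hardness of the concrete \textsc{RVVQC} problem, leveraging the soundness of the abstract classifier (\autoref{prop: abs ver sound}) together with the specific structure of the VQC $\qnn$ produced by the reduction in \autoref{thm:3sat-to-vqc}. The key observation is that the circuit $\qnn$ built from a \textsc{3-SAT} formula $\Phi$ is, gate by gate, a \emph{generalized permutation} circuit: the encoding gadget $\gateg{E}$ uses only $\gateg{Ry}[x_i]$ gates with $\sigma_a(x_i)\in\{0,\pi\}$ (hence each such gate is, on the relevant inputs, either the identity or a bit flip), and $\gateg{U}_{3\text{-SAT}}$ consists solely of $\CN$ and negation ($\gateg{X}$) gates. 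By the completeness results of \autoref{sec: completeness} — specifically the theorem on generalized permutations and its corollary — the abstract semantics $\abSem{\cdot}{}$ is \emph{complete} on each of these gates, and since the abstract input is a singleton point (no perturbation is needed: take $\epsilon = 0$), completeness propagates through the sequential composition. Therefore $\abSem{\qnn}{\alphaE(\Sigma_{\sigma_a,0})}(\init{\abste})$ collapses to $\alphaP(\{\sSem[\sigma_a]{\qnn}(\init{\psi})\})$, i.e., the abstract distribution is exactly the (pointwise) abstraction of the concrete one.

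With this in hand, the steps are: (i) take an arbitrary \textsc{3-SAT} instance $\Phi$ and apply the polynomial-time construction of \autoref{thm:3sat-to-vqc} to obtain $\qnn$, together with $\arrw{q} = \tuple{\text{out}}$ and property $\prop = \{1\}$; (ii) observe that for $\epsilon = 0$ the abstract input set $\Sigma_{\sigma_a,0}$ is the singleton $\{\sigma_a\}$, so $\alphaE(\Sigma_{\sigma_a,0})$ maps each $x_i$ to the degenerate interval $[\sigma_a(x_i),\sigma_a(x_i)]$ and $\gammaE\circ\alphaE(\Sigma_{\sigma_a,0}) = \Sigma_{\sigma_a,0}$; (iii) invoke completeness of $\abSem{\cdot}{}$ on generalized-permutation/permutation gates to conclude $\gammaC(\agrassto{\gateg{Ry}[x_i]_{\tinyd{e,e'}}}_{\tinyd{\abenv}}) = \sset{\grassto{\gateg{Ry}[x_i]_{\tinyd{e,e'}}}_{\tinyd{\sigma_a}}}{\sigma \in \Sigma_{\sigma_a,0}}$ and likewise for the $\CN$ and $\gateg{X}$ gates, so the whole abstract run is exact; (iv) conclude that $\awinner$ applied to this abstract (singleton-valued) distribution returns exactly $\{\winner(\sSem[\sigma_a]{\qnn}(\init{\psi}),\arrw{q})\}$, hence Abstract \textsc{RVVQC} holds for $\sigma_a$ iff the concrete \textsc{RVVQC} holds for $\sigma_a$ iff $\Phi(a)=1$. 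Since this reduction is computable in polynomial time and preserves the yes/no answer, Abstract \textsc{RVVQC} is NP-hard.

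I expect the main obstacle to be a bookkeeping issue rather than a conceptual one: one must check that \emph{every} gate in the reduction circuit genuinely falls under a completeness theorem of \autoref{sec: completeness}. The $\CN$ and $\gateg{X}$ gates are generalized permutations outright (covered by the corollary), but the encoding $\gateg{Ry}[x_i]$ gates are permutations only \emph{after} instantiating $x_i$ with $0$ or $\pi$ — so the argument must go through the \emph{local} completeness theorem for encoding operators $\Up$, whose hypotheses require the preceding state to be a basis vector ($\exists u.\ \psi(u)=1$) and the abstract environment to be a closed interval (here a degenerate point, trivially closed). The subtlety is that after the first $\gateg{Ry}[x_i]$ the state is still a basis vector (since $\gateg{Ry}[0]$ and $\gateg{Ry}[\pi]$ map basis vectors to basis vectors), so the hypothesis is maintained inductively along the encoding layer; one must state this induction carefully. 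An alternative, cleaner route that sidesteps the local-completeness hypothesis entirely is to argue directly: with $\epsilon = 0$ the abstract computation is a computation over \emph{singleton} intervals throughout (each interval $[c,c]$), and interval arithmetic over singletons is trivially complete for $+$, $\cdot$, $\sin$, $\cos$ and $|\cdot|^2$, so $\abSem{\qnn}{\alphaE(\{\sigma_a\})}(\init{\abste})$ equals the pointwise abstraction of $\sSem[\sigma_a]{\qnn}(\init{\psi})$ with no appeal to the permutation structure at all. I would present this second argument as the core of the proof and mention the first only as intuition, since it yields the result with the least machinery.
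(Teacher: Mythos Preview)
The paper states this proposition without proof; the implied argument is precisely yours --- the same \textsc{3-SAT} reduction from \autoref{thm:3sat-to-vqc} transfers to the abstract setting because the abstract semantics is complete on the constructed circuit. Your ``cleaner route'' (singleton intervals make interval arithmetic exact) is the most direct justification and is almost certainly what the authors have in mind.

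One caveat: with $\epsilon = 0$ the set $\Sigma_{\sigma_a,0}=\{\sigma_a\}$ is a singleton, so the \textsc{RVVQC} condition of \autoref{def:vvqc} --- and its abstract analogue in \autoref{def:verification} --- is \emph{vacuously} satisfied for every $\sigma_a$, independently of $\Phi(a)$. Your final chain ``Abstract \textsc{RVVQC} holds for $\sigma_a$ iff $\ldots$ iff $\Phi(a)=1$'' is therefore not literally correct under the robustness formulation. This is not really your gap so much as one you inherit: the paper's own reduction in \autoref{thm:3sat-to-vqc} actually targets the property-satisfaction question ``$\exists\,\sigma$ with $\winner(\sSem[\sigma]{\qnn}(\init{\psi}),\arrw{q})\in\prop$'' rather than \textsc{RVVQC} as formally defined, and never fixes an $\epsilon$. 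If you want the argument airtight for robustness proper, take $\epsilon$ large enough that $\Sigma_{\sigma,\epsilon}$ covers all Boolean encodings $\{0,\pi\}^n$ and force $\Phi$ to admit at least one falsifying assignment (e.g., conjoin a clause on a fresh variable); then non-robustness at the center point is equivalent to satisfiability of $\Phi$, and your completeness observation (abstract $=$ concrete on this permutation-only circuit, now invoked on each Boolean corner separately) still carries the abstract case through.
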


Nonetheless, as in classical NN-Verification, abstract interpretation plays a crucial role in providing {\em provable} guarantees on the classifier. Indeed, we may reduce the number of executions to perform since each abstract execution captures several concrete executions, but by formally tackling the issue, we have an even more important gain. In fact, by using abstract interpretation, we can exploit the framework to {\em formally reason} about precision, the main source of uncertainty. In this context, precision is a fundamental challenge when dealing with approximation, and if we are able to identify the sources of imprecision, we may be able to potentially reduce it, reducing, as a consequence, uncertainty. 

\section{Recovering Precision in Abstract VQC Verification}\label{sec: recovery}
As discussed in the previous section, the interval-based abstract semantics is sound but incomplete. 
In this section, we introduce some techniques to mitigate the resulting loss of precision.

\begin{figure}
    \centering
    \begin{subfigure}[c]{0.48\textwidth}
    \centering
    \includegraphics[width=.5\textwidth]{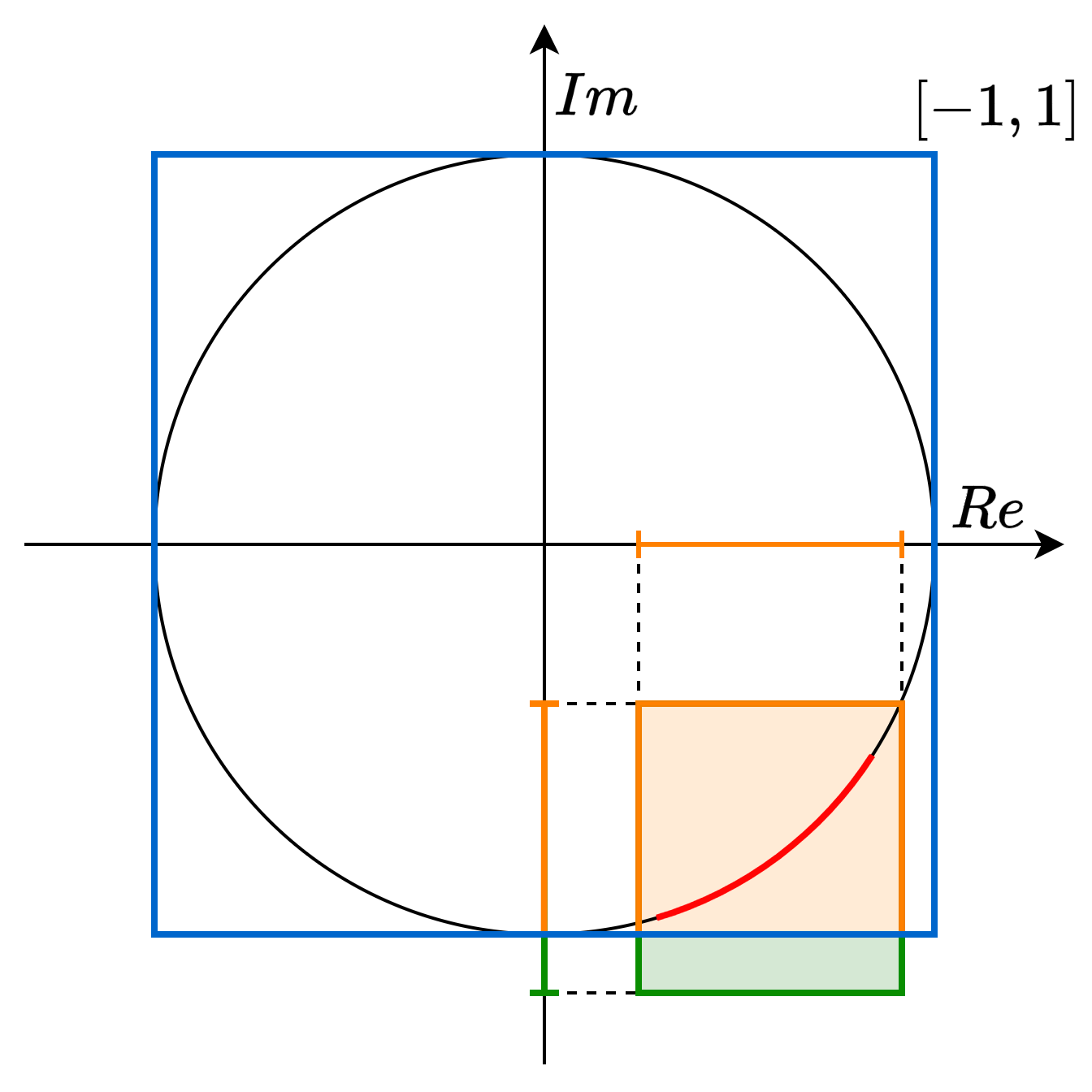}
    \caption{}\label{fig: interval_clipping}
    \end{subfigure}
    \begin{subfigure}[c]{0.48\textwidth}
    \centering
    \includegraphics[trim=20cm 20cm 0em 0em, clip,width=.5\textwidth]{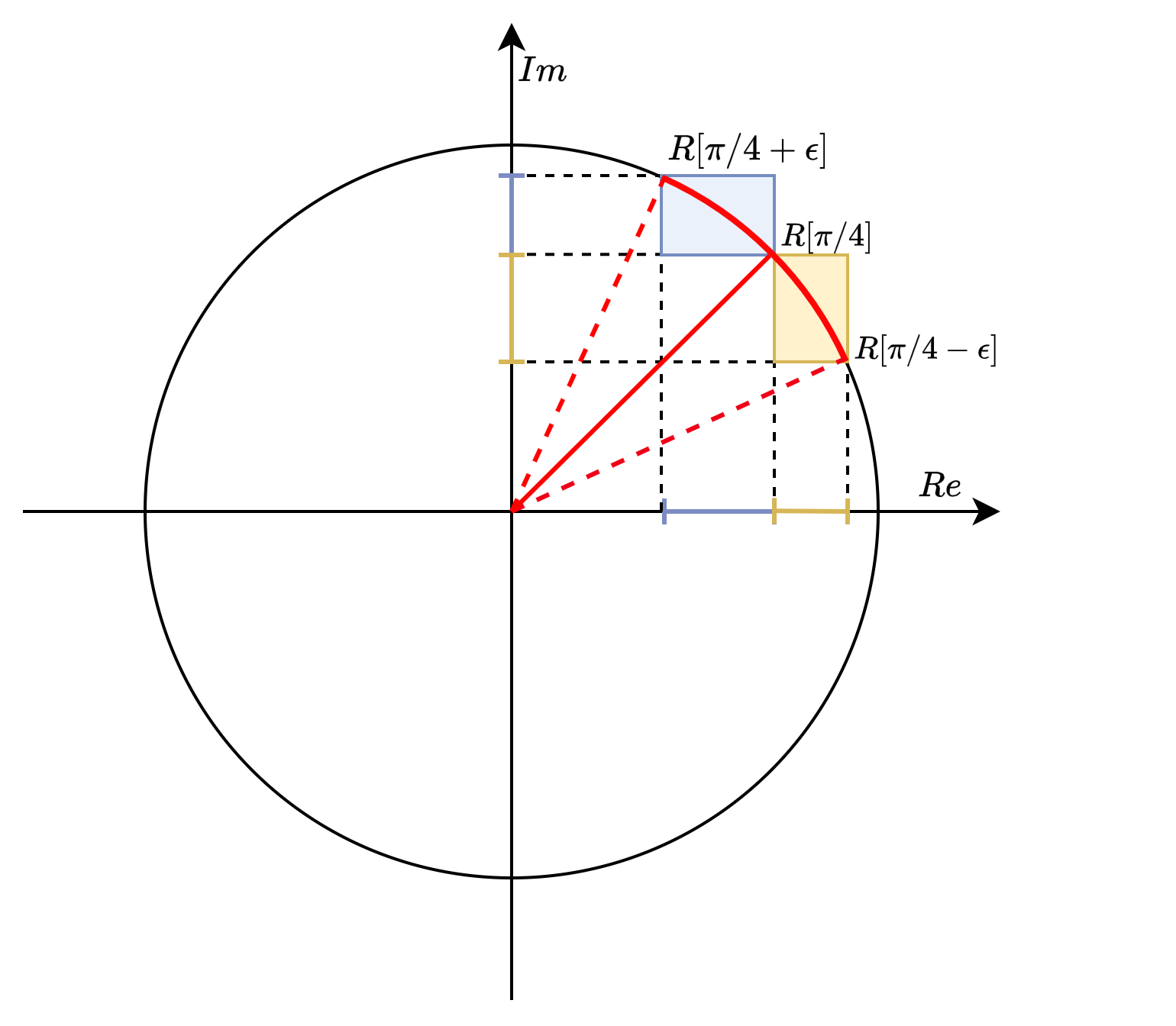}
    \caption{}\label{fig: interval_refinement}
    \end{subfigure}
    \caption{An example of the clipping function (a), and of the refinement based on splitting intervals (b)}
\end{figure}

\paragraph{Clipping the Intervals}
In \autoref{ex: incompletness}, we observe that the abstraction introduces components with norms greater than one. The problem is in the imaginary part of the amplitude associated with the basis $1$ exceeding $1$, due to the over-approximation error.
To mitigate this issue, a possible solution is to remove such invalid elements by clipping all intervals to lie within $[-1, 1]$. That is, at each step of the computation, we intersect each component of the abstract state with the interval $[-1, 1]$.
Formally, we define a function \(\clipC: \aDomP \to \aDomP\) as \(\ok{\clipC \defi \lambda \abste. \left(\lambda e \in \bases. \tuple{\Re(\abste(e)) \cap [-1, 1], \Im(\abste(e)) \cap [-1, 1]} \right) }\).
A 2D illustration is shown in \autoref{fig: interval_clipping}. 
\vspace{-.4cm}
\begin{proposition}\label{prop:clipsound}
    The clip operation is sound for all quantum statements, i.e., $\forall\Sigma\in\wp(\DomSenv)$, $\alphaC (\csSem{\nonter{s}}(\init{\psi})) \dot{\leq}_{\tinyd{\bCI}} \clipC(\abSem{\nonter{s}}{\alphaC(\Sigma)}(\init{\abste}))$ 
\end{proposition}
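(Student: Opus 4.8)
The plan is to prove soundness of the clip operation by combining the soundness of the abstract semantics (\autoref{th:sound}) with the simple observation that clipping only removes spurious points that cannot correspond to any concrete state. First I would recall why clipping is a sound post-processing step: every concrete state $\psi \in \DomP$ satisfies the normalization condition $\sum_{e\in\bases}|\psi(e)|^2 = 1$, which in particular forces $|\Re(\psi(e))| \leq 1$ and $|\Im(\psi(e))| \leq 1$ for every basis element $e$. Hence for any set $\Psi \subseteq \DomP$ of concrete states, $\alphaP(\Psi)$ already maps each $e$ into an interval contained in $\tuple{[-1,1],[-1,1]}$, so intersecting with $[-1,1]$ componentwise leaves it unchanged: $\clipC \comp \alphaP = \alphaP$ on $\wp(\DomP)$.

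The core argument then proceeds by structural induction on the quantum statement $\nonter{s}$, but the cleanest route is to observe that $\clipC$ is \emph{monotone} and \emph{reductive} ($\clipC(\abste) \dot{\leq}_{\tinyd{\bCI}} \abste$ always, since intersection only shrinks intervals) and that it is the identity on any abstract state that is itself the abstraction of a set of concrete states. Starting from the known soundness inequality, for every $\Sigma\in\wp(\DomSenv)$ we have $\alphaP(\csSem{\nonter{s}}(\init{\psi})) \dot{\leq}_{\tinyd{\bCI}} \abSem{\nonter{s}}{\alphaE(\Sigma)}(\init{\abste})$ by \autoref{th:sound} applied to the singleton $\{\init{\psi}\}$ (the statement of \autoref{th:sound} is for the full VQC $\qnn = \nonter{s};\M$, but the same proof—composition of sound interval operations—yields the corresponding inequality at the level of $\nonter{s}$ alone, before measurement). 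Now apply $\clipC$ to the right-hand side: by monotonicity of $\clipC$ and the fact that $\clipC(\alphaP(\csSem{\nonter{s}}(\init{\psi}))) = \alphaP(\csSem{\nonter{s}}(\init{\psi}))$ (because the left-hand side is an abstraction of concrete, hence normalized, states), we get
\[
\alphaC(\csSem{\nonter{s}}(\init{\psi})) = \clipC(\alphaC(\csSem{\nonter{s}}(\init{\psi}))) \dot{\leq}_{\tinyd{\bCI}} \clipC(\abSem{\nonter{s}}{\alphaE(\Sigma)}(\init{\abste})),
\]
which is exactly the claimed inequality. (Here I am reading $\alphaC$ in the proposition's statement as the pointwise lift $\alphaP$, matching the types.)

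A subtlety to address carefully is that the proposition as stated concerns clipping applied \emph{once} at the end, whereas the informal text says we clip "at each step of the computation." I would remark that iterating the clip is also sound by the same reasoning: since $\clipC$ is reductive, inserting $\clipC$ between any two sub-statements keeps the composed abstract function pointwise below the un-clipped one, and soundness is preserved at each stage by monotonicity together with the invariance $\clipC \comp \alphaP = \alphaP$ on concrete sets; a short induction on the number of clip insertions closes this. The main obstacle I anticipate is purely a matter of bookkeeping rather than mathematical difficulty: making precise the claim that \autoref{th:sound}'s proof specializes to the pre-measurement level (i.e., that $\abSem{\nonter{s}}{\cdot}$ is itself sound for $\sSem{\nonter{s}}$, not just the full circuit including $\M$), and keeping the $\alphaC$ versus $\alphaP$ / $\DomP$ versus $\DomR$ typing straight throughout. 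Once those are pinned down, the argument is essentially the two displayed lines above.
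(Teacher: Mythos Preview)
Your proposal is correct and follows essentially the same approach as the paper: both arguments hinge on the observation that every concrete quantum state $\psi$ satisfies $|\Re(\psi(e))|,|\Im(\psi(e))|\le 1$ by normalization, so clipping to $[-1,1]$ cannot remove any point that lies in $\alphaP(\csSem{\nonter{s}}(\init{\psi}))$. Your version is more explicit in invoking monotonicity of $\clipC$, the identity $\clipC\comp\alphaP=\alphaP$, and the pre-measurement soundness of $\abSem{\nonter{s}}{}$, and it additionally treats the iterated-clip case, but the core idea is the same as the paper's brief justification.
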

\begin{proof}
    The soundness follows from that given a $\psi \in \csSem{\nonter{s}}(\init{\psi})$, 
    $\forall e \in \bases$, $-1 \le \Re(\psi(e)) \ge 1$ and $-1 \le \Im(\psi(e)) \ge 1$.
    Thus the elements eliminated by the $\clipC$ are not in $\alphaC (\csSem{\nonter{s}}(\init{\psi}))$.
\end{proof}

Similarly, after performing a measurement, the resulting probability intervals may have upper bounds exceeding 1. Since measurement outcomes must lie within the range 
$[0,1]$, we can define another function that applies a clipping by intersecting the interval with 
$[0,1]$. \\ Formally, we can define a function \(\clipR: \aDomR \to \aDomR\) as  \(\ok{\clipR \defi \lambda \absdist. \left(\lambda e \in \bases. \absdist(e) \cap [0, 1] \right) }\). For similar consideration as in \autoref{prop:clipsound}, we have

\begin{proposition}
    The clip operation on the measurement is sound. 
\end{proposition}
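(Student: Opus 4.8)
The plan is to mirror exactly the argument used for \autoref{prop:clipsound}, now applied to the output domain $\aDomR$ rather than the state domain $\aDomP$. The statement to prove is: for all $\Sigma \in \wp(\DomSenv)$, the clipped abstract distribution still over-approximates the concrete collecting semantics, i.e. $\alphaD(\csSem{\qnn}(\init{\psi})) \dot{\leq}_{\tinyd{\bRI}} \clipR(\abSem{\qnn}{\alphaE(\Sigma)}(\init{\abste}))$. Since we already know from \autoref{th:sound} that $\abSem{\cdot}{}$ is sound without the clip, the only thing to check is that intersecting each component interval with $[0,1]$ does not discard anything that was genuinely needed — that is, nothing in the concrete image is removed by $\clipR$.

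First I would fix an arbitrary $\Sigma \in \wp(\DomSenv)$ and an arbitrary concrete distribution $\rho \in \csSem{\qnn}(\init{\psi})$, where by definition $\rho = \sSem[]{\M} \comp \sSem{\nonter{s}}(\init{\psi})$ for some $\sigma \in \Sigma$. By \autoref{eq: Msem}, for every $e \in \bases$ we have $\rho(e) = |\psi(e)|^2 \geq 0$, and since $\psi$ is a (normalized) quantum state, $\sum_{e} |\psi(e)|^2 = 1$, hence in particular $\rho(e) \leq 1$ for each $e$. So every concrete probability value lies in $[0,1]$. Next I would combine this with \autoref{th:sound}: we already have $\alphaD(\csSem{\qnn}(\init{\psi})) \dot{\leq}_{\tinyd{\bRI}} \abSem{\qnn}{\alphaE(\Sigma)}(\init{\abste})$, so for each $e$, $\alphaR(\{\rho(e) : \rho \in \csSem{\qnn}(\init{\psi})\})$ is contained in the interval $(\abSem{\qnn}{\alphaE(\Sigma)}(\init{\abste}))(e)$. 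Since that set of concrete values is a subset of $[0,1]$, its abstraction $\alphaR$ is also contained in $[0,1]$, hence contained in the intersection $(\abSem{\qnn}{\alphaE(\Sigma)}(\init{\abste}))(e) \cap [0,1]$, which is precisely the $e$-component of $\clipR(\abSem{\qnn}{\alphaE(\Sigma)}(\init{\abste}))$. Applying this componentwise gives the claimed $\dot{\leq}_{\tinyd{\bRI}}$ inequality.

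The only mildly delicate point — the "main obstacle", although it is a minor one — is making sure the monotonicity of intersection with respect to interval inclusion is invoked correctly: if $A \subseteq C$ and $A \subseteq [0,1]$ then $A \subseteq C \cap [0,1]$, phrased at the level of the interval order $\leq_{\tinyd{\bRI}}$ rather than raw set inclusion. This is immediate once one recalls that $\alphaR$ of any set landing inside $[0,1]$ is itself an interval inside $[0,1]$ (by the characterization of $\alphaR$ as $[\min, \max]$ given right after the definition of $\bRI$), so the abstract value we need to dominate is below both $C$ and $[0,1]$ in $\bRI$, hence below their meet $C \cap [0,1]$. Everything else is the same bookkeeping as in \autoref{prop:clipsound}, just transported from $\bCI$ to $\bRI$ and from amplitudes to squared moduli, so the proof can be kept to two or three sentences in the paper: it suffices to note that concrete probabilities always lie in $[0,1]$ and then quote \autoref{th:sound}.
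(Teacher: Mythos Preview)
Your proposal is correct and follows essentially the same reasoning the paper gives (by pointer to \autoref{prop:clipsound}): concrete probabilities $\rho(e)=|\psi(e)|^2$ always lie in $[0,1]$ by normalization, so intersecting with $[0,1]$ removes nothing from $\alphaD(\csSem{\qnn}(\init{\psi}))$, and soundness then follows from \autoref{th:sound}. Your explicit handling of the interval-meet step is more careful than the paper's one-line justification, but the underlying argument is identical.
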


Since both clip functions are sound, we can guarantee that applying the clip operation at each step of the abstract semantics is sound.

\paragraph{`Symbolic' Execution of Parametric Circuit}
Another technique employed in classical NN-Verification for reducing the over-approximation is called \textit{symbolic interval propagation}~\cite{reluval}. 
At a high level, the idea is to track dependencies between variables symbolically, during abstract computation, reducing over-approximation compared to naive interval propagation methods. 
However, its precision drops in the presence of non-convex transformations, and it often requires interval concretization to remain tractable.

Inspired by this approach, we investigate whether this simple yet effective technique can also be applied in a quantum setting. 
We begin by observing that, due to the linearity of the parametric part of the VQC, the evolution of a concrete quantum state follows the standard rule of operator composition, i.e., matrix multiplication. 
Specifically, given a vector state $\ket{\psi}$ and two unitary operators $\code{U}$ and $\code{U}'$, $\code{U}' \cdot (\code{U} \cdot \ket{\psi}) = (\code{U}' \cdot \code{U}) \cdot \ket{\psi}$. 
On the other hand, when working with abstract states, given an abstract state $\abste$ and two operations $\code{U}$ and $\code{U}'$, the abstract semantics satisfy:
$\abSem{\code{U}' \cdot U}{}(\abste) ,\dot{\leq}_{\tinyd{\bCI}} \abSem{\code{U}'}{} \circ \abSem{U}{} (\abste)$. 
This inclusion reflects the fact that by composing $\code{U}$ and $\code{U}'$ first, we can perform the concrete operations exactly (without over-approximation), and only afterwards apply a single abstract operation to the result, thus reducing the accumulation of over-approximation errors in the analysis.

\begin{example}\label{ex: symb}
In order to understand the idea in practice, consider the example in \autoref{sec: vqc ab example}.  
In this case, if we first compute the full operator $\ok{O \defi (\code{Ry}^{\tinyd{q_1}}_{-0.69} \cdot \code{Ry}^{\tinyd{q_0}}_{3.27}) \cdot \gateg{CX}^{\tinyd{q_0,q_1}} \cdot \gateg{Ry}^{\tinyd{q_1}}_{0.99} \cdot \gateg{Ry}^{\tinyd{q_0}}_{-0.50}}$ and then apply it to the initial abstract state, we can directly compute the final distribution as:
\begin{equation}\label{eq: symb}
    \absdist_s = \abSem{\M}{}\circ\abSem{O}{\abenv}(\abste_1) = \begin{Bmatrix} 
00 \mapsto [0.165, 0.394] & 
01 \mapsto [0.128, 0.410]\\ 
10 \mapsto [0, 0.068] & 
11 \mapsto [0.320, 0.698]
\end{Bmatrix}.
\end{equation}


Thus, we find $q_0$ equals to $0$ w.p. $[0.165,0.462]$ and equals to $1$ w.p. $[0.448, 1.108]$.
We clearly observe that the intervals are narrower compared to those in $\absdist$ from \autoref{ex:absdist}, which was obtained by applying the operators step by step.
However, we are currently unable to verify that the VQC is robust with respect to the initial input $x = [6, 2.7]$ perturbed by an $\epsilon = 0.5$.
We can improve the abstraction process by introducing another technique, called iterative refinement.
\end{example}

\paragraph{Iterative Refiment}

In classical NNs, one common technique to improve the precision of interval analysis is to recursively split the input interval into smaller sub-intervals~\cite{reluval}. For instance, analyzing the network over a broad input range such as $[0, 10]$ might produce a loose output bound like $[2, 30]$. However, if we divide the input into two smaller intervals, say $[0, 5]$ and $[5, 10]$, and analyze each one separately, we might obtain tighter bounds such as $[2, 12]$ and $[10, 20]$, respectively.
Taking the union of these results yields a refined output bound of $[2, 20]$, improving the overall precision of the analysis.
This strategy helps reduce overestimation caused by input dependencies. Its effectiveness is supported by the fact that NNs are Lipschitz continuous, which ensures that repeated input refinement will eventually converge to arbitrarily tight output bounds within a finite number of steps.

In our setting, the encoding functions used to generate quantum states, such as sine and cosine, are continuous. This continuity allows us to adopt a similar strategy; by partitioning the input domain and analyzing smaller subregions independently, we can derive more precise output intervals.
In \autoref{fig: interval_refinement}, we provide a 2D visualization of this idea. 
Instead of computing the abstract rotation $\gateg{R}[\sfrac{\pi}{4}-\epsilon,\sfrac{\pi}{4}+\epsilon]$ all at once, we split the input interval and compute separately $\gateg{R}[\sfrac{\pi}{4}-\epsilon,\sfrac{\pi}{4}]$ and $\gateg{R}[\sfrac{\pi}{4},\sfrac{\pi}{4}+\epsilon]$. 
As shown in the figure, the union of these two results yields a tighter over-approximation than applying the rotation over the entire interval at once.

Once again, to provide a practical example of this technique, let us consider again the VQC $\qnn$ used in \autoref{sec: vqc ab example}. 
To further improve precision, we can refine the input domain by splitting one of the two input intervals, say, $x_0$.
We thus define two sub-environments: $\abenv_0 \colon \{ x_0 \mapsto [5.5, 6],\ x_1 \mapsto [2.2, 3.2] \}$, $\abenv_1 \colon \{ x_0 \mapsto [6, 6.5],\ x_1 \mapsto [2.2, 3.2] \}$, such that $\abenv = \abenv_0 \cup \abenv_1$.
Executing the abstract semantics on these two environments, using symbolic execution, we obtain:
\[
\begin{aligned}
\small
\abSem{\qnn}{\abenv_0} = \begin{Bmatrix}
    00 \mapsto [0.176, 0.387] & \!\!\!\!01 \mapsto [0.139, 0.355] \\
    10 \mapsto [0, 0.057] & \!\!\!\!11 \mapsto [0.322, 0.674]
\end{Bmatrix},\
\abSem{\qnn}{\abenv_1} = \begin{Bmatrix}
    00 \mapsto [0.195, 0.285] & \!\!\!\!01 \mapsto [0.148, 0.367] \\
    10 \mapsto [0, 0.048] & \!\!\!\!11 \mapsto [0.40, 0.635]
\end{Bmatrix}.
\end{aligned}
\]
Taking the union, we obtain:
\[
\abSem{\qnn}{\abenv_0} \cup \abSem{\qnn}{\abenv_1} =
\begin{Bmatrix}
    00 \mapsto [0.176, 0.387] & 01 \mapsto [0.139, 0.367] \\
    10 \mapsto [0, 0.057] & 11 \mapsto [0.322, 0.674]
\end{Bmatrix}.
\]
This yields a more precise output compared to \autoref{eq: symb}.
If we check the results of the VQC, the probability associated with measuring $q_0$ equals $0$ is $[0.176, 0.444]$ while we have $q_0$ equal to $1$ with a probability in $[0.461, 1.04]$, $1$. 
Refining the abstraction confirms that the VQC in \autoref{fig: VQC_example} is robust when the initial input $x = [6, 2.7]$ is perturbed by an $\epsilon = 0.5$.

\paragraph{Why not a `Symbolic Encoding'?}
This raises the question of whether performing a symbolic execution of the encoding part of the circuit could lead to improved precision. As we will show below, unfortunately, the answer is no.

\begin{proposition}
    The symbolic execution of the encoding part of a VQC does not improve precision.
\end{proposition}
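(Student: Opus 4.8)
The plan is to identify precisely what makes symbolic execution profitable for the parametric part in \autoref{ex: symb} and to show that this ingredient is absent for the encoding. First I would fix what ``symbolic execution of the encoding'' means: given the encoding part, a sequence $\Up_k;\ldots;\Up_1$, form a single operator $O \defi \Up_k \cdots \Up_1$ by multiplying the function-valued matrices symbolically, and then evaluate $\abSem{O}{\abenv}(\init{\abste})$, i.e.\ compute $\agrassto{O_{e,e'}}_{\abenv}$ for each entry. The key structural observation is that, unlike in the parametric case, each entry $O_{e,e'} = \sum_{e''}(\Up_k)_{e,e''}\cdots(\Up_1)_{\cdot,e'}$ is still a function of the input variables --- a (nested) sum of products of the single-variable sine/cosine functions carried by the individual gates --- and therefore must still be over-approximated through $\agrassto{\cdot}_{\abenv}$.

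Next I would compare $\abSem{O}{\abenv}(\init{\abste})$ with the step-by-step semantics $\abSem{\Up_k}{\abenv}\comp\cdots\comp\abSem{\Up_1}{\abenv}(\init{\abste})$, using two elementary facts. (i) For a product of single-variable functions in \emph{distinct} variables evaluated over the non-relational $\abenv$, the best correct approximation equals the interval product of the best correct approximations of the factors: the range of a separable function over a box is the Minkowski product of the factor ranges, and $\alphaC$ commutes with such products (this is just the definition of interval multiplication). Hence abstracting a composed entry gains nothing over multiplying the abstractions of its factors --- which is exactly what step-by-step execution already does. (ii) Interval multiplication is only sub-distributive over addition, $A\cdot(B+C)\leq_{\tinyd{\bCI}} A\cdot B + A\cdot C$, so preserving the nested form that the matrix product naturally produces is never less precise than any flattened sum of products. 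Combining (i)--(ii), a short induction on $k$ (pushing one encoding gate through at a time, and exploiting that $\init{\abste}$ has a single non-zero component so each intermediate sum has one non-zero term) yields $\abSem{\Up_k}{\abenv}\comp\cdots\comp\abSem{\Up_1}{\abenv}(\init{\abste}) \dot{\leq}_{\tinyd{\bCI}} \abSem{O}{\abenv}(\init{\abste})$; indeed, for the typical one-rotation-per-qubit feature map the two sides coincide. Either way the symbolic computation is never tighter, which is the meaning of ``no improvement''.

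Finally I would state the contrast that explains why the analogous trick \emph{does} help for $\U$: there $\U_k\cdots\U_1$ is a matrix of complex \emph{constants}, so it is formed exactly inside $\bC$ with no interval over-approximation, leaving only a single abstract multiplication by the state --- precisely the intermediate error eliminated in \autoref{ex: symb}. For $\Up$ the composed operator consists of functions that must still be abstracted over a non-relational $\abenv$, so no over-approximation step is removed. The one delicate point --- and the main obstacle --- is to argue that no further precision can be recovered by \emph{algebraically simplifying} the trigonometric entries of $O$ (e.g.\ rewriting $\cos a\cos b - \sin a\sin b$ as $\cos(a+b)$, which genuinely would tighten the interval): this falls outside the framework, since $\fun$ and $\agrassto{\cdot}_{\abenv}$ treat each entry as an opaque sum-of-products expression evaluated with interval arithmetic, so the case analysis in (i)--(ii) is exhaustive.
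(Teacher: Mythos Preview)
Your approach diverges from the paper's in both interpretation and method, and the divergence matters. The paper reads the proposition as ``symbolic execution of the encoding does not \emph{always} improve precision'' and proves it by a single counterexample: on $\nonter{s}:=\code{Rx}^{\tinyd{q}}[x_0];\code{Ry}^{\tinyd{q}}[x_1]$ with concrete intervals for $x_0,x_1$, (a) evaluating the composed symbolic matrix and then substituting the intervals reproduces exactly the step-by-step result, and (b) applying product-to-sum trigonometric identities before substituting yields a \emph{strictly worse} abstract state. The paper then concludes ``trigonometric substitutions do not always improve precision''.

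You instead attempt the universal reading (``never improves''), and that reading is false as soon as algebraic simplification is admitted --- which you yourself observe. Your parenthetical that rewriting $\cos a\cos b-\sin a\sin b$ as $\cos(a+b)$ ``genuinely would tighten the interval'' is correct, and this pattern is not hypothetical: it is precisely the $(0,0)$ entry of $\code{Ry}[x_1]\cdot\code{Ry}[x_0]$ applied to $\init{\abste}$. So the claim you are trying to prove does not hold once simplification is on the table, and your escape clause (``falls outside the framework'') is not a proof --- the paper explicitly treats trigonometric manipulation as part of what one might try symbolically. Separately, your induction hook (``$\init{\abste}$ has a single non-zero component so each intermediate sum has one non-zero term'') fails after the first gate when two encodings act on the same qubit, which is exactly the situation the paper targets; and your sub-distributivity point (ii) is moot here, since without simplification the step-by-step and symbolic evaluations perform literally the same interval operations and coincide (the paper notes this too). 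The right fix is to abandon the universal argument and give a counterexample, as the paper does: pick an encoding where every factor already has a single occurrence of its variable (so the unsimplified interval evaluation is already the bca and cannot be improved), and then show that the natural trigonometric rewrite introduces repeated occurrences of variables and hence degrades precision.
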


\begin{proof}
To prove the proposition, we show a counterexample.
Let us consider the program $s := \code{Rx}^q[x_0];\code{Ry}^q[x_1]$.
We recall that the $\code{Ry}$ is defined as: $\small\code{Ry} = \begin{pmatrix} \cos(\sfrac{\theta}{2}) & - \sin(\sfrac{\theta}{2}) \\
\sin(\sfrac{\theta}{2}) & \cos(\sfrac{\theta}{2})
\end{pmatrix}$ and the $\code{Rx}$ definition is given in \autoref{eq: Rx}.
The small program $s$ models a circuit in which we are encoding multiple classical data in the same qubit.
Let us consider $x = [\pi/2 - \epsilon, \pi/2 + \epsilon]$ and $y = [\sfrac{\pi}{3} - \epsilon, \sfrac{\pi}{3} + \epsilon]$, where $\epsilon = 1$.
First, we can execute the circuit naively, starting from the abstract state $\abste_0 = \{0 \mapsto \tuple{[1,1][0,0]}, 1 \mapsto \tuple{[0,0],[0,0]}\}$.
In particular,
$\abste_1 = \abSem{\code{Ry[y]}^q}{\abenv}\abSem{\code{Rx}^q[x]}{\abenv}(\abste_0) = \begin{Bmatrix} 0 \mapsto \tuple{[0.564, 0.66] + [0.306, 0.402]}, 1 \mapsto \tuple{[0.306, 0.402] + [-0.66, -0.564]} \end{Bmatrix}.$

We now execute the circuit keeping $x$ and $y$ symbolic.
First we compute $\abSem{\code{Rx}^q[x]}{\abenv}(\abste_0)$, that results in the symbolic state: 
$
\abste[T]_1[x] = \begin{Bmatrix} 0 \mapsto \cos(\sfrac{x}{2}) \\ 1 \mapsto -\ii\cdot\sin(\sfrac{x}{2}) \end{Bmatrix}
$.
Then we compute symbolically 
    $\abSem{\code{Ry[y]}^q}{\abenv}(\abste[T]_1[x]) = \small\begin{bmatrix}
\cos(\sfrac{y}{2})\cdot\cos(\sfrac{x}{2})+(-\sin(\sfrac{y}{2}))\cdot(-\ii\sin(\sfrac{x}{2})) \\
\sin(\sfrac{y}{2})\cdot\cos(\sfrac{x}{2})+\cos(\sfrac{y}{2})\cdot(-\ii\sin(\sfrac{x}{2}))
\end{bmatrix}$.
By substituting the values of $x$ and $y$ into the resulting state, we obtain $\abste_1$.  
Alternatively, we can use trigonometric formulas to manipulate the state, obtaining:
$
\abste[T]_2[x,y] = \begin{Bmatrix}
    0 \mapsto \tuple{[\sfrac{\cos\left(\frac{x + y}{2}\right) + \cos\left(\frac{x - y}{2}\right)}{2}], [\sfrac{\cos\left(\frac{x - y}{2}\right) - \cos\left(\frac{x + y}{2}\right)}{2}]}  \\
    1 \mapsto \tuple{[\sfrac{\sin\left(\frac{x + y}{2}\right) - \sin\left(\frac{x - y}{2}\right)}{2}], [- \sfrac{\sin\left(\frac{x - y}{2}\right) - \sin\left(\frac{x + y}{2}\right)}{2}]}
\end{Bmatrix}.
$
Now substituting the $x$ and the $y$ with the proper intervals, we have:
$
\!\abste[T] \!=\! \big\{
    0\! \mapsto\! \tuple{[0.548,\! 0.670] [0.291,\!0.413]},$ $
    1 \mapsto \tuple{[0.290, 0.413] [-0.670, -0.548]}\big\}.$ that is noisier than $\abste_1$.
\end{proof}

We have shown, with a simple example, that the symbolic encoding does not always improve accuracy. 
This is due to the fact that trigonometric substitutions do not always improve precision. 

\section{Evaluation}\label{sec: evaluation}
In this section, we evaluate our abstract interpretation-based framework for the robustness verification of VQC classifiers.
In particular, we start by considering two standard dataset benchmarks in the classification literature, namely Iris~\cite{iris} and MNIST~\cite{MNIST}.
To keep the paper self-contained, we briefly summarize the main characteristics of these datasets here. The Iris dataset consists of 150 samples of iris flowers classified into three species, such as \textit{Iris setosa}, \textit{Iris virginica}, and \textit{Iris versicolor}.
The input to the classifier consists of four features, such as the length and width of the sepals and petals, all in centimeters, respectively.
In our setting, we only consider a binary classification task between \textit{Iris setosa} and \textit{Iris versicolor}.
The MNIST dataset of handwritten digits, containing 60,000 training and 10,000 testing examples, each represented as a grayscale image of size $28 \times 28$ pixels.
In our setting, we consider the binary classification of downscaled to $4\times4$ image digits 0 and 1 (denoted \texttt{MNIST[0,1]}), and digits 2 and 6 (denoted \texttt{MNIST[2,6]}).

Although verification typically assumes a pre-trained model, in order to assess how different levels of accuracy affect the robustness of the VQC, we train and verify three distinct models using our proposed framework as VQC-based classifiers. Specifically: 
\begin{itemize}
    \item \textbf{QCL}~\cite{qcl} is a representative hybrid classical-quantum framework that employs a VQC composed of a nonlinear quantum encoding circuit for input data and a low-depth variational quantum circuit. 
    We train and verify this model on Iris.
    The circuit is represented in \autoref{fig: QCL_model}.
    
    \item \textbf{CCQC}~\cite{schuld2020circuit} is a low-depth hybrid VQC framework designed for supervised learning. See \autoref{fig: CCQC_model} for the circuit representation.  
    Unlike the previous approach, classification is performed based on the measurement outcomes combined with a trainable bias term.  
    We train and verify this model on the Iris dataset. The circuit is represented in \autoref{fig: CCQC_model}.
    
    \item The final models~\cite{huang2024postvar}, which we refer to as \textbf{PV}, are a variation of the previous approaches and employ angle encoding~\cite{munikote2024comparingencoding} and a parametric circuit to mitigate flat cost function landscapes during optimization~\cite{Grant2019initialization}. 
    We train this model on MNIST.
    The circuit is represented in \autoref{fig: PV_model}.
\end{itemize}
 
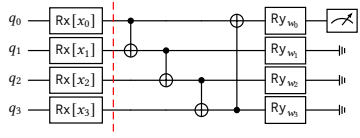
\begin{figure}
    \centering
    \resizebox{0.35\linewidth}{!}{
    \begin{quantikz}[row sep=2pt, slice style=black]
\lstick{$q_0$} & \gate{\gateg{Rx}[x_0]}\slice{} & \ctrl{1} &  &  & \targ{} & \gate{\gateg{Ry}_{w_0}} & \meter{}\\
\lstick{$q_1$} & \gate{\gateg{Rx}[x_1]} & \targ{} & \ctrl{1} &  &  & \gate{\gateg{Ry}_{w_1}} & \ground{}\\
\lstick{$q_2$} & \gate{\gateg{Rx}[x_2]} &  & \targ{} & \ctrl{1} &  & \gate{\gateg{Ry}_{w_2}} & \ground{}\\
\lstick{$q_3$} & \gate{\gateg{Rx}[x_3]} &  &  & \targ{} & \ctrl{-3} & \gate{\gateg{Ry}_{w_3}} & \ground{}
    \end{quantikz}
    }
    \caption{The QCL model, used to classify a 4-feature input data.}
    \label{fig: QCL_model}
\end{figure}

\begin{figure}
    \centering
    \resizebox{.9\linewidth}{!}{\begin{quantikz}[row sep=1.5pt, slice style=black]
\lstick{$q_0$} & \gate{\gateg{Ry}[x_0]} & \ctrl{1} &     & \ctrl{1} & \gate{\gateg{X}} & \ctrl{1} &     & \ctrl{1} & \gate{\gateg{X}}\slice{} & \gate{\gateg{R3}_{w_{0},w_{1},w_{2}}} & \ctrl{1} & \gate{\gateg{R3}_{w_{6},w_{7},w_{8}}} & \ctrl{1} & \gate{\gateg{R3}_{w_{12},w_{13},w_{14}}} & \ctrl{1} & \gate{\gateg{R3}_{w_{18},w_{19},w_{20}}} & \ctrl{1} & \gate{\gateg{R3}_{w_{24},w_{25},w_{26}}} & \ctrl{1} & \gate{\gateg{R3}_{w_{30},w_{31},w_{32}}} & \ctrl{1} & \meter{} \\
\lstick{$q_1$} &     & \targ{} & \gate{\gateg{Ry}[x_1]} & \targ{} & \gate{\gateg{Ry}[x_2]} & \targ{} & \gate{\gateg{Ry}[x_3]} & \targ{} & \gate{\gateg{Ry}[x_4]} & \gate{\gateg{R3}_{w_{3},w_{4},w_{5}}} & \targ{} & \gate{\gateg{R3}_{w_{9},w_{10},w_{11}}} & \targ{} & \gate{\gateg{R3}_{w_{15},w_{16},w_{17}}} & \targ{} & \gate{\gateg{R3}_{w_{21},w_{22},w_{23}}} & \targ{} & \gate{\gateg{R3}_{w_{27},w_{28},w_{29}}} & \targ{} & \gate{\gateg{R3}_{w_{33},w_{34},w_{35}}} & \targ{} & \ground{}
    \end{quantikz}}
    \caption{The CCQC model used on the iris dataset. Here $\gateg{R3}_{w_{i},w_{j},w_{k}} = \gateg{Rz}_{w_{k}}\cdot\gateg{Ry}_{w_{j}}\cdot\gateg{Rx}_{w_{i}}$}
    \label{fig: CCQC_model}
\end{figure}
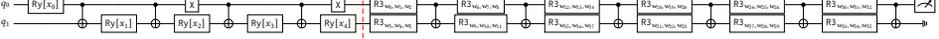

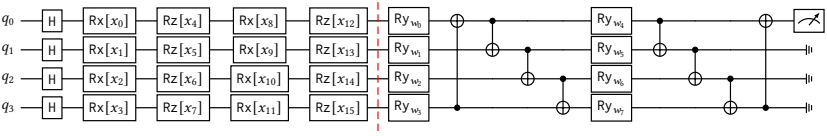
\begin{figure}
    \centering
    \resizebox{.8\linewidth}{!}{
\begin{quantikz}[row sep=1.5pt, slice style=black]
\lstick{$q_0$} & \gate{\gateg{H}} & \gate{\gateg{Rx}[x_{0}]} & \gate{\gateg{Rz}[x_{4}]} & \gate{\gateg{Rx}[x_{8}]} & \gate{\gateg{Rz}[x_{12}]}\slice{} & \gate{\gateg{Ry}_{w_{0}}} & \targ{} & \ctrl{1} &  &  & \gate{\gateg{Ry}_{w_{4}}} & \ctrl{1} &  &  & \targ{} & \meter{} \\
\lstick{$q_1$} & \gate{\gateg{H}} & \gate{\gateg{Rx}[x_{1}]} & \gate{\gateg{Rz}[x_{5}]} & \gate{\gateg{Rx}[x_{9}]} & \gate{\gateg{Rz}[x_{13}]} & \gate{\gateg{Ry}_{w_{1}}} &  & \targ{} & \ctrl{1} &  & \gate{\gateg{Ry}_{w_{5}}} & \targ{} & \ctrl{1} &  &  & \ground{} \\
\lstick{$q_2$} & \gate{\gateg{H}} & \gate{\gateg{Rx}[x_{2}]} & \gate{\gateg{Rz}[x_{6}]} & \gate{\gateg{Rx}[x_{10}]} & \gate{\gateg{Rz}[x_{14}]} & \gate{\gateg{Ry}_{w_{2}}} &  &  & \targ{} & \ctrl{1} & \gate{\gateg{Ry}_{w_{6}}} &  & \targ{} & \ctrl{1} &  & \ground{} \\
\lstick{$q_3$} & \gate{\gateg{H}} & \gate{\gateg{Rx}[x_{3}]} & \gate{\gateg{Rz}[x_{7}]} & \gate{\gateg{Rx}[x_{11}]} & \gate{\gateg{Rz}[x_{15}]} & \gate{\gateg{Ry}_{w_{3}}} & \ctrl{-3} &  &  & \targ{} & \gate{\gateg{Ry}_{w_{7}}} &  &  & \targ{} & \ctrl{-3} & \ground{} 
    \end{quantikz}}
    \caption{The PV model, used to classify a 16-feature input data.}
    \label{fig: PV_model}
\end{figure}

\paragraph{Training Phase.} As described in Sec. \ref{sec: vqc}, a VQC, like a deep NN, consists of a parameterized quantum circuit whose parameters are optimized during training to minimize a cost function, typically defined as the margin between the predicted and target classes~\cite{biamonte2017quantum}. 
For clarity, the left part of Fig. \ref{fig:training_results} presents a simplified version of the training algorithm used for the QCL model on the Iris dataset, while on the right side of the figure, we provide an illustrative view of the decision boundaries for the CCQC model on the same dataset. 
Specifically, we adopt a supervised training procedure for each VQC tested, exploiting the Pennylane framework~\cite{bergholm2018pennylane}. 
For each training example, the prediction is computed via a VQC, and the parameters $\theta$ are updated using gradient-based optimization. 
This process is repeated over multiple epochs to improve classification performance. 
While the pseudocode illustrates a generic Stochastic Gradient Descent (SGD) update, in our experiments, we employ different optimization strategies, such as PennyLane’s \textit{NesterovMomentumOptimizer}~\cite{bergholm2018pennylane}, using the quantum state encoding defined in~\cite{mottonen2004transformation}. The right table in Fig. \ref{fig:verification_results} shows the test set accuracy for each model tested on its corresponding dataset.

\begin{figure}[h!]
  \centering

  \begin{minipage}[t]{0.47\textwidth}
    \centering
    \vspace{-5cm}
    \scriptsize
    \begin{algorithm}[H]
    \caption*{SGD for QCL Binary Classification}
    \begin{algorithmic}[1]
    \State \textbf{Input:} QCL with parameter $\theta$, Training data $\{(X_i, Y_i)\}_{i=1}^N$, learning rate $\eta$, epochs $E$
    \State \textbf{Initialize:} Random parameters $\theta$
    \For{epoch $= 1$ to $E$}
        \For{each $(X_i, Y_i)$ in training set}
            \State $\hat{y}_i \gets \textsc{QCL}(X_i, \theta)$
            \State $loss \gets (\hat{y}_i - Y_i)^2$
            \State $g \gets \nabla_\theta \ell_i$ \Comment{Compute gradient}
            \State $\theta \gets \theta - \eta \cdot g$ \Comment{Update parameters}
        \EndFor
        \State $a \gets \textsc{Accuracy}(\text{training set}, QCL, \theta)$
    \EndFor
    \State \Return trained parameters $\theta$
    \end{algorithmic}
    \end{algorithm}
  \end{minipage}%
  \hfill
  \begin{minipage}[t]{0.47\textwidth}
    \centering
    \includegraphics[width=\linewidth]{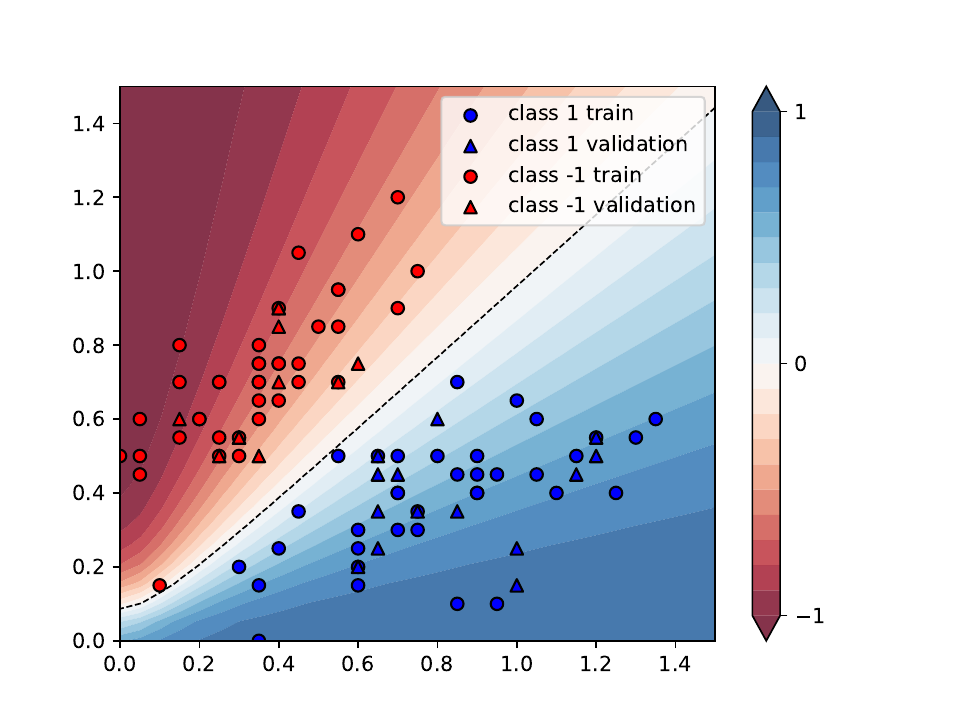}
  \end{minipage}

  \caption{Left: Training algorithm for a Variational Quantum Classifier using SGD. Right: Empirical decision boundary on the Iris dataset using CCQC classifier (100\% accuracy).}
  \label{fig:training_results}
\end{figure}

\paragraph{Verification Phase} 
After the training phase, we perform a formal verification step to assess the robustness of trained VQCs to input perturbations. 
To this end, we extend the Pennylane framework~\cite{bergholm2018pennylane} to support our abstract interpretation-based formulation, which operates over intervals of real and complex numbers. 
This extension enables reachability analysis through the circuit, allowing us to reason about how input uncertainty propagates through quantum operations.

The verification process is summarized on the left side in \autoref{fig:verification_results}. 
In detail, we show the algorithm used to compute the maximum perturbation $\epsilon$ each model can tolerate for a given input $x$ and target class $y$. 
The \textsc{Verify} function is the core of the pipeline. It takes as input a trained VQC (i.e., with fixed parameters $\theta$), a test input $x$, a class label $y$, and a perturbation radius $\epsilon$. 
It encodes the input as intervals, i.e., each feature of $x$ is extended to $[x_i - \epsilon, x_i + \epsilon]$, and propagates these intervals through the encoding and a symbolic representation of the variational layers of the circuit.

During interval propagation, both real and complex-valued amplitudes are tracked using interval arithmetic. The prediction is derived by identifying the output class whose associated probability interval has a lower bound that exceeds the upper bounds of all others, ensuring the prediction is unambiguous under the entire input region. If this condition is not met, for instance if the intervals of two or more output classes overlap, an iterative refinement process is employed: the input interval with the largest uncertainty (i.e., the largest width) is split, and the verification is recursively applied to the sub-regions until a conclusive result is reached (i.e., all the subregions are safe or a single counterexample is discovered) or a specified precision is met.\footnote{We also experimented with different heuristics for selecting the node to split during refinement—for example, choosing nodes at random—but the strategy described above consistently yielded the best results in terms of precision and was therefore used to generate the reported outcomes.}
Broadly speaking, the algorithm combines an exponential and binary search. It starts from a minimal $\epsilon_{\min}$ and iteratively doubles it until the robustness condition is violated or $\epsilon_{\max}$ is reached. Then, a binary search is used to converge to the largest $\epsilon$ that still guarantees robustness, with a desired precision $\tau$.

\begin{figure}[!h]

  \begin{minipage}[t]{0.47\textwidth}
    \centering
    \vspace{-2.2cm}
    \scriptsize
    \begin{algorithm}[H]
    \caption*{Compute max $\epsilon$ perturbation}
    \begin{algorithmic}[1]
    \State \textbf{Input:} VQC with trained parameter $\theta$, input $x$, $y$ class, $\epsilon_{\min}$, $\epsilon_{\max}$ , $\tau$ tolerance threshold
    \State $\epsilon \gets \epsilon_{\min}$
    \While{$\epsilon \leq \epsilon_{\max}$ \textbf{and} $\textsc{Verify}(VQC, x, \epsilon, y)== \text{robust}$}
        \State $\epsilon \gets 2 \cdot \epsilon$
    \EndWhile
    \State $high \gets \min(\epsilon, \epsilon_{\max})$
    \State $low \gets \epsilon / 2$
    \State $max\_epsilon \gets low$
    \While{$high - low > \tau$}
        \State $mid \gets (low + high) / 2$
        \If{$\textsc{Verify}(VQC, x, \epsilon, y)== \text{robust}$}
            \State $max\_epsilon \gets mid$
            \State $low \gets mid$
        \Else
            \State $high \gets mid$
        \EndIf
    \EndWhile
    \State \Return $max\_epsilon$
    \end{algorithmic}
    \end{algorithm}
  \end{minipage}%
  \hspace{1mm}
    \begin{minipage}[c]{0.45\textwidth}
    \centering
    \vspace{2cm}
     \scriptsize
    \begin{tabular}{lcc}
    \multicolumn{2}{l}{\textbf{Robustness Verification results
    }}& \\
    \toprule
    \textbf{Model-Dataset} & \textbf{Accuracy (\%)} & \textbf{Mean of max $\epsilon$ pert.} \\
    \midrule
    QCL-Iris & 76\% & 0.07496$\pm$0.05 \\
    CCQC-Iris & 100\%  & 0.1244$\pm$0.04 \\
    PV-MNIST[0,1] & 95\%  & 0.0048$\pm$0.002 \\
    PV-MNIST[2,6] & 78\% & 0.0022$\pm$0.001 \\
    \bottomrule
    \end{tabular}

  \end{minipage}
    \caption{Left: the verification pipeline. Right: robustness verification results. The mean of the maximum $\epsilon$ is computed over 10 randomly selected inputs from the test set and the model with the highest accuracy during training.}
    \label{fig:verification_results}
\end{figure}

The right side of \autoref{fig:verification_results} reports the verification results on different quantum classifiers and datasets. 
The results reveal several interesting insights:  
\texttt{QCL-Iris} achieves an accuracy of 76\% with a mean maximum $\epsilon$ of 0.07496. This relatively high perturbation tolerance reflects the low dimensionality and structured nature of the Iris dataset. 
On the other hand, \texttt{CCQC-Iris}, which adopts a different encoding or circuit structure, achieves perfect accuracy (100\%) and significantly higher robustness (mean $\epsilon = 0.1244$). 
This indicates better alignment between circuit expressivity and the problem structure, possibly due to improved use of parameterized gates.  
For the more complex MNIST dataset, particularly in binary classification tasks over digits [0,1] and [2,6], we observe a clear drop in robustness: $0.0048$ and $0.0022$, respectively. 
While \texttt{PV-MNIST[0,1]} still maintains high accuracy (95\%), the low maximum $\epsilon$ highlights how small input perturbations can significantly alter the classification outcome in high-dimensional settings. 
This is even more evident for \texttt{PV-MNIST[2,6]}, where both accuracy and robustness are lower, suggesting that distinguishing these digits is harder for the model.

Overall, the results validate the effectiveness of our reachability-based verification approach in quantifying model robustness in quantum machine learning. The method not only provides formal guarantees but also reveals the limitations of current quantum models in handling high-dimensional, less separable data. These insights could drive the design of more robust circuits and encoding schemes in future work.


\section{Related Work}

\paragraph{NN verification}
Classical NN formal verification tools \cite{liu2021algorithms,wei2024modelverification}, such as those developed in \cite{Ai2} and \cite{singh2019abstract}, adopt abstract interpretation techniques to conservatively approximate the network’s behavior by propagating abstract domains (like intervals or zonotopes) through its layers. As demonstrated in this work, this approach allows efficient and sound verification of properties—such as robustness—by computing conservative bounds on the output set. A different category of tools, including CROWN, $\alpha$-CROWN, and $\beta$-CROWN \cite{crown,acrown,bcrown}, employs linear relaxation methods to approximate non-linear activations with tight linear bounds, offering a balance between scalability and precision. These techniques are often integrated with Branch-and-Bound frameworks \cite{bab}, such as MN-BaB \cite{MN-BaB}, which partition the verification task into smaller subproblems—either by splitting the input perturbation space \cite{reluval} or by dividing ReLU activations into linear segments \cite{bab,babSplitRelu}. This combination enhances completeness and precision, though at the cost of increased computational complexity. More recently, emerging approaches have extended the scope of verification to identify all input regions that satisfy a given output specification \cite{dathathri2019inverse,CountingProVe,eProve,kotha2023provably,zhang2024provable}.

\paragraph{VQC Robustness Evaluation}
Robustness evaluation of variational quantum circuits remains largely unexplored. However, recent works have started to address this challenge. For example, QuanTest~\cite{quantest} proposes an adversarial testing framework that generates inputs maximizing a quantum entanglement adequacy criterion while exposing erroneous behaviors, using a gradient-based joint optimization approach. Other studies~\cite{lu2020quantum, wendlinger2024comparative} have investigated the vulnerability of VQC models to adversarial attacks, demonstrating that quantum classifiers, like their classical counterparts, can be misled by imperceptible perturbations on both classical and quantum inputs across a variety of tasks.
Nonetheless, none of these works provides a framework for establishing provable robustness guarantees for a given VQC model. To the best of our knowledge, we are the first to introduce this important capability to the quantum machine learning literature.


\paragraph{Abstract Interpretation in Quantum Computing}
Yu et al.~\cite{yu2021quantum} use abstract interpretation to reduce quantum state dimensionality by partitioning an $n$-qubit state into $m$ lower-dimensional abstractions over $k < n$ qubits, enabling assertion verification in static circuits.
Feng et al.~\cite{feng2023abstract} connect quantum Hoare logic and quantum incorrectness logic to abstract interpretation, showing how each framework can derive the others.
Abstract interpretation has also been used to analyze entanglement~\cite{perdrix_quantum_2008, honda2015analysis, Assolini24domainent, Assolini25staticent}, by defining sound abstract domains and semantics for quantum programs.
Bichsel et al.~\cite{bichsel_abstraqt_2023} propose an abstract stabilizer simulator that efficiently overapproximates the effect of Clifford and non-Clifford gates, as well as measurements.  
The abstract domain uses complex intervals: quantum states are represented as sums of Pauli operators with complex coefficients, which are abstracted using interval arithmetic and abstract Pauli operators.

\section{Conclusion}\label{sec: conclusion}

In this work, we have presented a novel framework for the formal verification of variational quantum circuits, which is based on the abstract interpretation theory.
Our approach is grounded in interval abstractions, which, despite their non-relational nature and the resulting over-approximations, still enable effective verification, as demonstrated by our empirical evaluation. To enhance precision, we have introduced several strategies inspired by techniques developed for classical deep neural network verification. 

\paragraph{Future Directions.}
We have observed that classification based on the abstract execution of a VQC may suffer from overestimation errors, preventing the system from producing a concrete output. However, recent work on classical NNs \cite{GMP24,marzari2025advancing} has shown that it is possible to relax the requirement of obtaining a precise answer by instead providing an abstract robustness guarantee. This involves applying abstraction not only to the input but also to the output, allowing for different degrees of precision. The verification task then becomes a question of whether the desired property holds within a specified level of abstraction, an idea closely related to Adequacy in static analysis \cite{Mastroeni24}. We argue that this tunable approach to verification is promising and deserves further investigation.

Another promising direction for future work is to explore alternative techniques—beyond those discussed in \autoref{sec: recovery}—for recovering precision. In particular, methods such as circuit rewriting and simplification, including the use of ZX-calculus \cite{coecke2009zx,vandewetering2020zx}, offer a compelling avenue. These techniques could be applied as a pre-processing step to simplify quantum circuits, for instance, by reducing the number of phase gates in the encoding, thereby potentially enhancing the overall precision of the analysis. The key challenge lies in understanding how to effectively integrate this powerful formalism into our verification framework and adapt it to our specific context.

Alternative representations of quantum states also offer valuable opportunities. One possibility could be the path-sum formalisms~\cite{amy_formal_2019,amy_linear_2025,richard1965feynman,Amy2018QPL,qbricks}, which provide symbolic descriptions of quantum circuits that can be more concise than vector-based representations, and have been shown effective for verifying large Clifford+T circuits. 
Similarly, bit-wise simulation techniques~\cite{chen2023quantumconstant,darosa_ketquantum_2021} allow memory-efficient encoding of quantum states, especially when entanglement is limited.
We also plan to explore the use of the Ellipsoid domain~\cite{Feret04filters,cousot06astree} to capture second-order relations between the amplitudes of a quantum state.

Finally, we plan to extend our framework to support the analysis of hybrid quantum-classical architectures~\cite{havlicek_supervised_2019,mattern2021variational,matic2022quantum}, in which VQCs appear as components within larger machine learning pipelines. This includes exploring how input abstractions can propagate through both quantum and classical layers, as well as how robustness guarantees can be preserved end-to-end.

\bibliography{bibl}

\clearpage

\end{document}